\newtheorem{theorem}{Theorem}
\def\BibTeX{{\rm B\kern-.05em{\sc i\kern-.025em b}\kern-.08em
    T\kern-.1667em\lower.7ex\hbox{E}\kern-.125emX}}
\begin{document}

\title{SIMC 2.0: Improved Secure ML Inference \\ Against Malicious Clients }

\author{Guowen~Xu, Xingshuo~Han, Tianwei~Zhang, Shengmin~Xu, Jianting~Ning, Xinyi~Huang,  Hongwei~Li and Robert H.Deng~\IEEEmembership{Fellow,~IEEE}

\IEEEcompsocitemizethanks { \IEEEcompsocthanksitem Guowen~Xu, Xingshuo~Han and Tianwei~Zhang are with the School of Computer Science and Engineering, Nanyang Technological University. (e-mail: guowen.xu@ntu.edu.sg; xingshuo001@e.ntu.edu.sg; tianwei.zhang@ntu.edu.sg)
\IEEEcompsocthanksitem Shengmin~Xu  and Jianting~Ning are with the College of Computer and Cyber Security, Fujian
Normal University, Fuzhou, China (e-mail: smxu1989@gmail.com; jtning88@gmail.com)
\IEEEcompsocthanksitem Xinyi~Huang is with the Artificial Intelligence Thrust, Information Hub, Hong Kong University of Science and Technology (Guangzhou), Guangzhou, China, 511458 (e-mail: xinyi@ust.hk)
\IEEEcompsocthanksitem Hongwei~Li  is with the School of Computer Science and Engineering,  University of
Electronic Science and Technology of China, Chengdu 611731, China.(e-mail: hongweili@uestc.edu.cn)
\IEEEcompsocthanksitem  Robert H~Deng is with the School of Information Systems
Singapore Management University. (e-mail: robertdeng@smu.edu.sg) }}

 \IEEEcompsoctitleabstractindextext{
\begin{abstract}
\renewcommand{\raggedright}{\leftskip=0pt \rightskip=0pt plus 0cm}
 \raggedright
 In this paper, we study the problem of secure ML inference against a malicious client and a semi-trusted server such that the client only learns the inference output while the server learns nothing.  This problem is first formulated by Lehmkuhl \textit{et al.} with a solution (MUSE, Usenix Security'21), whose performance is then substantially improved by Chandran \textit{et al.}'s work (SIMC, USENIX Security'22). However, there still exists a nontrivial gap in these efforts towards practicality, giving the challenges of overhead reduction and secure inference acceleration in an all-round way.

  We propose SIMC 2.0, which complies with the underlying structure of SIMC, but significantly optimizes both the linear and non-linear layers of the model. Specifically, (1) we design a new coding method for homomorphic parallel computation between matrices and vectors. It is custom-built through the insight into the complementarity between cryptographic primitives in SIMC. As a result, it can minimize the number of rotation operations incurred  in the calculation process, which is very computationally expensive compared to other homomorphic operations (\textit{e.g.}, addition, multiplication). (2) We reduce the size of the garbled circuit (GC) (used to calculate nonlinear activation functions, \textit{e.g.}, ReLU) in SIMC by about two thirds. Then, we design an alternative lightweight protocol to perform tasks that are originally allocated to the expensive GCs. Compared with SIMC, our experiments show that SIMC 2.0 achieves a significant speedup by up to $17.4\times $ for linear layer computation, and at least $1.3\times$ reduction of both the computation and communication overheads in the implementation of non-linear layers under different data dimensions. Meanwhile, SIMC 2.0 demonstrates an encouraging runtime boost by $2.3\sim 4.3\times$ over SIMC on different state-of-the-art ML models.

\end{abstract}
\begin{IEEEkeywords}
 Privacy Protection, Secure Inference, Homomorphic Encryption, Garbled  Circuit.
\end{IEEEkeywords}}
\maketitle

\IEEEdisplaynotcompsoctitleabstractindextext

\IEEEpeerreviewmaketitle

The widespread application of machine learning (ML), especially the popularization of prediction services over pre-trained models, has increased the demand for \textit{secure inference}. In this process, a server $S_0$ holds an ML model $M$ whose weight $W$ is considered private and sensitive, while a client $S_1$ holds a private input $t$. The goal of secure inference  is to make $S_1$ only get the model's output while  $S_0$ knows nothing. Such a privacy-preserving paradigm has a variety of potential prospects, especially for privacy-critical applications, \textit{e.g.}, medical diagnosis, financial data analysis.
In theory, secure inference can be implemented by the secure two-party computing (2-PC) protocol in cryptographic primitives \cite{katz2018optimizing,wang2017faster}. It enables two parties to run an interactive protocol to securely calculate any function without revealing each party's private information. To instantiate it, a lot of impressive works \cite{rathee2020cryptflow2,patra2021aby2, mohassel2017secureml} have been proposed, and they are built on various cryptographic technologies such as homomorphic encryption (HE), secret sharing, and Yao's garbled circuits (GC). Due to the inherent complexity of cryptographic primitives, existing efforts exclusively focus on improving the efficiency and lie on a relatively weak threat assumption, \textit{i.e.}, the \textit{semi-honest adversary model} \cite{lavigne2018topology,hussain2021coinn,lou2021hemet}. In this model,  $S_0$ and $S_1$ faithfully follow the specifications of secure inference and can only capture private information through passive observations.

\subsection{Related Works}
\label{Related Works}
Recent works \cite{lehmkuhl2021muse,chandran2021simc,patrablaze} show that this \textit{semi-honest adversary model} can be insecure in real-world applications. Particularly, Lehmkuhl \textit{et al.} \cite{lehmkuhl2021muse} argue that it may be reasonable the ML model is held by a semi-trusted server, since the server is usually fixed and maintained by a reputable entity. However, it is impractical to assume that thousands of clients (which can be arbitrary entities) will faithfully comply with the protocol specification. To validate this hazard, they provide a systematic attack strategy, which enables a malicious client to violate the rules of the protocol and completely reconstruct the  model's parameters during the inference interactions. Moreover, the number of inference queries required for such an attack is far smaller than the most advanced black-box model extraction attack.

While the above problem  can be solved by resorting to the traditional 2-PC protocol against malicious adversaries \cite{damgaard2019new,escudero2020improved,hazay2019constant}, it is inefficient in practice. To bridge this gap,
Lehmkuhl \textit{et al.} \cite{lehmkuhl2021muse} pioneer the definition of \textit{client-malicious adversary model}, where the server is still considered semi-honest but the client can perform arbitrary malicious behaviors. To secure the inference process under such adversary model, they propose MUSE, a novel 2-PC protocol with considerably lower overhead compared to previous works. However, the computation and communication costs of MUSE are still not satisfactory, which are at least $15\times$ larger than the similar work DELPHI \cite{mishra2020delphi} under the semi-trusted  threat model.

To alleviate the efficiency issue, Chandran \textit{et al.} design SIMC \cite{chandran2021simc}, the state-of-the-art 2-PC protocol for secure inference under the client-malicious adversary model. Consistent with the underlying tone of MUSE, SIMC uses HE to execute the linear layers (including matrix-vector and convolution multiplication) of the ML model, and uses GC to implement the non-linear layers (mainly the ReLU function). Since almost $99\%$ of the communication overhead in MUSE comes from non-linear layers, the core of SIMC is a novel protocol based on a customized GC to improve the performance of executing non-linear activation functions. As a result, compared with MUSE, SIMC gains $28\sim33\times$  reduction in the communication overhead for the implementation of popular activation functions (\textit{e.g.}, ReLU, ReLU6), and at least $4\times$ overall performance improvement.

However, we argue that SIMC is still far from practicality. This stems from two reasons. First, SIMC keeps
the design of linear layer calculations in MUSE, which uses computationally heavy HE to perform dense matrix-vector and convolution multiplication. Actually, linear operations dominate the computation of modern neural networks: nearly $95\%$ of ML model execution is for intensive convolutional layers and fully connected (FC) layers \cite{zhang2021gala}. This raises the requirement for efficient execution of matrix and convolution multiplication under ciphertext. Although the homomorphism of HE makes it suitable to achieve privacy-preserving linear operations, it is still computationally expensive for large-scale operations, especially when there is no appropriate parallel computing optimization (\textit{i.e.}, performing
homomorphic linear computation in a Single Instruction Multiple Data (SIMD) manner) \cite{sav2020poseidon}.

There is still much optimization space for the non-linear layers (such as ReLU) of SIMC, from the perspectives of both computation and communication. To be precise, given secret shares of a value $t$, SIMC designs a GC-based secure 2-PC protocol, which enables $S_0$ and $S_1$ to calculate shares $s=\alpha t$ and  authenticated shares of $v=f(t)$, \textit{i.e.}, shares of $f(t)$ and $\alpha f(t)$, where $f$ denotes the activation function. While SIMC has made great efforts to simplify the design of the GC and prevent malicious behaviors from the client, it  requires the communication overhead of at least $2c\lambda+4\kappa\lambda+6\kappa^2$, where $\lambda$ denotes the security parameter, $\kappa$ is a field space, and $c$  is the number of AND gates required to reconstruct shares of $t$ and compute authenticated $f(t)$. Besides, the main body of the activation function is still sealed in a relatively complex GC. As a consequence, SIMC inevitably results in substantial computation and communication costs in non-linear layers,  since a modern model usually contains thousands of non-linear activation functions for calculation. We will perform experiments to demonstrate such overheads in Section~\ref{sec:PERFORMANCE EVALUATION}.

\subsection{Technical Challenges}
\label{Technical Challenges}
This paper is dedicated to design a new 2-PC protocol to break through the performance bottleneck in SIMC, thereby promoting the practicality of secure inference against malicious clients. We follow the underlying structure of SIMC, \textit{i.e.}, using HE to execute the linear layers and GC to implement the non-linear layers, as such a hybrid method  has shown advanced performance compared to other strategies \cite{lehmkuhl2021muse,mishra2020delphi}. Therefore, our work naturally lie in solving two problems:  (1) how to design an optimized mechanism to accelerate linear operations in HE, and (2) how to find a more simplified GC for the execution of non-linear activation functions.

There have been many impressive works \cite{huang2022cheetah,jiang2018secure,sav2020poseidon,zhang2021gala} exploring methods towards the above goals. To speed up HE's computation performance, existing efforts mainly focus on designing new coding methods  to achieve parallelized component-wise
homomorphic computation, \textit{i.e.}, performing homomorphic linear computation in an SIMD manner. For example, Jiang \textit{et al.} \cite{jiang2018secure} present a novel matrix encoding method for basic matrix operations, \textit{e.g.},  multiplication and transposition.  Compared with previous approaches, this method reduces the computation complexity of multiplying two $d\times d$ matrices from $O(d^3)$ to $O(d)$. However, it is exclusively applicable to matrix operations between square or rectangular matrices, but unfriendly to arbitrary matrix-vector multiplication and convolution operations in ML. In addition, it considers parallel homomorphic calculations in a full ciphertext environment, contrary to our scenario where only the input of the client is ciphertext while the model parameters are clear.  Sav \textit{et al.}\cite{sav2020poseidon} propose the Alternating Packing (AP) approach, which packs all plaintext elements of a vector into a ciphertext and then parallelizes the homomorphic matrix-vector multiplication. However,  similar  to \cite{jiang2018secure},   AP focuses on the SIMD operation between two packed ciphertexts.

Several works design solutions for the scenario of secure inference \cite{juvekar2018gazelle,chen2020maliciously,zhang2021gala}, mostly built on the \textit{semi-honest adversary model}. Among them, one of the most  remarkable works  is GAZELLE \cite{juvekar2018gazelle}.  It is customized for the HE-GC based secure inference, and has been integrated into some advanced solutions such as DELPHI \cite{mishra2020delphi}  and EzPC \cite{chandran2019ezpc}. In fact,  the core idea of GAZELLE is also used in the design of MUSE and SIMC, which inherit DELPHI's optimization strategy for HE-based linear operations. GAZELLE builds a new homomorphic linear algebra kernel, which provides fast algorithms for mapping neural network layers to optimized homomorphic matrix-vector multiplication and convolution routines. However, the computation complexity of GAZELLE is still non-negligible. For example,  to calculate the product of a $n_o\times n_i$ plaintext matrix and a  $n_i$-dimensional ciphertext vector, GAZELLE requires at least $(\log_2(\frac{n}{n_o})+\frac{n_in_o}{n}-1)$ rotation operations, where $n$ is the number of slots in the ciphertext. This is very computationally expensive compared to other homomorphic operations such as addition and multiplication. To alleviate this problem,  Zhang \textit{et al.} \cite{zhang2021gala} present GALA, an optimized solution over GAZELLE, to reduce the complexity of the rotation operations required by matrix vector calculations from $(\log_2(\frac{n}{n_o})+\frac{n_in_o}{n}-1)$ to $(\frac{n_in_o}{n}-1)$, thus substantially improving the efficiency. However, GALA is specially customized for the secure inference under the \textit{semi-honest adversary model}. In addition, we will demonstrate that the computation complexity of GALA is not optimal and can be further optimized.

To construct an efficient GC-based protocol, the following problems generally need to be solved: (1) avoiding using the GC to perform the multiplication between elements as much as possible, which usually requires at least $O(\kappa^{2}\lambda)$ communication complexity \cite{demmler2015aby}; (2) ensuring the correctness of the client's input, so that the output of the GC is trusted and can be correctly propagated to the subsequent layers. SIMC presents  a novel protocol that can meet these two requirements.  Specifically, instead of using the GC directly to calculate $s=\alpha t$ and authenticated shares of $v=f(t)$, SIMC only resorts to the GC to obtain the garbled labels of the bits corresponding to each function, namely $s[i], v[i]$ and $\alpha v[i]$ for $1\leq i\leq \kappa$. Such  types of calculation is natural for the GC because it operates on a Boolean circuit. Moreover,  it avoids performing a large number of multiplication operations in the GC to achieve the binary-to-decimal conversion. Then, SIMC designs a lightweight input consistency verification method, which is used to force the client to feed the correct sharing of GC's input.  Compared with MUSE, SIMC reduces the communication overhead of each ReLU function from $2d\lambda+190\kappa\lambda+232\kappa^2$ to $2c\lambda+4\kappa\lambda+6\kappa^2$, and accelerates the calculation by several times.

We point out that it is possible to further simplify the  protocol in SIMC. It stems from the insight that the ReLU function can be parsed as $f(t)=t\cdot sign(t)$, where the sign function $sign(t)$ equals 1 if $t\geq 0$ and 0 otherwise. Therefore,  it is desirable if only the non-linear part of $f(t)$ (\textit{i.e.}, $sign(t)$) is encapsulated into the GC,  \textit{i.e.}, for $1\leq i\leq \kappa$, the output of the GC is $sign[i]$ and $s[i]$,  instead of $s[i]$  and $v[i]$. Then,  if we can find a lightweight alternative sub-protocol to privately compute the authenticated shares $v[i]=sign[i]\times s[i]$, it not only simplifies the size of the GC, but also further reduces the number of expensive multiplications between elements in the original GC. However,  it is challenging to build such a protocol: the replacement sub-protocol should be lightweight compared to the original GC. In addition, it should be compatible with the input consistency verification method in SIMC, so as to realize the verifiability of the client¡¯s input.

\subsection{Our Contributions}
\label{Our Contributions}
In this work, we present SIMC 2.0, a new secure inference model resilient to malicious clients and achieves up to $5\times$ performance improvement over the previous state-of-the-art SIMC.  SIMC 2.0 complies with the underlying structure of SIMC, but designs highly optimized methods to substantially reduce the overhead of linear and non-linear layers. In short, the contributions of SIMC 2.0 are summarized as follows:
\begin{itemize}[leftmargin=*,align=left]
\item  We design a new coding method for homomorphic linear computation in an SIMD manner. It is custom-built through the insight into the complementarity between cryptographic primitives in an HE-GC based framework, where the property of secret sharing is used to convert a large number of private rotation operations to be executed in the plaintext environment. Moreover, we design a block-combined diagonal encoding method  to further reduce the number of rotation operations.  As a result, compared with SIMC, we reduce the complexity of rotation operations required by the matrix-vector computations from $(\log_2(\frac{n}{n_o})+\frac{n_in_o}{n}-1)$ to $(l-1)$, where $l$ is a hyperparameter set by the server. We also present a new method for the homomorphic convolution operation with the SIMD support.
\item We reduce the size of the GC in SIMC by about two thirds. As discussed above, instead of using the GC to calculate the entire ReLU function, we only encapsulate the non-linear part of ReLU into the GC. Then, we construct a lightweight alternative protocol that takes the output of the  simplified GC as input to calculate the sharing of the desired result. We exploit the authenticated shares \cite{wang2017authenticated} as the basis to build the lightweight alternative protocol. As a result,  compared with SIMC, SIMC 2.0 reduces the communication overhead of calculating each ReLU  from $2c\lambda+4\kappa\lambda+6\kappa^2$ to $2e\lambda+4\kappa\lambda+6\kappa^2+2\kappa$,    where $e<c$ denotes the number of AND gates required in the GC.
 \item We demonstrate that SIMC 2.0 has the same security properties as SIMC, \textit{i.e.}, it is secure against the malicious client model. We prove this with theoretical analysis following the similar logic of SIMC.  We use several datasets (\textit{e.g.}, MNIST, CIFAR-10) to conduct extensive experiments on various ML models.
 Our experiments show that SIMC 2.0  achieves a significant speedup by up to $17.4\times $ for linear layer computation and at least  $1.3\times$ communication reduction in  non-linear layer computation under different data dimensions. Meanwhile, SIMC 2.0 demonstrates an encouraging runtime boost by $2.3\sim 4.3\times$ over SIMC on different state-of-the-art ML models (\textit{e.g.}, AlexNet, VGG, ResNet).

\end{itemize}

\noindent\textbf{Roadmap}: The remainder of this paper is organized as follows. In  Section \ref{sec:PROBLEM STATEMENT}, we review some basic concepts and introduce the scenarios and threat models involved in this paper. In  Sections \ref{subsub:Linear layer optimization} -- \ref{Secure Inference}, we give the details of our  SIMC 2.0.  Performance evaluation  is presented in Section \ref{sec:PERFORMANCE EVALUATION}, and Section \ref{sec:conclusion} concludes the paper.

\section{Preliminaries}
\label{sec:PROBLEM STATEMENT}

\subsection{Threat Model}
We consider a two-party ML inference scenario consisting of a server $S_0$ and a client $S_1$. $S_0$ holds an ML model $M$ with the private and sensitive weight $W$. It is considered semi-honest: it obeys the deployment procedure of ML inference, but may be curious to infer the client's data by observing the data flow in the running process. $S_1$ holds the private input $t$. It is malicious and can arbitrarily violate the protocol specification. The model architecture $\mathrm{NN}$ of $M$ is assumed to be known by both the server and the client. Our goal is to design a secure inference framework, which enables $S_1$ to learn the inference result of $t$ without knowing any details about the model weight $W$, and $S_0$ knows nothing about the input $t$. A formal definition of the threat model is provided in Appendix~\ref{A:threat model}.

\subsection{Notations}
We describe some notations used in this paper. Specifically,  $\lambda$ and $\sigma$ denote the computational and statistical security parameters.  For any $n>0$, $[n]$ denotes the set $\{1, 2, \cdots n\}$. In our SIMC 2.0, all arithmetic operations such as addition and multiplication are performed in the field $\mathbb{F}_p$, where $p$ is a prime and $\kappa=\lceil \log p \rceil$. We assume that any element $x\in \mathbb{F}_p$ can be naturally mapped to the set $\{1, 2, \cdots \kappa\}$, where we use $x[i]$  to represent the $i$-th bit of $x$, \textit{i.e.}, $x=\sum_{i\in[\kappa]}x[i]\cdot2^{i-1}$. For a vector $\mathbf{x}$ and an element $\alpha \in \mathbb{F}_p$, $\alpha+ \mathbf{x}$  and $\alpha \mathbf{x}$ denote the addition and multiplication of each component in $\mathbf{x}$ with $\alpha$, respectively.  Given a function $f: \mathbb{F}_p \rightarrow \mathbb{F}_p$, $f(\mathbf{x})$ represents the evaluation on $f$  for each component in $\mathbf{x}$. Given two elements $x, y\in \mathbb{F}_p$, $x||y$ denotes the concatenation of $x$ and $y$.

For simplicity, we assume the targeted ML architecture $\mathrm{NN}$ consists of alternating linear and nonlinear layers. Let the specifications of the linear layers and non-linear layers be $\mathbf{N}_1$, $\mathbf{N}_2$, $\cdots$, $\mathbf{N}_m$ and $f_1$, $f_2$, $\cdots$, $f_{m-1}$, respectively. Given an input vector $\mathbf{t}_0$, the model sequentially computes $\mathbf{u}_i=\mathbf{N}_i\cdot \mathbf{t}_{i-1}$ and $\mathbf{t}_i= f_i(\mathbf{u}_i)$, where $i\in[m-1]$. As a result, we have $\mathbf{u}_m=\mathbf{N}_m\cdot\mathbf{t}_{m-1}=\mathrm{NN}(\mathbf{t}_0)$.

\subsection{Fully Homomorphic Encryption (FHE)}
FHE is a public key encryption system that supports the evaluation of any function parsed as a polynomial in ciphertext. Informally, assuming that the message space is $\mathbb{F}_p$, FHE contains the following four algorithms:
\begin{itemize}[leftmargin=*,align=left]

\item $\mathtt{KeyGen}(1^\lambda)\rightarrow (pk, sk)$. Given the security parameter $\lambda$, $\mathtt{KeyGen}$ is a randomized algorithm that generates a public key $pk$ and the corresponding secret key $sk$.
\item $\mathtt{Enc}(pk, t)\rightarrow c$. Given the public key $pk$ and a message $t\in \mathbb{F}_p$ as input, $\mathtt{Enc}$ outputs a ciphertext $c$.
\item $\mathtt{Dec}(sk, c)\rightarrow t$. Given the secret key $sk$ and a ciphertext $c$ as input, $\mathtt{Dec}$ decrypts $c$ and obtains the plaintext $t$.
\item $\mathtt{Eval}(pk, c_1, c_2, F)\rightarrow t$. Given the public key $pk$ and two ciphertexts $c_1, c_2$ encrypting $t_1, t_2$, respectively, and a function $F$ parsed as a polynomial,  $\mathtt{Eval}$ outputs a ciphertext $c'$ encrypting $F(t_1, t_2)$.
\end{itemize}
We require FHE to satisfy correctness, semantic security and additive homomorphism along with function privacy\footnote{Informally, it means given a ciphertext $c$ encrypting a share of ciphertext $F(c_1, c_2)$ obtained by homomorphically evaluating $F$, $c$ is indistinguishable from another ciphertext $c'$ which is a share of ciphertext $F'(c_1 c_2)$ for any other $F'$, even given $sk$.}. In addition, we use the ciphertext packing technology (CPT) \cite{halevi2014algorithms} to accelerate the parallelism of homomorphic computation. In brief, CPT is capable of packing up to $n$ plaintexts into one ciphertext containing $n$ plaintext slots, thereby improving the parallelism of computation. This makes homomorphic addition and multiplication for the ciphertext equivalent to performing the same operation on every plaintext slot at once. For example, given two ciphertexts $c_1$ and $c_2$ encrypting the plaintext vectors $\mathbf{t}=(t_0, t_1, \cdots, t_n)$ and $\mathbf{t}'=(t'_0, t'_1, \cdots, t'_n)$ respectively,  $\mathtt{Eval}(pk, c_1,c_2, F=(a+b))$ outputs a ciphertext $c$ encrypting the plaintext vector $\mathbf{t''}=(t_0+t'_0, t_1+t'_1, \cdots, t_n+t'_n)$.

 CPT also provides a \textit{rotation} function $\mathbf{Rot}$ to facilitate rotate operations between plaintexts in different plaintext slots. To be precise, given a ciphertext $c$ encrypting a plaintext vector $\mathbf{t}=(t_0, t_1, \cdots, t_n)$, $\mathbf{Rot} (pk, c, j)$ transforms $c$ into an encryption of $\mathbf{t'}=(t_j, t_{j+1}, \cdots, t_0, \cdots, t_{j-1})$, which enables the packed vector in the appropriate position to realize the correct element-wise addition and multiplication. Since the computation cost of the rotation operation in FHE is much more expensive than other operations such as homomorphic addition and multiplication, in our SIMC 2.0, we aim to design homomorphic parallel computation methods customized for matrix-vector multiplication and convolution, thereby minimizing the number of rotation operations incurred during execution.


\subsection{Secret Sharing}
\label{Secret Sharing}
We describe some terms used for secret sharing as follows.
\begin{itemize}[leftmargin=*,align=left]

\item \textbf{Additive secret sharing}. Given $x\in \mathbb{F}_p$, we say a 2-of-2 additive
secret sharing of $x$ is a pair $	(\left \langle x \right \rangle_0, \left \langle x \right \rangle_1)=(x-r, r)\in \mathbb{F}_{p}^{2}$ satisfying $ \langle x\rangle_0+\langle x\rangle_1=x$, where $r$ is randomly selected from $\mathbb{F}_p$. Additive secret sharing is perfectly hidden, \textit{i.e.}, given $ \langle x\rangle_0$ or $\langle x\rangle_1$, the value of $x$ is perfectly hidden.
\item \textbf{Authenticated shares}.  Given a randomly selected $\alpha\in \mathbb{F}_p$ (known as the MAC key), the authenticated share of $x \in \mathbb{F}_p$ is denoted as $\{\langle x\rangle_b, \langle \alpha x\rangle_b\}_{b\in \{0, 1\}}$, where each party $S_b$ holds $(\langle x\rangle_b, \langle \alpha x\rangle_b)$. In the fully malicious protocol, $\alpha$ should be shared secretly with all parties. In our client-malicious model, consistent with previous works \cite{lehmkuhl2021muse,chandran2021simc}, $\alpha$ is picked uniformly by the server $S_0$ and secretly shared with the client $S_1$. Authenticated shares provide $\lfloor \log p \rfloor$ bits of statistical security. Specifically, assuming the malicious $S_1$ has tampered with the share of $x$ as $x+\beta$ by changing the shares $(\langle x\rangle_1, \langle \alpha x\rangle_1)$ to $(\langle x \rangle_1+\beta, \langle \alpha x\rangle_1+\beta')$, the probability of parties being authenticated to hold the share of $x+\beta$ (\textit{i.e.}, $\alpha x+\beta'=\alpha(x+\beta)$) is at most $2^{-\lfloor \log p \rfloor}$.

\item \textbf{Authenticated Beaver's multiplicative triples}. Given a randomly selected triple $(A, B, C)\in \mathbb{F}_{p}^{3}$ satisfying $AB=C$, an authenticated Beaver's multiplicative triple denotes that $S_b$ holds the following shares $$\{(\langle A\rangle_b, \langle \alpha A\rangle_b), (\langle B\rangle_b, \langle \alpha B\rangle_b), (\langle C\rangle_b, \langle \alpha C\rangle_b)\}$$ for $b\in \{0, 1\}$. The process of generating triples is offline and is used to facilitate multiplication between authenticated shares. We give details of generating such triples in Figure~\ref{Algorithm of generating authenticated Beaver's multiplicative triple} in Appendix~\ref{A:Authenticated Beaver's multiplicative triples}.
\end{itemize}

\subsection{Oblivious Transfer}
\label{Oblivious Transfer}
The 1-out-of-2 Oblivious Transfer (OT) \cite{keller2015actively} is denoted as OT$_{n}$, where the inputs of the sender ($S_0$) are two strings $s_0, s_1 \in\{0, 1 \}^{n}$, and the input of the receiver ($S_1$) is a choice bit $b\in \{0, 1\}$. At the end of the OT-execution, $S_1$ obtains $s_b$ while $S_1$ receives  nothing.  Succinctly, the security properties of OT$_{n}$ require that 1) the receiver learns nothing but $s_b$ and 2) the sender knows nothing about the choice $b$.  In this paper, we require that the instance of OT$_{n}$ is secure against a semi-honest sender and a malicious receiver.  We use OT$_{n}^{\kappa}$ to represent $\kappa$ instances of OT$_{n}$. We exploit \cite{keller2015actively} to implement OT$_{n}^{\kappa}$  with the communication complexity of $\kappa{\lambda+2n}$ bits.

\subsection{Garbled Circuits}
\label{Garbled Circuits}
A garbled scheme \cite{ghodsi2021circa} for Boolean circuits consists of two algorithms, $\mathtt{Garble}$ and $\mathtt{GCEval}$, defined as follows.
\begin{itemize}[leftmargin=*,align=left]

\item $\mathtt{Garble}(1^\lambda, C)\rightarrow (\mathtt{GC}, \{ \{\mathtt{lab}_{i,j}^{in}\}_{i\in[n]},\{\mathtt{lab}_{j}^{out}\}\}_{j\in\{0,1\}})$. Given the security parameter $\lambda$ and a Boolean circuit $C: \{0,1 \}^{n}\rightarrow \{0, 1\}$, the $\mathtt{Garble}$ function outputs a garbled  circuit $\mathtt{GC}$, an input set  $\{\mathtt{lab}_{i,j}^{in}\}_{i\in[n], j\in\{0,1\}}$ of labels  and an output set $\{\mathtt{lab}_{j}^{out}\}_{j\in\{0,1\}}$, where each label is of $\lambda$ bits. In brief, $\{\mathtt{lab}_{i,x[i]}^{in}\}_{i\in[n]}$ represents the garbled input for any $x\in \{0,1\}^{n}$ and the label $\mathtt{lab}_{ C(x)}^{out}$ represents the  garbled output for $C(x)$.
\item $\mathtt{GCEval}(\mathtt{GC}, \{\mathtt{lab}_{i}\}_{i\in[n]})\rightarrow \mathtt{lab'}$. Given the garbled circuit $\mathtt{GC}$ and a set of labels $\{\mathtt{lab}_{i}\}_{i\in[n]}$, $\mathtt{GCEval}$ outputs a label $\mathtt{lab'}$.
\end{itemize}

\begin{figure*}[htb]
\centering
\includegraphics[width=0.85\textwidth]{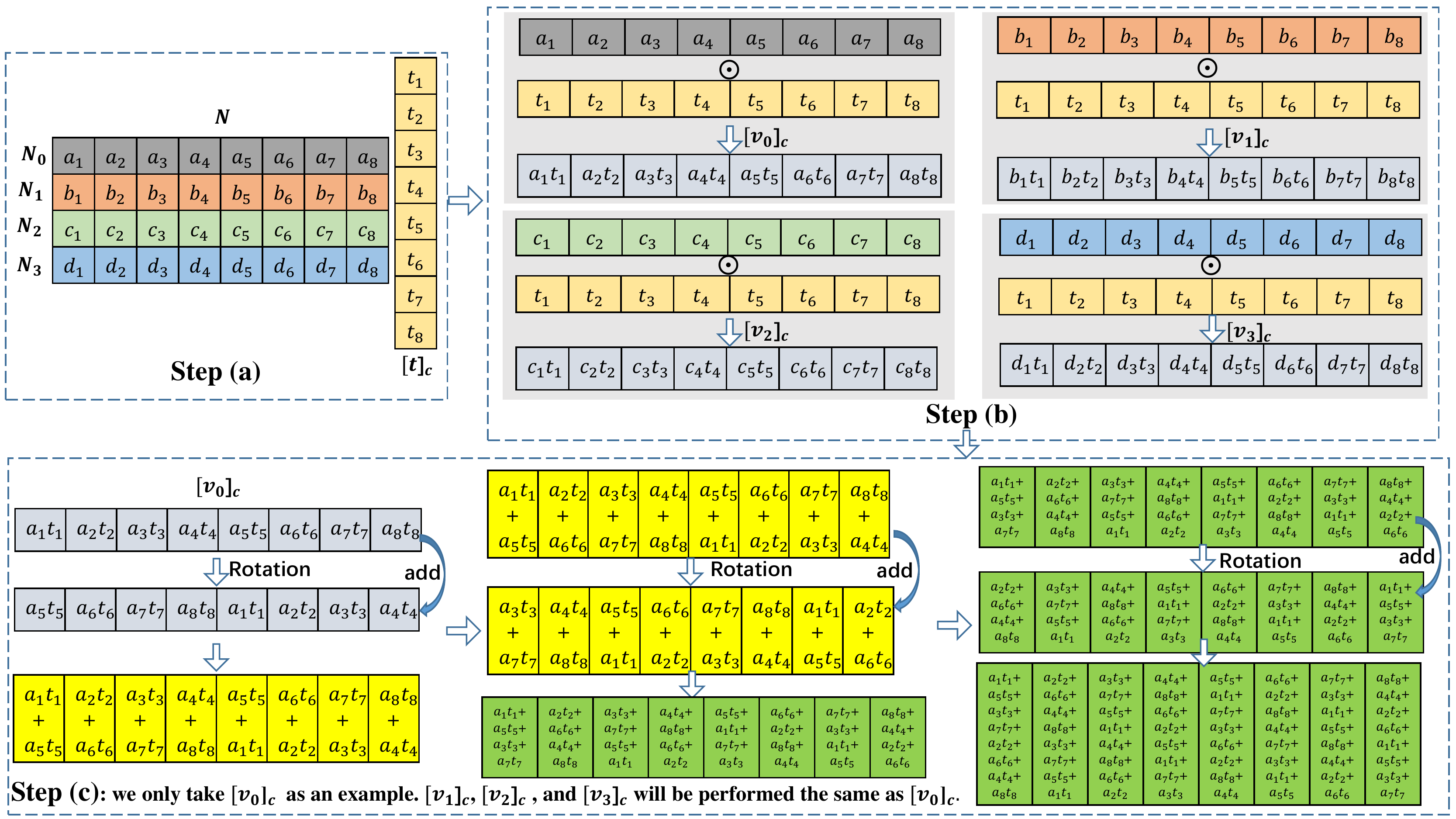}
\caption{Naive matrix-vector multiplication}
\label{Fig:Naive matrix-vector multiplication}
\end{figure*}

The above garbled scheme ($\mathtt{Garble}$, $\mathtt{GCEval}$) is required to satisfy the following properties:
\begin{itemize}[leftmargin=*,align=left]

\item \textbf{Correctness}. $\mathtt{GCEval}$ is faithfully evaluated on $\mathtt{GC}$ and outputs $C(x)$ if the garbled $x$ is given. Formally, for any circuit $C$ and $x\in \{0,1\}^{n}$, we have $$\mathtt{GCEval}(\mathtt{GC}, \{\mathtt{lab}_{i, x[i]}^{in}\}_{i\in[n]})\rightarrow \mathtt{lab}_{ C(x)}^{out}$$
\item \textbf{Security}. For any circuit $C$ and input $x\in \{0,1\}^{n}$, there exists a polynomial probability-time simulator $\mathtt{Sim}$ that can simulate the $\mathtt{GC}$ and garbled input of $x$  generated by $\mathtt{Garble}$ under real execution, \textit{i.e.}, $(\mathtt{GC},\{\mathtt{lab}_{i, x[i]}^{in}\}_{i\in[n]})\approx \mathtt{Sim}(1^\lambda, C)$, where $\approx$ indicates computational indistinguishability
of   two distributions $(\mathtt{GC},\{\mathtt{lab}_{i, x[i]}^{in}\}_{i\in[n]})$ and $\mathtt{Sim}(1^\lambda, C)$.
\item \textbf{Authenticity}. It is infeasible to guess the output label of $1-C(x)$ given the garbled input of $x$ and $\mathtt{GC}$. Formally, for any circuit $C$ and $x\in \{0,1\}^{n}$, we have $\left( \mathtt{lab}_{ 1-C(x)}^{out}|\mathtt{GC},\{\mathtt{lab}_{i, x[i]}^{in}\}_{i\in[n]}\right)\approx U_{\lambda}$, where $U_{\lambda}$ represents the uniform distribution on the set $\{0,1\}^{n}$.
\end{itemize}
Note that the garbled scheme described above can be naturally extended to the case with multiple garbled outputs. We also utilize state-of-the-art optimization strategies, including point-and-permute,
free-XOR and half-gates \cite{zahur2015two} to construct the garbled circuit. We provide a high-level description of performing a secure two-party computation with $\mathtt{GC}$: assuming that a semi-honest server $S_0$ and a malicious client $S_1$ hold private inputs $x$ and $y$, respectively. Both parties evaluate $C(x,y)$  on $\mathtt{GC}$ as follows. $S_0$ first garbles the circuit $C$  to learn  $\mathtt{GC}$  and garbled labels about the input and output of  $\mathtt{GC}$. Then,
both parties  invoke the OT functionality, where the sender $S_0$ inputs garbled labels corresponding to the input
wires of $S_1$, while the receiver $S_1$ inputs $y$ to obtain garbled inputs of $y$. $S_0$ additionally sends the garbled input of $x$, $\mathtt{GC}$  and a pair of ciphertexts for every output wire $w$ of $C$ to $S_1$. As a result,  $S_1$ evaluates $\mathtt{GC}$ to learn the garbled output of $C(x,y)$ with the received garbled input about $x$ and $y$.  From the garbled output and the
pair of ciphertexts given for each output wire,  $S_1$ finally gets $C(x,y)$. $S_1$ sends $C(x, y)$ along with the hash of the garbled output to $S_0$ for verification.  $S_0$ accepts it if the hash value  corresponds to $C(x, y)$.

  \section{Linear layer optimization}
\label{subsub:Linear layer optimization}
We start by describing the secure execution of linear layers in SIMC.  To be precise,  SIMC designs two protocols (\textbf{InitLin} and \textbf{Lin}) to securely perform linear layer operations, as shown in \textbf{Figures}~\ref{Protocols InitLin} and \ref{Protocol Lin} in Appendix~\ref{A:Protocols InitLin and Lin}.  We observe that the most computationally intensive operations in \textbf{InitLin} and \textbf{Lin} are homomorphically computing $\mathbf{N\cdot t}$ and $\alpha\mathbf{N\cdot t}$ (including matrix-vector multiplication in FC layers and convolution operations in convolution layers), where $\mathbf{N}$ is a plaintext weight matrix held by the server, and $\mathbf{t}$ (will be encrypted as $\mathbf{[t]_c}$) is the input held by the client.

We focus on the optimization of matrix-vector parallel multiplication.  More specifically, we consider an FC layer with $n_i$ inputs and $n_o$ outputs, \textit{i.e.}, $\mathbf{N}\in \mathbb{F}_p^{n_o\times n_i}$. The client's input is a ciphertext vector $\mathbf{t}\in \mathbb{F}_p^ {n_i}$.   $n$ is the number of slots in a ciphertext. We first describe a naive approach to parallelize homomorphic multiplication of $\mathbf{N\cdot t}$, followed by a state-of-the-art method proposed by GAZELLE \cite{juvekar2018gazelle}. Finally we introduce our proposed scheme that substantially reduces the computation cost of matrix-vector multiplication.

\subsection{Naive Method}
\label{subsub:Naive Method}

The naive method of matrix-vector multiplication is shown in \textbf{Figure}~\ref{Fig:Naive matrix-vector multiplication}, where $\mathbf{N}$ is the $n_o\times n_i$ dimensional plaintext matrix held by the server and $\mathbf{[t]_c}$ is the encryption of the client's input vector. The server encodes each row of the matrix into a separate plaintext vector (Step (a) in \textbf{Figure}~\ref{Fig:Naive matrix-vector multiplication}), where each encoded vector is of length $n$ (including zero-padding if necessary).  We denote these encoded plaintext vectors as $\mathbf{N_0}$, $\mathbf{N_1}$, $\cdots$, $\mathbf{N_{n_o-1}}$. For example, there are four vectors in  \textbf{Figure}~\ref{Fig:Naive matrix-vector multiplication}, namely $\mathbf{N_0}$, $\mathbf{N_1}$, $\mathbf{N_2}$, and $\mathbf{N_3}$.

The purpose of the server is to homomorphically compute the dot product of the plaintext $\mathbf{N}$ and the ciphertext $\mathbf{[t]_c}$. Let ScMult be the scalar multiplication of a plaintext and a ciphertext in HE. The server first uses ScMult to compute the element-wise multiplication of $\mathbf{N_i}$ and $\mathbf{[t]_c}$. As a result, the server gets $\mathbf{[v_i]_c}=\mathbf{[N_i\odot t]_c}$ (Step (b) in \textbf{Figure}~\ref{Fig:Naive matrix-vector multiplication}). We observe that the sum of all elements in $\mathbf{[v_i]_c}$ is the $i$-th element of the desired dot product of $\mathbf{N}$ and $\mathbf{[t]_c}$. Since there is no direct way to obtain this sum in HE, we rely on the rotation operation  to do it. To be precise, as shown in Step (c) in \textbf{Figure}~\ref{Fig:Naive matrix-vector multiplication}, $\mathbf{[v_i]_c}$ is first rotated by permuting $\frac{n_i}{2}$ positions such that the first $\frac{n_i}{2}$ entries of the rotated $\mathbf{[v_i]_c}$ are the second $\frac{n_i}{2}$ entries of the original $\mathbf{[v_i]_c}$. The server then performs element-wise addition on the original $\mathbf{[v_i]_c}$ and the rotated one homomorphically, which derives a ciphertext whose sum of the first $\frac{n_i}{2}$ entries is actually the desired result. The server iteratively performs the above rotation operation for $\log_2{n_i}$ times. Each time it operates on the ciphertext derived from the previous iteration. The server finally gets the ciphertext whose first entry is the $i$-th element of $\mathbf{Nt}$. By executing this procedure for each $\mathbf{[v_i]_c}$, \textit{i.e.}, $\mathbf{[v_0]_c}$, $\mathbf{[v_1]_c}$, $\mathbf{[v_2]_c}$, and $\mathbf{[v_3]_c}$ in \textbf{Figure}~\ref{Fig:Naive matrix-vector multiplication}, the server obtains $n_o$ ciphertexts. Consistently, the first entries of these ciphertexts correspond to the elements in $\mathbf{Nt}$.

\begin{figure*}[htb]
\centering
\includegraphics[width=0.85\textwidth]{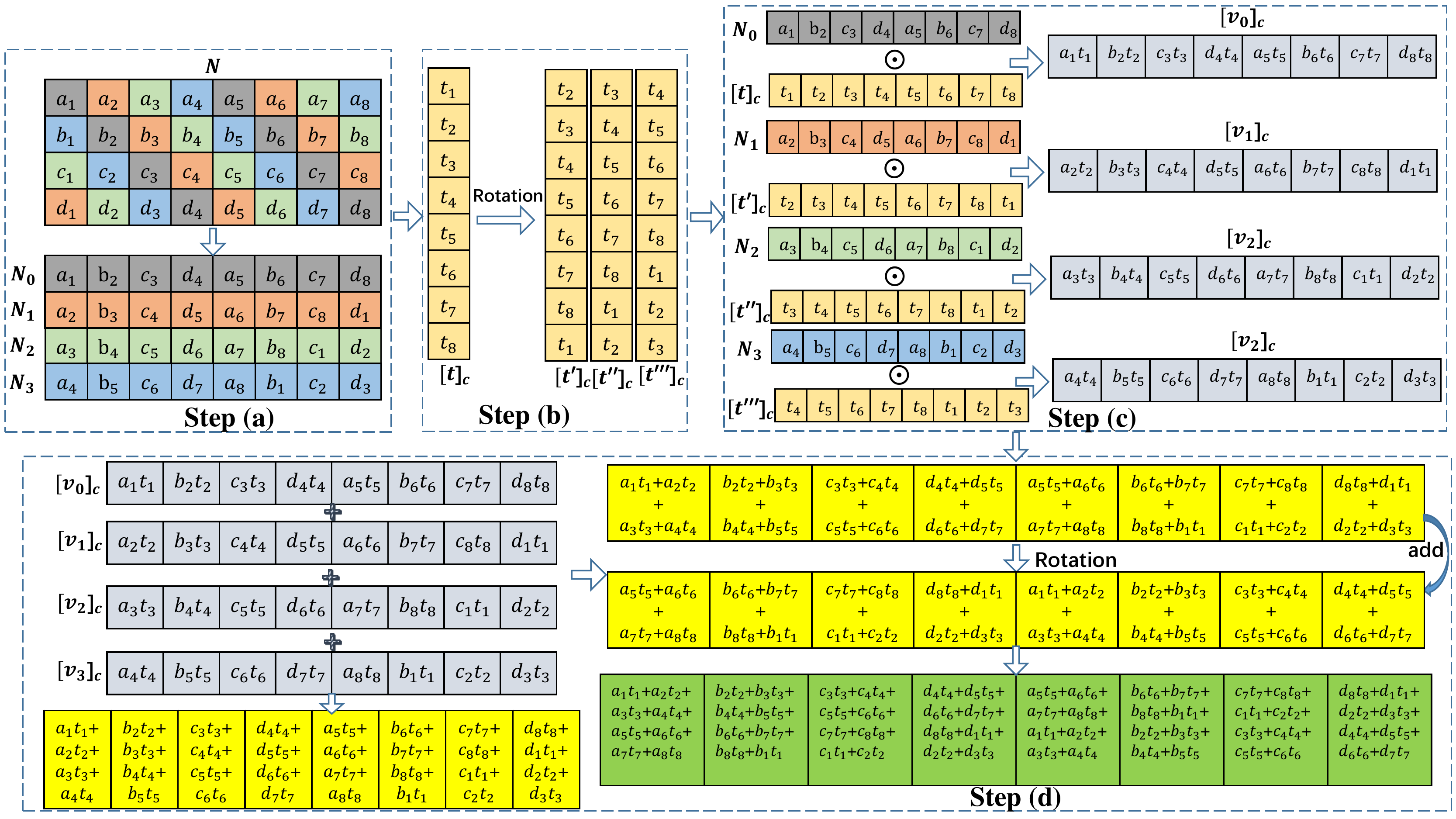}
\caption{Hybrid matrix-vector multiplication}
\label{Fig:Hybrid matrix-vector multiplication}
\end{figure*}

We now analyze the complexity of the above matrix-vector multiplication. We consider the process starting with the server receiving the ciphertext (\textit{i.e.}, $\mathbf{[t]_c}$) from the client until it obtains the ciphertext (\textit{i.e.}, $n_o$ ciphertexts) to be shared\footnote{In the neural network inference process with HE-GC as the underlying structure, the ciphertext generated in the linear phase will be securely shared between the client and the server, and used as the input of the subsequent GC-based nonlinear function. Readers can refer to \cite{lehmkuhl2021muse,chandran2021simc,patrablaze} for more details.}. There are totally $n_o$ scalar multiplication operations (\textit{i.e.}, ScMult), $n_o\log_2{n_i}$ rotation operations and $n_o\log_2{n_i}$ addition operations. It produces $n_o$ output ciphertexts, each containing an element in $\mathbf{Nt}$. Such naive use of the ciphertext space inevitably incurs inefficiencies for linear computations.

\subsection{Hybrid Method (GAZELLE)}
\label{subsub:Hybrid Method}
In order to fully utilize $n$ slots and substantially reduce the computation complexity, Juvekar \textit{et al.} \cite{juvekar2018gazelle} propose GAZELLE, a state-of-the-art hybrid method customized for the HE-GC based secure inference. It has been integrated into some advanced solutions such as DELPHI \cite{mishra2020delphi}  and EzPC\cite{chandran2019ezpc}. The core idea of GAZELLE also inspires the design of MUSE and SIMC, as they inherit DELPHI's optimization strategy for HE-based linear operations. GAZELLE  is actually a variant of diagonal encoding \cite{halevi2014algorithms}, which exploits the fact that $n_o$ is usually much smaller than $n_i$ in the FC layer. Based on this, GAZELLE shows that the most expensive rotation operation is a function of $n_o$ rather than $n_i$, thus speeding up the calculation of the FC layer.

The basic idea of GAZELLE is shown in \textbf{Figure}~\ref{Fig:Hybrid matrix-vector multiplication}. The server encodes the matrix $\mathbf{N}$ diagonally into a vector of $n_o$ plaintexts. For example, as shown in Step (a) in \textbf{Figure}~\ref{Fig:Hybrid matrix-vector multiplication}, the first plaintext vector $\mathbf{N_0}$ consists of gray elements in matrix $\mathbf{N}$, \textit{i.e.}, ($a_1, b_2, c_3, d_4, a_5, b_6, c_7, d_8$). The second  plaintext vector $\mathbf{N_1}$ consists of orange elements ($a_2$, $b_3$, $c_4$, $d_5$, $a_6$, $b_7$, $c_8$, $d_1$). The composition of $\mathbf{N_2}$ and $\mathbf{N_3}$ is analogous. Note that the meaning of $\mathbf{N_0}$ to $\mathbf{N_3}$ in the hybrid approach is different from the previous naive approach in Section \ref{subsub:Naive Method}.

For $i=1$ to $n_o-1$, the server rotates $\mathbf{[t]_c}$ by $i$ positions, as shown in Step (b) in \textbf{Figure}~\ref{Fig:Hybrid matrix-vector multiplication}. Afterwards, ScMult is used to perform the element-wise multiplication of $\mathbf{N_i}$ and the corresponding ciphertext. For example, as shown in Step (c) in \textbf{Figure}~\ref{Fig:Hybrid matrix-vector multiplication}, $\mathbf{N_0}$ is multiplied by the ciphertext $\mathbf{[t]_c}$, while $\mathbf{N_1}$ is multiplied by $\mathbf{[t']_c}$, and so on.  As a result, the server obtains $n_o$ multiplied ciphertexts, $\{\mathbf{[v_i]_c}\}$.
The server receives four ciphertexts $\{\mathbf{[v_0]_c}, \mathbf{[v_1]_c}, \mathbf{[v_2]_c}, \mathbf{[v_3]_c}\}$, whose elements are all part of the desired dot product. Then, the server sums all ciphertexts $\mathbf{[v_i]_c}$ (Step (d) in \textbf{Figure}~\ref{Fig:Hybrid matrix-vector multiplication}) in an element-wise way to form a new ciphertext. At this point, the server performs rotation operations similar to the naive approach, \textit{i.e.},  iteratively performing $\log_2\frac{n_i}{n_o}$ rotations followed by addition to obtain a final single ciphertext.
The first $n_o$ entries of this ciphertext correspond to the $n_o$ ciphertext elements of $\mathbf{Nt}$.

To further reduce the computation cost, GAZELLE proposes combining multiple copies of $\mathbf{t}$ into a single ciphertext (called $\mathbf{[t_{pack}]_c}$), since the number $n$ of slots in a single ciphertext is usually larger than the dimension $n_i$ of the input vector.  As a result, $\mathbf{[t_{pack}]_c}$ has $\frac{n}{n_i}$ copies of $\mathbf{t}$ so that the server can perform ScMult operations on $\mathbf{[t_{pack}]_c}$ with $\frac{n}{n_i}$ encoded vectors at one time. This causes the server to get $\frac{n_in_o}{n}$ ciphertexts instead of $n_o$ ciphertexts, which results in a single ciphertext holding $\frac{n}{n_o}$ rather than $\frac{n_i}{n_o}$ blocks. Finally, the server iteratively performs $\log_2\frac{n}{n_o}$ rotations followed by addition to obtain a single final ciphertext, where the first $n_o$ entries in this ciphertext correspond to the $n_o$ ciphertext elements of $\mathbf{Nt}$.

\begin{figure*}[htb]
\centering
\includegraphics[width=1.0\textwidth]{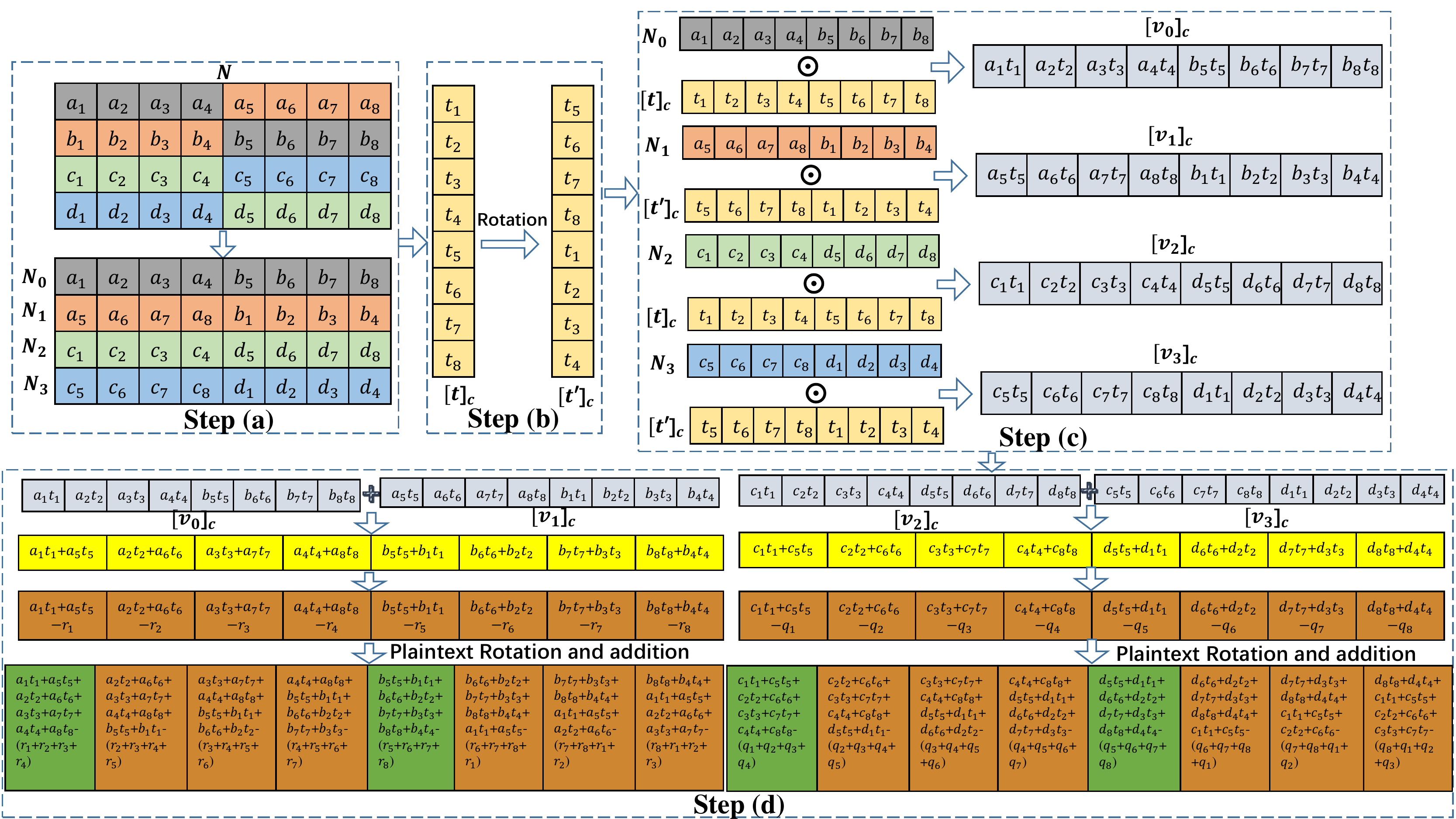}
\caption{Our matrix-vector multiplication algorithm}
\vspace{-10pt}
\label{Fig:Our matrix-vector multiplication}
\end{figure*}

In terms of complexity, GAZELLE requires $\frac{n_in_o}{n}$ scalar multiplications, $\log_2\frac{n}{n_o}+\frac{n_in_o}{n}-1$ rotations, and $\log_2\frac{n}{n_o}+\frac{n_in_o}{n}-1$ additions. It outputs only one ciphertext, which greatly improves the efficiency and utilization of slots over the naive approach.

\subsection{Our Method}
\label{Sec:our method}

In the above hybrid method, the rotation operation comes from two parts: rotations required for the client's encrypted input (Step (b) in \textbf{Figure}~\ref{Fig:Hybrid matrix-vector multiplication}) and the subsequent rotations followed by addition involved in obtaining the final matrix-vector result (Step (d) in \textbf{Figure}~\ref{Fig:Hybrid matrix-vector multiplication}). Therefore, our approach is motivated by two observations in order to substantially reduce the number of rotation operations incurred by these two aspects. First, we divide the server's original $n_o\times n_i$-dimensional weight matrix $\mathbf{N}$ into multiple sub-matrices, and calculate each sub-matrix separately (explained later based on \textbf{Figure}~\ref{Fig:Our matrix-vector multiplication}). This can effectively reduce the number of rotations for the client's ciphertext $\mathbf{[t]_c}$ from $n_o-1$ to $l-1$, where $l$ is a hyperparameter set by the server.

Second, for the rotations followed by addition (Step (d) in \textbf{Figure}~\ref{Fig:Hybrid matrix-vector multiplication}), we claim that they are actually not necessary. This is determined by the characteristics of the HE-GC based secure inference framework. To be precise, the ciphertexts output from the FC layer will be secretly shared with the client and server, and used as the input of the next non-linear layer function. As the shares are in plaintext, we can completely convert the rotation followed by addition to be performed under plaintext. This will significantly reduce the computation complexity. For example, given a $16\times256$ -dimensional matrix and  a  vector  of length 256, our experiments indicate at least a $3.5\times$ speedup compared to the hybrid method (see more details in Section~\ref{sec:PERFORMANCE EVALUATION}).

\textbf{Figure}~\ref{Fig:Our matrix-vector multiplication} illustrates our matrix-vector calculation procedure.  Specifically, we first split the matrix $\mathbf{N}$ into multiple sub-matrices. As shown in Step (a) in \textbf{Figure}~\ref{Fig:Our matrix-vector multiplication}, for simplicity, we take the first $l=2$ rows (\textit{i.e.}, $\mathbf{N}_0$ and $\mathbf{N}_1$) of $\mathbf{N}$ as a sub-matrix, thus $\mathbf{N}_2$ and $\mathbf{N}_3$  form another sub-matrix. Then,  we exploit the diagonal method to arrange the $1\times (n_i/l)$ sized sub-matrices of $\mathbf{N}$ as a sub-matrix, which results in two new sub-matrices. For subsequent ciphertext parallel computation, we need to sequentially perform $l-1$ rotations of the client's input $\mathbf{[t]_c}$, starting from moving the $(n_i/l+1)$-th entry to the first entry. In Step (b) in \textbf{Figure}~\ref{Fig:Our matrix-vector multiplication}, $\mathbf{[t]_c}$ is rotated into $\mathbf{[t']_c}$, where the first entry of $\mathbf{[t']_c}$ is the $(8/2+1)=5$-th entry of $\mathbf{[t]_c}$.

ScMult is further used to perform the element-wise multiplication of $\mathbf{N_i}$ and the corresponding ciphertext. For example, as shown in Step (c) in \textbf{Figure}~\ref{Fig:Our matrix-vector multiplication}, $\mathbf{N_0}$ is multiplied by the ciphertext $\mathbf{[t]_c}$, while $\mathbf{N_1}$ is multiplied by $\mathbf{[t']_c}$. $\mathbf{N_2}$ and $\mathbf{N_3}$ are operated similarly. As a result, the server obtains $n_o$ multiplied ciphertexts $\{\mathbf{[v_i]_c}\}$, which can be divided into $l$ independent ciphertext pairs.
We can observe that the server gets two independent ciphertext pairs, \textit{i.e.},  $\{\mathbf{[v_0]_c}, \mathbf{[v_1]_c}\}$ and $ \{\mathbf{[v_2]_c}, \mathbf{[v_3]_c}\}$. For each individual set of ciphertexts, the server adds up all ciphertexts in that set in an element-wise way,  thus forming $l$ new ciphertexts (yellow boxes in Step (d) in \textbf{Figure}~\ref{Fig:Our matrix-vector multiplication}).

Until now, a natural approach is to perform rotation followed by addition on each ciphertext, as in the hybrid approach. This will take a total of $n_i/l$ rotations and derive $n_o/l$ ciphertexts. However, we argue that such ciphertext operations are not necessary and can be converted to plaintext to be performed. Our key insight is that
the ciphertext result of the FC layer will be secretly shared with the client and server. Therefore, we require the server to generate  random vectors (\textit{i.e.}, \{$\mathbf{r_i}\}_{i\in{[8]}}$ and $\{\mathbf{q_i}\}_{i\in{[8]}}$ in Step (d) of \textbf{Figure}~\ref{Fig:Our matrix-vector multiplication}), and then subtract the random vectors from the
corresponding ciphertexts. The server returns the subtracted
ciphertexts to the client, which decrypts them and executes $n_i/l$ plaintext rotation followed by addition operations on them to obtain its share.  Similarly, the server gets its share by executing  $n_i/l$ plaintext rotation followed by addition operations on its  random vectors.
Compared to GAZELLE, our method only needs to perform $l-1$ rotations on the client's input, and does not require any rotations subsequently. This significantly improves the performance of matrix-vector computations.

We  further reduce the computation cost by packing multiple copies of $\mathbf{t}$ into a single ciphertext (called $\mathbf{[t_{pack}]_c}$).  As a result, $\mathbf{[t_{pack}]_c}$ has $\frac{n}{n_i}$ copies of $\mathbf{t}$ so that the server can perform ScMult operations on $\mathbf{[t_{pack}]_c}$ with $\frac{n}{n_i}$ encoded vectors at one time. The computation cost of our method is $\frac{n_in_o}{n}$ homomorphic multiplication operations, $l-1$ rotation operations, and $\frac{n_in_o}{n}-1$ homomorphic addition operations. It outputs $n_i\cdot n_o/(l\cdot n)$ ciphertexts.

\vspace{5pt}
\noindent\textit{Remark 3.1}. Table~\ref{Complexity of Each  Method} shows the comparison of our method with existing approaches in computation complexity. It is obvious that our method has better complexity, especially for rotation operations. Note that GALA \cite{zhang2021gala} also designs an improved version of GAZELLE \cite{juvekar2018gazelle} for matrix-vector multiplication, which has the computation complexity of $(\frac{n_in_o}{n}-1)$ for rotation operations. However, GALA is specially customized for secure inference under the \textit{semi-honest adversary model}. It is not clear whether it can be smoothly transferred to the \textit{client-malicious adversary model}. Moreover, our method still outperforms GALA in the computation complexity.
\renewcommand \thetable{\Roman{table}}
\renewcommand\tablename{TABLE}
\begin{table}[htb]
\centering
\footnotesize
\caption{Computation complexity of each method}
\label{Complexity of Each  Method}
\begin{tabular}{c|c|c|c}
\Xhline{1pt}
\textbf{Method} & \textbf{\#Rotation} & \textbf{\#ScMul} & \textbf{\#Add} \\
\Xhline{1pt}
Naive&$n_o\log_2{n_i}$& $n_o$& $n_o\log_2{n_i}$\\
\hline
Hybrid&$\log_2\frac{n}{n_o}+\frac{n_in_o}{n}-1$& $\frac{n_in_o}{n}$& $\log_2\frac{n}{n_o}+\frac{n_in_o}{n}-1$\\
\hline
Our method&$l-1$& $\frac{n_in_o}{n}$& $\frac{n_in_o}{n}-1$\\
\Xhline{1pt}
\end{tabular}
\end{table}

\vspace{5pt}
\noindent\textit{Remark 3.2}. We also describe the optimization of convolution operations for linear layers.  In brief, we assume that the server has $c_o$ plaintext kernels of size $k_w\times k_h\times c_i$, and the ciphertext input sent by the client to the server is $c_i$ kernels of size $u_w\times u_h$. The server is required to perform  homomorphic convolution operations between the ciphertext input and its own plaintext kernel to obtain the ciphertext output.  To improve the  parallel processing capabilities, GAZELLE proposes to pack the input data of $c_n$ channels into one ciphertext,  and requires a total of $\frac{c_i(k_wk_h-1)}{c_n}$ rotation operations to achieve convolution.  We present an improved solution over GAZELLE to execute convolution operations. We design strategies to significantly reduce the number of rotation operations involved in computing the convolution between each of the  $\frac{c_0c_i}{c_n^2}$ blocks and the corresponding input channels (see Appendix~\ref{Optimization for convolution operations}). As a result, our method reduces the computation complexity with respect to rotations by a factor of $\frac{c_i}{c_n}$ compared to GAZELLE. Readers can refer to Appendix~\ref{Optimization for convolution operations} for more details.

\section{Nonlinear Layer Optimization}
\label{SC:nonliner layer optimization}
In this section, we describe our proposed optimization method for non-linear layers. We focus on the secure computation of the activation function ReLU, one of the most popular functions in non-linear layers of modern DNNs. As shown in Section~\ref{subsub:Linear layer optimization}, the output of each linear layer will be securely shared with the server and client (\textbf{Figures}~\ref{Protocols InitLin} and \ref{Protocol Lin} in Appendix~\ref{A:Protocols InitLin and Lin}), and used as the input of the next non-linear layer to obtain authenticated shares of the output of the non-linear layer. Specifically, assume the output of a linear layer is $\mathbf{u}=\left \langle \mathbf{u} \right \rangle_0+\left \langle \mathbf{u} \right \rangle_1$, where $\left \langle \mathbf{u} \right \rangle_0$ and $\left \langle \mathbf{u} \right \rangle_1$ are the shares held by the server and the client, respectively. Then the functionality of the next nonlinear layer is shown in \textbf{Figure}~\ref{Functionality of the nonlinear layer}. At the high level, the main difference between our method and SIMC comes from parsing ReLU as $f(\mathbf{u})=\mathbf{u}\cdot sign(\mathbf{u})$ and encapsulating only the non-linear part ($sign(\mathbf{u})$) into the GC.  The sign function $sign(t)$ equals 1 if $t\geq 0$ and 0 otherwise. In this way, we can reduce the original GC size by about two-thirds, thereby further reducing the computation and communication costs incurred by running non-linear layers.

\renewcommand\tablename{Figure}
\renewcommand \thetable{\arabic{table}}
\setcounter{table}{3}
\begin{table}[htb]
\centering
\footnotesize
\begin{tabular}{|p{8cm}|}
\hline
Function $f: \mathbb{F}_p \rightarrow \mathbb{F}_p$.\\
 \textbf{Input:} $S_0$ holds $\left \langle \mathbf{u} \right \rangle_0 \in \mathbb{F}_{p}$  and  a MAC key $\alpha$ uniformly chosen from $\mathbb{F}_p$.  $S_1$ holds  $\left \langle \mathbf{u} \right \rangle_1 \in \mathbb{F}_{p}$.\\
 \textbf{Output:} $S_b$ obtains $\{(\langle \alpha\mathbf{u}\rangle_b, \langle f(\mathbf{u})\rangle_b, \langle \alpha f(\mathbf{u})\rangle_b)\}$ for $b\in\{0, 1\}$.\\
\hline
\end{tabular}
\caption{Functionality of the nonlinear layer}
\label{Functionality of the nonlinear layer}
\end{table}

Similar to SIMC, our method can be divided into four phases: Garbled Circuit phase, Authentication phase 1, Local Computation phase, and Authentication phase 2. Assume the server's ($S_0$) input is ($\left \langle \mathbf{u} \right \rangle_0$, $\alpha$) and the client's ($S_1$) input is ($\left \langle \mathbf{u} \right \rangle_1$). We provide a high-level view of the protocol. \textbf{Figure}~\ref{Protocol of the nonlinear layer} gives a detailed technical description.

\begin{table*}[htb]
\centering
\small
\begin{tabular}{|p{17.5cm}|}
\hline
\textbf{Preamble}: To compute the ReLU function $f:\mathbb{F}_p \rightarrow \mathbb{F}_p$, we consider a Boolean circuit $booln^f$ that takes the share of $\mathbf{u}$ as input and outputs $(\mathbf{u}, f(\mathbf{u}))$. In addition, we define a truncation function $\mathbf{Trun}_h: \{0, 1\}^\lambda\rightarrow \{0, 1\}^h$, which outputs the last $h$ bits of the input, where $\lambda$ satisfies $\lambda\geq 2\kappa$.\\
 \textbf{Input:}$S_0$ holds $\left \langle \mathbf{u} \right \rangle_0 \in \mathbb{F}_{p}$  and  a MAC key $\alpha$ uniformly chosen from $\mathbb{F}_p$.  $S_1$ holds  $\left \langle \mathbf{u} \right \rangle_1 \in \mathbb{F}_{p}$.\\
 \textbf{Output:} $S_b$ obtains $\{(\langle \alpha\mathbf{u}\rangle_b, \langle f(\mathbf{u})\rangle_b, \langle \alpha f(\mathbf{u})\rangle_b)\}$ for $b\in\{0, 1\}$.\\
\textbf{Protocol}:\\
\begin{itemize}
\item[1.] Garbled Circuit Phase:
\begin{itemize}
\item $S_0$ computes $\mathtt{Garble}(1^\lambda, booln^f)\rightarrow (\mathtt{GC}, \{ \{\mathtt{lab}_{i,j}^{in}\}_{i\in[2\kappa]},\{\mathtt{lab}_{i,j}^{out}\}_{i\in[2\kappa]}\}_{j\in\{0,1\}})$. Given the security parameter $\lambda$ and a boolean circuit $booln^f$, $\mathtt{Garble}$ outputs a garbled  circuit $\mathtt{GC}$, an input set  $\{\mathtt{lab}_{i,j}^{in}\}_{i\in[2\kappa], j\in\{0,1\}}$ of labels  and an output set $\{\mathtt{lab}_{i,j}^{out}\}_{i\in[2\kappa], j\in\{0,1\}}$, where each label is of $\lambda$-bits.
\item $S_0$ (as the sender) and $S_1$ (as the receiver) invoke  the OT$_{\lambda}^{\kappa}$ (see Section~\ref{Oblivious Transfer}), where $S_0$'s inputs are  $\{\mathtt{lab}_{i,0}^{in}, \mathtt{lab}_{i,1}^{in}\}_{i\in\{\kappa+1, \cdots, 2\kappa\}}$ while $S_1$'s input is $\left \langle \mathbf{u} \right \rangle_1$. As a result, $S_1$ obtains  $\{\mathtt{\tilde{lab}}_{i}^{in}\}_{i\in\{\kappa+1, \cdots, 2\kappa\}}$. In addition, $S_0$ sends the garbled  circuit $\mathtt{GC}$ and its garbled inputs $\{ \{\mathtt{\tilde{lab}}_{i}^{in}=\mathtt{lab}_{i, \left \langle \mathbf{u} \right \rangle_0[i]}\}_{i\in[\kappa]}$ to $S_1$.
\item Given the $\mathtt{GC}$  and the garbled inputs $\{\mathtt{\tilde{lab}}_{i}^{in}\}_{i\in[2\kappa]}$, $S_1$ computes $\mathtt{GCEval}(\mathtt{GC}, \{\mathtt{\tilde{lab}}_{i}^{in}\}_{i\in[2\kappa]})\rightarrow \{\mathtt{\tilde{lab}}_{i}^{out}\}_{i\in[2\kappa]}$.
\end{itemize}
\item[2.] Authentication Phase 1:
\begin{itemize}
\item For every $i\in[\kappa]$, $S_0$ randomly selects $\rho_{i, 0}$, $\sigma_{i, 0}$ and $\tau_{i, 0}\in \mathbb{F}_p$ and sets  $(\rho_{i, 1}, \sigma_{i, 1}, \tau_{i, 1})=(1+\rho_{i, 0}, \alpha+\sigma_{i, 0}, \alpha+\tau_{i, 0})$.
\item For every $i\in [2\kappa]$ and $j\in \{0, 1\}$, $S_0$ parses $\{\mathtt{lab}_{i,j}^{out}\}$ as $\xi_{i, j}||\zeta_{i,j}$ where $\xi_{i, j}\in\{0, 1\}$ and $\zeta_{i,j}\in\{0,1\}^{\lambda-1}$.
\item For every $i\in [\kappa]$ and $j\in \{0, 1\}$, $S_0$ sends $ct_{i, \xi_{i, j}}$ and $\hat{ct}_{i, \xi_{i+\kappa, j}}$ to $S_1$, where $ct_{i, \xi_{i, j}}=\tau_{i, j}\oplus \mathbf{Trun}_\kappa(\zeta_{i,j})$ and $\hat{ct}_{i, \xi_{i+\kappa, j}}=(\rho_{i, j}||\sigma_{i, j})\oplus \mathbf{Trun}_{\kappa}(\zeta_{i+\kappa,j})$.
\item For every $i\in[2\kappa]$, $S_1$ parses $\mathtt{\tilde{lab}}_{i}^{out}$ as $\tilde{\xi}_{i}||\tilde{\zeta}_{i}$ where $\tilde{\xi}_{i}\in\{0, 1\}$ and $\tilde{\zeta}_{i}\in\{0,1\}^{\lambda-1}$.
\item For every $i\in[\kappa]$, $S_1$ computes $c_i=ct_{i, \tilde{\xi}_{i}}\oplus \mathbf{Trun}_\kappa(\tilde{\zeta}_{i})$ and $(d_i||e_i)=\hat{ct}_{i, \tilde{\xi}_{i+\kappa}}\oplus\mathbf{Trun}_{2\kappa}(\tilde{\zeta}_{i+\kappa})$.
\end{itemize}
\item[3.] Local Computation Phase:
\begin{itemize}
\item $S_0$ outputs $\langle g_1\rangle_0=(-\sum_{i\in[\kappa]}\tau_{i, 0}2^{i-1})$, $\langle g_2\rangle_0=(-\sum_{i\in[\kappa]}\rho_{i, 0}2^{i-1})$ and $\langle g_3\rangle_0=(-\sum_{i\in[\kappa]}\sigma_{i, 0}2^{i-1})$.
\item $S_1$ outputs $\langle g_1\rangle_1=(\sum_{i\in[\kappa]}c_i2^{i-1})$, $\langle g_2\rangle_1=(\sum_{i\in[\kappa]}d_i2^{i-1})$ and $\langle g_3\rangle_1=(\sum_{i\in[\kappa]}e_i2^{i-1})$.
\end{itemize}
\item[4.] Authentication Phase 2:
\begin{itemize}
\item $S_b, b\in \{0, 1\}$ randomly select a triplet of fresh authentication shares  $\{(\langle A\rangle_b, \langle \alpha A\rangle_b), (\langle B\rangle_b, \langle \alpha B\rangle_b), (\langle C\rangle_b, \langle \alpha C\rangle_b)\}$ (see \textbf{Figure}~\ref{Algorithm of generating authenticated Beaver's multiplicative triple} for selection details), where triple $(A, B, C)\in \mathbb{F}_{p}^{3}$ satisfying $AB=C$.
\item $S_0$ interacts with $S_1$  to reveal $\Gamma= \mathbf{u}- A$ and $\Lambda= g_2- B$.
\item $S_b, , b\in \{0, 1\}$ computes the $\langle z_2 \rangle_b=\langle \mathbf{u}\cdot sign(\mathbf{u}) \rangle_b$ through  $\langle z_2 \rangle_b$=$\langle C\rangle_b+ \Gamma \cdot \langle B\rangle_b+ \Lambda \cdot\langle A\rangle_b+\Gamma \cdot\Lambda$.  Also, $\langle z_3 \rangle_b=\langle \alpha \mathbf{u}\cdot sign(\mathbf{u}) \rangle_b$ can be obtained by a similarly way.
\end{itemize}
\end{itemize}\\
\hline
\end{tabular}
\vspace{5pt}
\caption{Our protocol for the nonlinear layer $\pi^{f}_{\mathtt{Non-lin}}$}
\vspace{-20pt}
\label{Protocol of the nonlinear layer}
\end{table*}

\textbf{Garbled Circuit Phase}. SIMC uses the GC to calculate every  bit of   $\mathbf{u}$  and   $ReLU(\mathbf{u})$, instead of directly calculating these values. This is efficient, since it avoids incurring a large number of multiplication operations in the GC. Our method follows a similar logic, but instead of computing $ReLU(\mathbf{u})$, we compute every bit of  $sign(\mathbf{u})$. In this phase we can reduce the original GC size by about two-thirds. To achieve this, $S_0$ first constructs a garbled circuit for $booln^f$, where the input of the circuit is the share of $\mathbf{u}$ and its output is $(\mathbf{u}, sign(\mathbf{u}))$. $S_1$ evaluates this garbled circuit on $\left \langle \mathbf{u} \right \rangle_0$ and $\left \langle \mathbf{u} \right \rangle_1$ once the correct input labels are obtained through the OT protocol. At the end, $S_1$ learns the set of output labels for the bits of $\mathbf{u}$ and $sign(\mathbf{u})$.

\textbf{Authentication Phase 1}. In the previous phase $S_1$ obtains the garbled output labels for each bit of $\mathbf{u}$ and $sign(\mathbf{u})$, denoted as $\mathbf{u}[i]$ and $sign(\mathbf{u})[i]$, respectively.  The goal of this phase is to compute the shares of $\alpha\mathbf{u}[i]$, $sign(\mathbf{u})[i]$, and $\alpha sign(\mathbf{u})[i]$. We take $\alpha\mathbf{u}[i]$  as an example to  briefly describe the procedure. Specifically, we observe that the shares of $\alpha\mathbf{u}[i]$ are either shares 0 or  $\alpha$, depending on whether $\mathbf{u}[i]$ is 0 or 1. Also,  the output of the GC contains two output labels corresponding to each $\mathbf{u}[i]$ (each one for $\mathbf{u}[i]=0$ and 1). Therefore, we denote these labels as $\mathtt{lab}_{i,0}^{out}$ and $\mathtt{lab}_{i,1}^{out}$  for $\mathbf{u}[i]=0$ and $\mathbf{u}[i]=1$, respectively.  To calculate the shares of $\alpha\mathbf{u}[i]$, $S_0$ chooses a random number $\tau_{i}\in \mathbb{F}_p$ and converts it to $\mathtt{lab}_{i,0}^{out}$. Similarly, $\tau_{i}+\alpha$ is converted to $\mathtt{lab}_{i,1}^{out}$. $S_0$ sends the two ciphertexts to $S_1$, and sets its share of $\alpha\mathbf{u}[i]$ as $-\tau_{i}$. Since $S_1$ has obtained $\mathtt{lab}_{i, \mathbf{u}[i]}^{out}$ in the previous phase, it can obviously decrypt one of the two ciphertexts to obtain its  share for $\alpha\mathbf{u}[i]$ . Computing the share of $\alpha sign(\mathbf{u})[i]$ and $sign(\mathbf{u})[i]$ follows a similar method using the output labels for $sign(\mathbf{u})$.

\textbf{Local Computation Phase}. Given the shares of $\alpha\mathbf{u}[i]$, $sign(\mathbf{u})[i]$, and $\alpha sign(\mathbf{u})[i]$, this  phase locally calculates the shares of $\alpha\mathbf{u}$, $sign(\mathbf{u})$, and $\alpha sign(\mathbf{u})$. Taking $\alpha\mathbf{u}$ as an example, as shown in \textbf{Figure}~\ref{Protocol of the nonlinear layer}, each party locally multiplies $\alpha\mathbf{u}[i]$ and $2^{i-1}$ and then sums these results to get the  share on $\alpha\mathbf{u}$ . The shares of  $sign(\mathbf{u})$ and $\alpha sign(\mathbf{u})$ are calculated in a similar way.

\textbf{Authentication Phase 2}. This phase calculates the shares of $f(\mathbf{u})=\mathbf{u}sign(\mathbf{u})$, and $\alpha f(\mathbf{u})$. Since each party holds the authenticated shares of $\mathbf{u}$ and $sign(\mathbf{u})$, it is easy to compute the authenticated shares of $f(\mathbf{u})$ for each party (see \textbf{Figure}~\ref{Protocol of the nonlinear layer}).

\vspace{5pt}
\noindent\textit{Remark 4.1}. As described above, we reduce the size of the GC  in SIMC by about two thirds. Therefore, instead of using the GC to calculate the entire ReLU function, we only encapsulate the non-linear part of ReLU into the GC. Then, we construct a lightweight alternative protocol that takes the output of the  simplified GC as input to calculate the shares of the desired result. Therefore, compared with SIMC, SIMC 2.0 reduces the communication overhead of calculating each ReLU  from $2c\lambda+4\kappa\lambda+6\kappa^2$ to $2e\lambda+4\kappa\lambda+6\kappa^2+2\kappa$,  where $e<c$ denotes the number of AND gates required in the GC and satisfies ($c-e>\kappa$). Also, we experimentally demonstrate that this simplified GC improves the running efficiency by one third comapred to the original one.

\vspace{5pt}
\noindent\textit{Remark 4.2}. Our method can be easily extended to other non-linear functions, such as $\mathtt{Maxpool}$. We follow the same idea to compute all the shares of $\mathtt{Maxpool}$'s output, where we  construct a Boolean circuit for $\mathtt{Maxpool}$ that feeds  multiple inputs $(\mathbf{u}_1, \mathbf{u}_2, \cdots \mathbf{u}_\kappa)$, and outputs the reconstructed $(\mathbf{u}_1, \mathbf{u}_2, \cdots \mathbf{u}_\kappa)$ and $f(\mathbf{u}_1, \mathbf{u}_2, \cdots \mathbf{u}_\kappa)$.

\vspace{5pt}
\noindent\textbf{Correctness}. We analyze the correctness of our protocol in \textbf{Figure}~\ref{Protocol of the nonlinear layer} as follows. Based on the correctness of OT$_{\lambda}^{\kappa}$, the client $S_1$ holds $ \{\mathtt{\tilde{lab}}_{i}^{in}=\mathtt{lab}_{i, \left \langle \mathbf{u} \right \rangle_1[i]}\}_{i\in\{\kappa+1, \cdots 2\kappa\}}$. Since for $i\in[\kappa]$, $ \{\mathtt{\tilde{lab}}_{i}^{in}$ $=\mathtt{lab}_{i, \left \langle \mathbf{u} \right \rangle_0[i]}\}_{i\in[\kappa]}$, we can get $\mathtt{\tilde{lab}}_{i}^{out}=\mathtt{lab}_{i,\mathbf{u}[i]}^{out}$ and $\mathtt{\tilde{lab}}_{i+\kappa}^{out}=\mathtt{lab}_{i+\kappa,sign(\mathbf{u})[i]}^{out}$, with the correctness of ($\mathtt{Garble}$, $\mathtt{GCEval}$) for circuit $booln^f$. Therefore, for $i\in[k]$,
we have $\tilde{\xi}_{i}||\tilde{\zeta}_{i}=\xi_{i, \mathbf{u}[i]}||\zeta_{i, \mathbf{u}[i]}$ and
$\tilde{\xi}_{i+\kappa}||\tilde{\zeta}_{i+\kappa}=\xi_{i+\kappa, sign(\mathbf{u})[i]}||\zeta_{i+\kappa, sign(\mathbf{u})[i]}$. We also have $c_i=ct_{i, \xi_{i, \mathbf{u}[i]}}$ $\oplus\mathbf{Trun}_{\kappa}(\zeta_{i, \mathbf{u}[i]})=\tau_{i, \mathbf{u}[i]}$ and $(d_i||e_i)=\hat{ct}_{i, \xi_{i+\kappa, sign(\mathbf{u})[i]}}\oplus\mathbf{Trun}_{2\kappa}(\zeta_{i+\kappa, sign(\mathbf{u})[i]})=\rho_{i, sign(\mathbf{u})[i]}||\sigma_{i,  sign(\mathbf{u})[i]}$. Based on these, we have
\begin{itemize}[leftmargin=*,align=left]
\begin{small}
\item  $g_1=\sum_{i\in[\kappa]}(c_i-\tau_{i, 0})2^{i-1}=\sum_{i\in[\kappa]}\alpha(\mathbf{u}[i])2^{i-1}=\alpha\mathbf{u}$.
\item $g_2=\sum_{i\in[\kappa]}(d_i-\rho_{i, 0})2^{i-1}=\sum_{i\in[\kappa]}(sign(\mathbf{u})[i])2^{i-1}=sign(\mathbf{u})$.
 \item $g_3=\sum_{i\in[\kappa]}(e_i-\sigma_{i, 0})2^{i-1}=\sum_{i\in[\kappa]}\alpha (sign(\mathbf{u})[i])2^{i-1}=\alpha sign(\mathbf{u})$.
 \end{small}
\end{itemize}
Since each party  holds the authenticated shares of $\mathbf{u}$ and $sign(\mathbf{u})$, we can easily compute the shares of $f(\mathbf{u})=\mathbf{u}sign(\mathbf{u})$, and $\alpha f(\mathbf{u})$. This concludes the correctness proof.

\vspace{5pt}
\noindent\textbf{Security}. Our protocol for non-linear layers has the same security properties as SIMC, \textit{i.e.}, it  is secure against the malicious client model. We provide the following theorem and prove it in Appendix~\ref{proof of th1} following the similar logic of SIMC.
\begin{theorem}
Let ($\mathtt{Garble}$, $\mathtt{GCEval}$) be a garbling scheme with the properties defined in Section~\ref{Garbled Circuits}. Authenticated shares have the properties defined in Section~\ref{Secret Sharing}. Then our  protocol for non-linear layers is secure against any malicious adversary $\mathcal{A}$ corrupting the client $S_1$.
\end{theorem}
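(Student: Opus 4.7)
The plan is to prove security via the standard simulation paradigm adapted to the client-malicious setting: I would construct a PPT simulator $\mathtt{Sim}$ that, given only access to the ideal functionality described in Figure~\ref{Functionality of the nonlinear layer} and black-box access to the corrupted client $\mathcal{A}$, produces a view computationally indistinguishable from $\mathcal{A}$'s view in a real execution of $\pi^{f}_{\mathtt{Non-lin}}$. Because our protocol shares the four-phase skeleton of SIMC but alters the contents of the Garbled Circuit and Authentication Phase~1, the bulk of the proof can be inherited from SIMC; what needs to be re-argued carefully is that (i) encapsulating only $sign(\mathbf{u})$ into the circuit preserves the simulatability of Phase~1, and (ii) the Beaver-triple-based multiplication in Authentication Phase~2 does not open new avenues for the adversary.

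First I would handle extraction. In the Garbled Circuit Phase, $\mathtt{Sim}$ plays the sender of OT$_{\lambda}^{\kappa}$; invoking the simulator of an OT secure against a malicious receiver, it recovers the effective input $\langle\mathbf{u}\rangle_1$ from $\mathcal{A}$'s choice bits and forwards it to the ideal functionality, obtaining $(\langle\alpha\mathbf{u}\rangle_1,\langle f(\mathbf{u})\rangle_1,\langle\alpha f(\mathbf{u})\rangle_1)$. $\mathtt{Sim}$ then calls the GC simulator $\mathtt{Sim}_{GC}(1^{\lambda},booln^{f})$ to obtain a fake garbled circuit $\widetilde{\mathtt{GC}}$ together with input labels $\{\widetilde{\mathtt{lab}}_{i}^{in}\}_{i\in[2\kappa]}$ that evaluate to a predetermined set of output labels $\{\widetilde{\mathtt{lab}}_{i}^{out}\}_{i\in[2\kappa]}$; by the security property of the garbling scheme this transcript is computationally indistinguishable from the honest one, and by authenticity $\mathcal{A}$ can guess a label for the opposite branch $\mathtt{lab}_{i,1-b}^{out}$ only with negligible probability.

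For Authentication Phase~1, $\mathtt{Sim}$ programs the ciphertexts along the evaluated branch so that the values $c_i$, $d_i$, $e_i$ recovered by $\mathcal{A}$ satisfy
\[
\sum_{i\in[\kappa]} c_i\,2^{i-1}=\langle\alpha\mathbf{u}\rangle_1-\langle g_1\rangle_0,\quad \sum_{i\in[\kappa]} d_i\,2^{i-1}=\langle f(\mathbf{u})\rangle_1-\langle g_2\rangle_0,
\]
and analogously for $e_i$, where $\mathtt{Sim}$ has chosen $\tau_{i,0},\rho_{i,0},\sigma_{i,0}$ itself. The ciphertexts corresponding to the unevaluated branches are replaced with uniformly random strings; this is indistinguishable because authenticity of the garbling scheme combined with the pseudorandomness of $\mathbf{Trun}_{\kappa}(\zeta_{i,1-b})$ hides those masks from $\mathcal{A}$. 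For Authentication Phase~2, $\mathtt{Sim}$ opens $\Gamma$ and $\Lambda$ as uniformly random elements of $\mathbb{F}_p$; since the freshly sampled triple shares mask $\mathbf{u}$ and $g_2$ perfectly, the real openings are already uniform, and the remaining local computations are performed with the ideal-world shares. The overall indistinguishability is then argued by a short hybrid sequence: (H$_0$) real execution; (H$_1$) switch OT to its simulator; (H$_2$) replace $(\mathtt{GC},\text{labels})$ by $\mathtt{Sim}_{GC}$; (H$_3$) randomize the unevaluated-branch ciphertexts; (H$_4$) substitute the ideal shares and random triple openings.

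The main obstacle I anticipate is ruling out undetected cheating in the two places where $\mathcal{A}$ controls values that later feed a MAC check, namely the OT choice bits in the Garbled Circuit Phase and the share it contributes when opening $\Gamma=\mathbf{u}-A$ and $\Lambda=g_2-B$. The first is pinned down by the extractor: whatever choice bits $\mathcal{A}$ supplies are precisely the input $\langle\mathbf{u}\rangle_1$ that $\mathtt{Sim}$ forwards to the functionality, so there is no ambiguity. The second requires the authenticated-shares soundness bound: if $\mathcal{A}$ replaces $(\langle x\rangle_{1},\langle\alpha x\rangle_{1})$ by $(\langle x\rangle_{1}+\beta,\langle\alpha x\rangle_{1}+\beta')$ with $\beta\neq 0$, the equality $\alpha\beta=\beta'$ holds with probability at most $2^{-\lfloor\log p\rfloor}$ over the uniformly random MAC key $\alpha$ that remains hidden from the client throughout. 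Consequently the abort event in the simulated execution statistically tracks the abort event in the real one, and composing this statistical gap with the computational gaps in (H$_1$)--(H$_3$) yields the theorem.
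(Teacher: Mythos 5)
Your proposal is correct and follows essentially the same route as the paper's proof: a simulation-based hybrid argument that extracts $\langle\mathbf{u}\rangle_1$ from the OT, uses authenticity of the garbling scheme to replace the unevaluated-branch ciphertexts with uniform strings, uses the ideal functionality's output shares to program the $c_i,d_i,e_i$ values, invokes GC security to remove the dependence on $\langle\mathbf{u}\rangle_0$, and observes that the Beaver-triple openings $\Gamma,\Lambda$ are already uniform. The only cosmetic differences are the ordering of the hybrids and your explicit discussion of the MAC-soundness bound for tampered openings, which the paper defers to the end-to-end consistency-check analysis of Theorem 2.
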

\begin{proof}
Please refer to Appendix~\ref{proof of th1}.
\end{proof}

\section{Secure Inference}
\label{Secure Inference}
During inference, the server $S_0$'s input is the weights of all linear layer, \textit{i.e.}, $\mathbf{N}_1, \cdots, \mathbf{N}_m$  while the input of $S_1$ is $\mathbf{t}$.  The goal of our secure inference protocol $\pi_{inf}$ is to learn $\mathrm{NN}(\mathbf{t}_0)$ for the client $S_1$.  Given a  trained neural network consisted of linear layers and non-linear layers,  SIMC 2.0 utilizes the HE-GC-based techniques as the underlying architecture to perform secure inference, where linear operations (including matrix multiplication and convolution functions) are performed with our optimized HE algorithm, and non-linear operations (including ReLU and MaxPool) are encapsulated into the GC for efficient computation.

The  whole inference process  can be  divided into two phases. (1) \textit{Evaluation} phase: we perform the computation of alternating linear and nonlinear layers with appropriate parameters. (2) \textit{Consistency checking} phase: after evaluation, the server verifies the consistency of the calculations so far. The output will be released to the client if the check passes. In Appendix~\ref{protocol of secure infernece}, we provide details of the complete inference protocol, and give a formal theoretical analysis to validate the security of the designed protocol under the malicious client threat model.  In particular, we claim that SIMC 2.0 has the following security property:
\begin{theorem}
 Our secure inference protocol $\pi_{inf}$ is secure against a semi-honest server $S_0$ and any malicious adversary $\mathcal{A}$ that corrupts the client $S_1$.
\end{theorem}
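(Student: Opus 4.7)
The plan is to follow the standard simulation paradigm: for each corrupted party we construct an efficient simulator $\mathsf{Sim}$ that, interacting only with the ideal functionality, produces a view computationally indistinguishable from the real execution of $\pi_{inf}$. The theorem has two independent cases, a semi-honest $S_0$ and a malicious $S_1$, and each needs its own simulator. The semi-honest server case is straightforward by reduction to the semantic security of the HE scheme used in $\pi_{\mathtt{Lin}}$ together with the security of the garbling scheme (Section~\ref{Garbled Circuits}), since all messages $S_0$ receives are either FHE ciphertexts encrypting random shares, OT-sender messages, or the hash check at the end; we therefore focus the plan on the malicious-client case, which is the substantive direction.

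For a malicious client $S_1^*$, the simulator $\mathsf{Sim}_{S_1}$ plays the role of $S_0$ interactively with $S_1^*$ and must eventually extract a well-defined input $\mathbf{t}_0^*$ to forward to the ideal functionality. The plan is a hybrid argument walking layer by layer through $\pi_{inf}$. In the first hybrid $\mathsf{Sim}_{S_1}$ emulates the initial $\mathtt{InitLin}$ invocation exactly as in $\pi_{\mathtt{Lin}}$ but using $\mathtt{KeyGen}$ and a uniformly random ``weight'' matrix; by function-privacy and semantic security of FHE this is indistinguishable from real. At each intermediate linear layer $\mathbf{N}_i$, $\mathsf{Sim}_{S_1}$ substitutes FHE encryptions of the correct MAC'd shares with encryptions of uniformly random values, again invoking semantic security; the extracted intermediate value $\mathbf{u}_i^*$ together with its authenticated tag is tracked in $\mathsf{Sim}_{S_1}$'s internal state. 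At each non-linear layer $\mathsf{Sim}_{S_1}$ invokes the simulator guaranteed by Theorem~1 for $\pi^{f}_{\mathtt{Non-lin}}$, which can be plugged in as a sub-simulator because its ideal functionality (Figure~\ref{Functionality of the nonlinear layer}) delivers exactly the authenticated shares consumed by the next linear layer.

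The main obstacle, and the place where the reasoning cannot simply quote Theorem~1, is the \emph{consistency-checking} phase: we must show that whenever $S_1^*$ deviates in any of the many OT inputs, FHE plaintext shares, or GC-output shares exposed during evaluation, the hash/authentication check at the end rejects except with probability negligible in the statistical parameter. The plan here is to define a ``tampering event'' as any round in which $S_1^*$'s contribution is inconsistent with some fixed extracted $\mathbf{t}_0^*$; conditioned on no tampering, indistinguishability follows from the earlier hybrids. Conditioned on tampering at layer $i$, the authenticated shares property from Section~\ref{Secret Sharing} bounds by $2^{-\lfloor \log p\rfloor}$ the probability that the MAC equation $\alpha(x+\beta)=\alpha x+\beta'$ still holds at the final check, and authenticity of the garbling scheme bounds the probability of guessing an unseen output label. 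A union bound over the $O(m\kappa)$ places a malicious client can deviate gives a negligible overall cheating probability, after which $\mathsf{Sim}_{S_1}$ simply forwards $\mathbf{t}_0^*$ to the ideal functionality and forwards $\mathrm{NN}(\mathbf{t}_0^*)$ back to $S_1^*$.

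Finally we assemble the hybrids into a sequence $\mathrm{Real}\approx H_0\approx H_1\approx\cdots\approx H_m\approx \mathrm{Ideal}$, where the linear-layer transitions are justified by semantic security and function-privacy of FHE, the non-linear transitions by Theorem~1, and the terminal transition by the unforgeability argument above. The hardest technical step, as noted, is the global accounting of deviations in the consistency check; everything else is a direct composition of already-proved components and mirrors the structure of the SIMC security proof, which can be cited for routine book-keeping rather than re-derived in full.
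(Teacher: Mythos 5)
Your overall architecture is the same as the paper's: simulate the semi-honest server by replacing everything it sees with uniform shares, and for the malicious client compose the sub-protocol simulators layer by layer, extract the client's input at the first linear layer, and argue that any deviation is caught by the final consistency check via the $2^{-\lfloor \log p \rfloor}$ unforgeability of authenticated shares. The paper organizes the malicious-client argument by parameterizing every possible deviation as an additive error ($\Delta_j^1,\Delta_j^2,\Delta_j^3,\Delta^4$) on the inputs to \textbf{InitLin}, \textbf{Lin}, $\pi^{f}_{\mathtt{Non-lin}}$ and the final check, and has the simulator abort whenever some $\Delta\neq 0$; your ``tampering event'' plus a union bound over the $O(m\kappa)$ deviation points is the same argument in hybrid clothing, and is if anything slightly more careful than the paper, which asserts abort ``with probability $1$'' and defers the probability bookkeeping to SIMC.

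There is, however, one step whose justification would fail as written. In the client-malicious direction you repeatedly invoke \emph{semantic security} of the FHE scheme to replace the server's reply ciphertexts with encryptions of uniformly random values. But in this protocol the \emph{client} holds the decryption key: step~1 of \textbf{InitLin} gives $(pk,sk)$ to $S_1$ and only $pk$ to $S_0$. Semantic security therefore gives you nothing against the corrupted party --- the distinguisher can simply decrypt, and the reduction to IND-CPA cannot be carried out. What actually makes the replacement sound is (i) the plaintexts of the server's replies are of the form $\mathbf{Nt}-\langle \mathbf{Nt}\rangle_0$, i.e.\ already one-time-padded by the server's fresh uniform share, so they are uniform in $\mathbb{F}_p^{n_o}$ regardless of $\mathbf{N}$, and (ii) function privacy --- which you mention only for the first hybrid --- guarantees that the ciphertexts themselves reveal nothing about the homomorphic evaluation beyond those uniform plaintexts. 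With the justification corrected in this way your hybrid sequence goes through and coincides with the paper's proof; the remaining ingredients of your plan (extraction via the zero-knowledge proof of plaintext knowledge, plugging in the Theorem~1 simulator for the non-linear layers, and the MAC-based rejection argument for the consistency check) are exactly what the paper does.
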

\begin{proof}
Please refer to Appendix~\ref{protocol of secure infernece}.
\end{proof}

\section{Performance Evaluation}
\label{sec:PERFORMANCE EVALUATION}
In this section, we conduct experiments to demonstrate the performance of SIMC 2.0. Since SIMC 2.0 is derived from SIMC \cite{chandran2021simc}\footnote{Code is available at \url{https://aka.ms/simc}.}, the most advanced solution for secure inference under the client-malicious model,  we take it as the baseline for comparing the computation and communication overheads. In more detail, we first describe the performance comparison of these two approaches for linear layer computation (including matrix-vector computation and convolution), and then discuss their overheads in secure computation of non-linear layers. Finally, we demonstrate the end-to-end performance advantage of SIMC 2.0 compared to SIMC for various mainstream DNN models (\textit{e.g.}, AlexNet, VGG-16, ResNet-18, ResNet-50, ResNet-100,  and ResNet-152).

\subsection{Implementation Details}
Consistent with SIMC, SIMC 2.0 uses the homomorphic encryption library provided by SEAL \cite{sealcrypto} (the maximum number $n$ of slots allowed for a single ciphertext is set to 4096) to realize the calculation of linear layers, and uses the components of the garbled circuit in the EMP toolkit \cite{wang2016emp} (with the OT protocol that resists active adversaries) to realize the execution of nonlinear layers. As a result, SIMC 2.0 provides 128 bits of computational security and 40 bits of statistical security. Like SIMC, our system is implemented on the 44-bit prime field. Other parameters, such as the configuration of the zero-knowledge proof protocol \cite{chen2020maliciously} and the random numbers used to verify the consistency of the calculation, are completely inherited from the SIMC settings (refer to \cite{chandran2021simc} for more details). Our experiments are carried out in both the LAN and WAN settings. LAN is implemented with two workstations in our lab. The client workstation has AMD EPYC 7282 1.4GHz CPUs with 32 threads on 16 cores and 32GB RAM. The server workstation has Intel(R) Xeon(R) E5-2697 v3 2.6GHz CPUs with 28 threads on 14 cores and 64GB RAM. The WAN setting is based on a connection between a local PC and an Amazon AWS server with an average bandwidth of 963Mbps and running time of around 14ms.


\subsection{Performance of Linear Layers}
\label{Cost Comparison of Linear Layers}
We compare the cost of SIMC 2.0 and SIMC in the linear layer. As mentioned in Section~\ref{Sec:our method}, given a $n_o\times n_i$-dimensional matrix and a vector of length $n_i$, the matrix-vector multiplication in SIMC only produces one ciphertext, while our method derives $n_i\cdot n_o/(l\cdot n)$ ciphertexts. The smaller $l$ is, the greater the computational advantage our scheme obtains compared to SIMC. For comparison simplicity, we set $l=n_i\cdot n_o/( n)$ to ensure that SIMC outputs only one ciphertext for matrix-vector multiplication, which makes SIMC and SIMC 2.0 have exactly the same communication overhead at the linear layer, so that we can focus on the comparison of computation cost.

\subsubsection{Matrix-Vector Multiplication}

\renewcommand\tablename{TABLE}
\renewcommand \thetable{\Roman{table}}
\setcounter{table}{1}
\setcounter{figure}{5}
\begin{table}[htb]
\centering
\small
\caption{Cost of matrix-vector multiplication}
\label{Complexity of Matrix-Vector Multiplication}
\setlength{\tabcolsep}{1.6mm}{
\resizebox{\linewidth}{!}{\begin{tabular}{c|c|c|c|c|c|c|c}
\Xhline{1pt}
\multirow{3}*{\textbf{Dimension}} &\multirow{3}*{\textbf{Metric}}&\multicolumn{2}{c|}{\textbf{\# operations}} &\multicolumn{4}{c}{\textbf{Running time (ms)} }\\ \cline{3-8}
&&\multirow{2}*{\textbf{SIMC}}&\multirow{2}*{\textbf{SMIC 2.0}} & \multicolumn{2}{c|}{\textbf{SIMC}} & \multicolumn{2}{c}{\textbf{SIMC 2.0 (Speedup)}} \\
& & & & LAN & WAN & LAN & WAN\\
\Xhline{1pt}
\multirow{3}*{$1\times 4096$}&Rotation& 12& \textbf{0}&\multirow{3}*{14.4}&\multirow{3}*{39.9}&\multirow{3}*{\makecell[c]{\textbf{1.6}\\\textbf{ (8.8$\times$)}}}&\multirow{3}*{\makecell[c]{\textbf{28.5}\\\textbf{(1.4$\times$)}}}\\
&ScMult& 1& \textbf{1}&&&&\\
&Add& 12&  \textbf{0}&&&&\\ \hline
\multirow{3}*{$2\times 2048$}&Rotation& 11& \textbf{0}&\multirow{3}*{13.4}&\multirow{3}*{38.7}&\multirow{3}*{\makecell[c]{\textbf{3.0}\\ \textbf{ (4.4$\times$)}}}&\multirow{3}*{\makecell[c]{\textbf{27.6}\\\textbf{ (1.4$\times$)}}}\\
&ScMult& 1& \textbf{1}&&&&\\
&Add& 11&  \textbf{0}&&&&\\ \hline
\multirow{3}*{$4\times 1024$}&Rotation& 10& \textbf{0}&\multirow{3}*{13.1}&\multirow{3}*{38.8}&\multirow{3}*{\makecell[c]{\textbf{2.98}\\\textbf{ (4.4$\times$)}}}&\multirow{3}*{\makecell[c]{\textbf{27.7}\\\textbf{ (1.4$\times$)}}}\\
&ScMult& 1& \textbf{1}&&&&\\
&Add& 10&  \textbf{0}&&&&\\ \hline
\multirow{3}*{$8\times 512$}&Rotation& 9& \textbf{0}&\multirow{3}*{11.3}&\multirow{3}*{36.6}&\multirow{3}*{\makecell[c]{\textbf{3.8}\\\textbf{2.97 ($\times$)}}}&\multirow{3}*{\makecell[c]{\textbf{28.1}\\\textbf{ (1.3$\times$)}}}\\
&ScMult& 1& \textbf{1}&&&&\\
&Add& 9&  \textbf{0}&&&&\\ \hline
\multirow{3}*{$16\times 256$}&Rotation& 8& \textbf{0}&\multirow{3}*{10.8}&\multirow{3}*{36.5}&\multirow{3}*{ \makecell[c]{\textbf{3.01}\\\textbf{ (3.5$\times$)}}}&\multirow{3}*{\makecell[c]{\textbf{28}\\\textbf{ (1.3$\times$)}}}\\
&ScMult& 1& \textbf{1}&&&&\\
&Add& 8&  \textbf{0}&&&&\\ \hline
\multirow{3}*{$32\times 128$}&Rotation& 7& \textbf{0}&\multirow{3}*{10.1}&\multirow{3}*{35.7}&\multirow{3}*{\makecell[c]{\textbf{3.1}\\\textbf{ (3.3$\times$)}}}&\multirow{3}*{\makecell[c]{\textbf{29.1}\\\textbf{ (1.2$\times$)}}}\\
&ScMult& 1& \textbf{1}&&&&\\
&Add& 7&  \textbf{0}&&&&\\ \Xhline{1pt}
\end{tabular}}}
\end{table}

\textbf{TABLE}~\ref{Complexity of Matrix-Vector Multiplication} provides the computation complexity of SIMC and SIMC 2.0 for matrix-vector multiplication with different matrix dimensions. We observe that SIMC 2.0 greatly reduces the number of most expensive rotation operations (zero time) in HE while SIMC requires up to 12 operations (which can be calculated using the formulas in \textbf{TABLE}~\ref{Complexity of Each  Method}). In addition, SIMC 2.0 is very friendly to other homomorphic operations, including addition and multiplication. For example, the execution process involves only 1 multiplication compared to SIMC. The running time is also provided in \textbf{TABLE}~\ref{Complexity of Matrix-Vector Multiplication}, which quantifies the entire time cost for one inference session, including client processing, server processing and network latency. We observe that in the LAN setting,  SIMC 2.0 can achieve the speedup by up to $8.8\times$  for matrix-vector multiplication of different dimensions. This is mainly due to our optimization strategy for this operation. Specifically,  we first divide the server's original $n_o\times n_i$-dimensional weight matrix into multiple submatrices and calculate each submatrix separately. It can effectively reduce the number of rotations required for the client's ciphertext from $n_o-1$ to $l-1$. Then we convert the subsequent rotation operations by summing in the ciphertext into the plaintext (see Section~\ref{Sec:our method}), thereby removing all the homomorphic rotation operations required at this stage.

We observe that the speedup of SIMC 2.0 is relatively smaller under the WAN setting, which is mainly caused by the communication latency between the local client and the cloud server. This dominates the total running time compared to the computation cost of the lightweight HE. Particularly, the running time is round 14 milliseconds in our setting, while the optimized HE in SIMC 2.0 only takes about 1 to 3 milliseconds.

\subsubsection{Convolution Computation}
The complexity and running time of the convolution operation for different input sizes and kernel sizes are provided in \textbf{TABLE}~\ref{Complexity of Convolution Computation}, where the encrypted input is a data sample of size $u_w\times u_h$ with $c_i$ channels (denoted as $u_w\times u_h @c_i$) and the server holds kernels with the size of $k_w\times k_h @c_o$ and $c_i$ channels per kernel. We observe that our method substantially reduces the number of rotation operations required to compute the convolution compared to SIMC. For example, SIMC 2.0 reduces the number of rotation operations  by up to $127\times$ compared to SIMC, given the input size of $16\times 16@2048$ and  kernel size of $1\times1@512$. This benefits from our designed kernel grouping method, which reduces the most expensive rotation operation by a factor of  $\frac{c_i}{c_n}$ (refer to Appendix~\ref{Optimization for convolution operations} for more details). The results indicate that a large speedup can be obtained if the input has more channels and a small kernel size. This is very beneficial for modern models which commonly have such characteristics.
For the running time,
it is obvious that our method significantly accelerates the inference execution. In more detail, in the LAN setting, SIMC 2.0 achieves speedups of $5.7\times$, $17.4\times$, $2.5\times$, and $1.7\times$ compared to SIMC.  In the WAN setting, SIMC 2.0 also shows similar performance superiority.
\begin{table}[htb]
\centering
\small
\caption{Cost of convolution computation}
\label{Complexity of Convolution Computation}
 \setlength{\tabcolsep}{1.0mm}{\resizebox{\linewidth}{!}{
\begin{tabular}{c|c|c|c|c|c|c|c|c}
\Xhline{1pt}
\multirow{3}*{\textbf{Input}}&\multirow{3}*{\textbf{Kernel}} &\multirow{3}*{\textbf{Metric}}&\multicolumn{2}{c|}{\textbf{\#operations}} &\multicolumn{4}{c}{\textbf{Running time (s)} }\\ \cline{4-9}
&&&\multirow{2}*{\textbf{SIMC}}&\multirow{2}*{\textbf{Ours}}&\multicolumn{2}{c|}{\textbf{SIMC}}&\multicolumn{2}{c}{\textbf{SIMC 2.0 (Speedup)}}\\
&&&&&LAN&WAN&LAN&WAN\\
\Xhline{1pt}
\multirow{3}*{\makecell[c]{$16\times16$ \\$@128$}}&\multirow{3}*{\makecell[c]{$1\times1$ \\$@128$}}&Rotation& 960& \textbf{120}&\multirow{3}*{1.3}&\multirow{3}*{1.4}&\multirow{3}*{\makecell[c]{\textbf{0.23}\\\textbf{(5.7$\times$)}}}&\multirow{3}*{\makecell[c]{\textbf{0.35}\\\textbf{(4.0$\times$)}}}\\
&&ScMult& 1024& \textbf{1024}&&&&\\
&&Add& 1016&  \textbf{1016}&&&&\\ \Xhline{1pt}
\multirow{3}*{\makecell[c]{$16\times16$ \\$@2048$}}&\multirow{3}*{\makecell[c]{$1\times1$ \\$@512$}}&Rotation& 61140& \textbf{480}&\multirow{3}*{79.9}&\multirow{3}*{81.0}&\multirow{3}*{\makecell[c]{\textbf{4.60}\\\textbf{(17.4$\times$)}}}&\multirow{3}*{\makecell[c]{\textbf{5.13}\\\textbf{(15.8$\times$)}}}\\
&&ScMult& 65536& \textbf{65536}&&&&\\
&&Add& 65504&  \textbf{65504}&&&&\\ \Xhline{1pt}
\multirow{3}*{\makecell[c]{$16\times16$ \\$@128$}}&\multirow{3}*{\makecell[c]{$3\times3$ \\$@128$}}&Rotation& 1024& \textbf{184}&\multirow{3}*{2.6}&\multirow{3}*{2.8}&\multirow{3}*{\makecell[c]{\textbf{1.04}\\\textbf{(2.5$\times$)}}}&\multirow{3}*{\makecell[c]{\textbf{1.22}\\\textbf{(2.3$\times$)}}}\\
&&ScMult& 9216& \textbf{9216}&&&&\\
&&Add& 9208&  \textbf{9208}&&&&\\ \Xhline{1pt}
\multirow{3}*{\makecell[c]{$16\times16$ \\$@2048$}}&\multirow{3}*{\makecell[c]{$5\times5$ \\$@64$}}&Rotation& 10752& \textbf{3132}&\multirow{3}*{41.3}&\multirow{3}*{42.0}&\multirow{3}*{\makecell[c]{\textbf{24.3}\\\textbf{(1.7$\times$)}}}&\multirow{3}*{\makecell[c]{\textbf{26.3}\\\textbf{(1.6$\times$)}}}\\
&&ScMult& 204800& \textbf{204800}&&&&\\
&&Add& 204796&  \textbf{204796}&&&&\\ \Xhline{1pt}
\end{tabular}}}
\end{table}

\subsection{Performance of Non-linear Layers}
\label{Cost Comparison of Non-linear Layers}
We compare the computation and communication overheads of SIMC and our SIMC 2.0 in the implementation of non-linear layers. We mainly show the cost required by the two schemes to securely execute different numbers of ReLU functions.

\subsubsection{Computation Overhead} \textbf{Figures}~\ref{FIGa} and \ref{FIGb} show the running time of SIMC and SIMC 2.0 for different numbers of ReLU functions under the LAN and WAN settings, respectively. We observe that SIMC 2.0 reduces the running time by about one-third compared to SIMC.  Our improvement mainly comes from the optimization mechanism designed for the computation of non-linear layers. As mentioned earlier, compared to SIMC, we only encapsulate the non-linear part of ReLU into the GC. The goal of this strategy is to substantially reduce the number of AND gates required in the GC (AND operations are usually expensive in the GC). For example, given GC's security parameter $\lambda=128$ bits and 44-bit prime field, SIMC needs  249 AND gates to calculate ReLU while we only need  161 AND gates. This saves 88 AND gate operations! Moreover, the alternative protocol we designed is also resource-friendly, requiring only the server and client to exchange two elements and perform simple mathematical operations locally.
\subsubsection{Communication Overhead}
\begin{figure}[htb]
  \centering
  \subfigure[]{\label{FIGa}\includegraphics[width=0.24\textwidth]{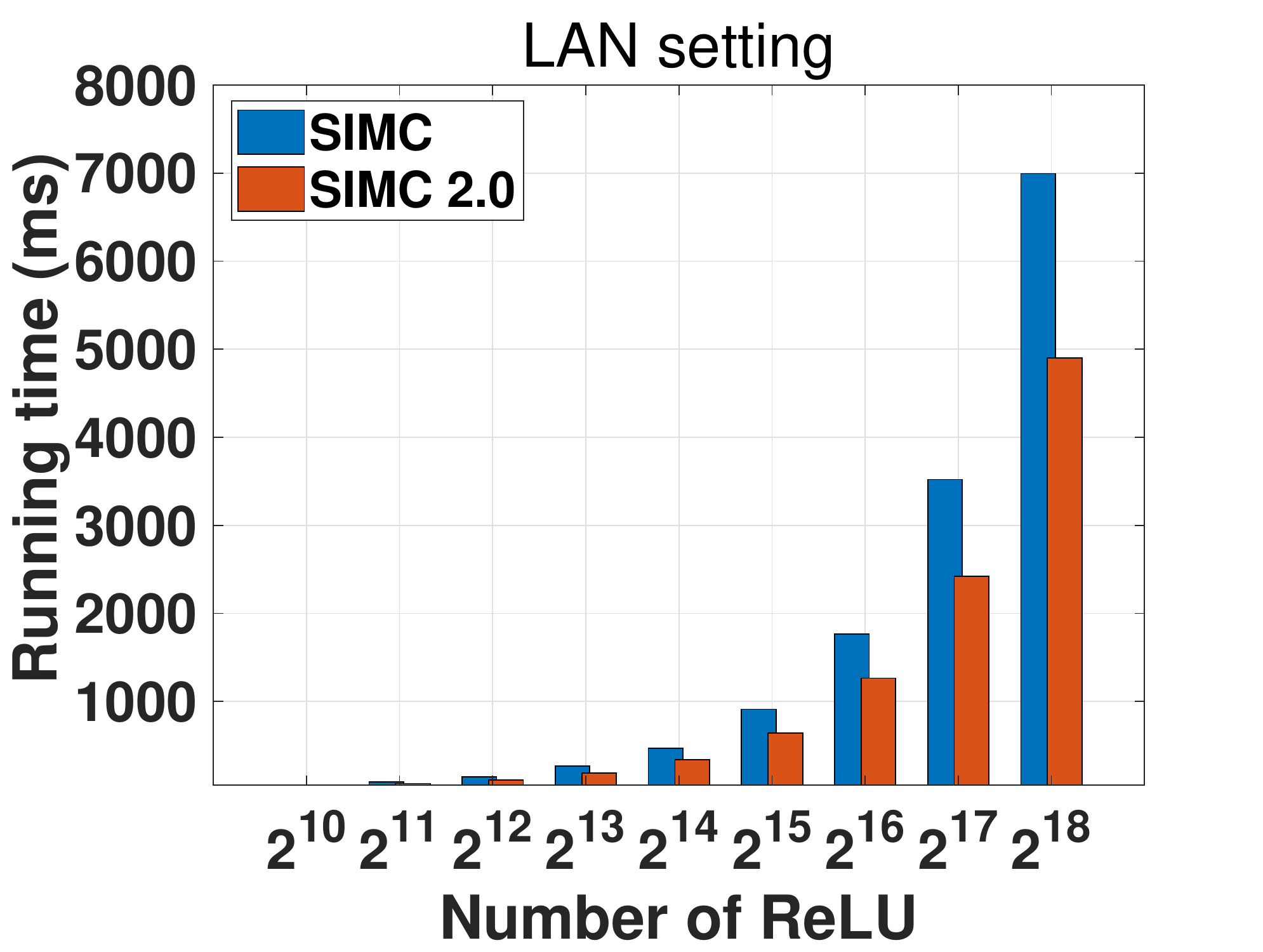}}
 \subfigure[]{\label{FIGb}\includegraphics[width=0.24\textwidth]{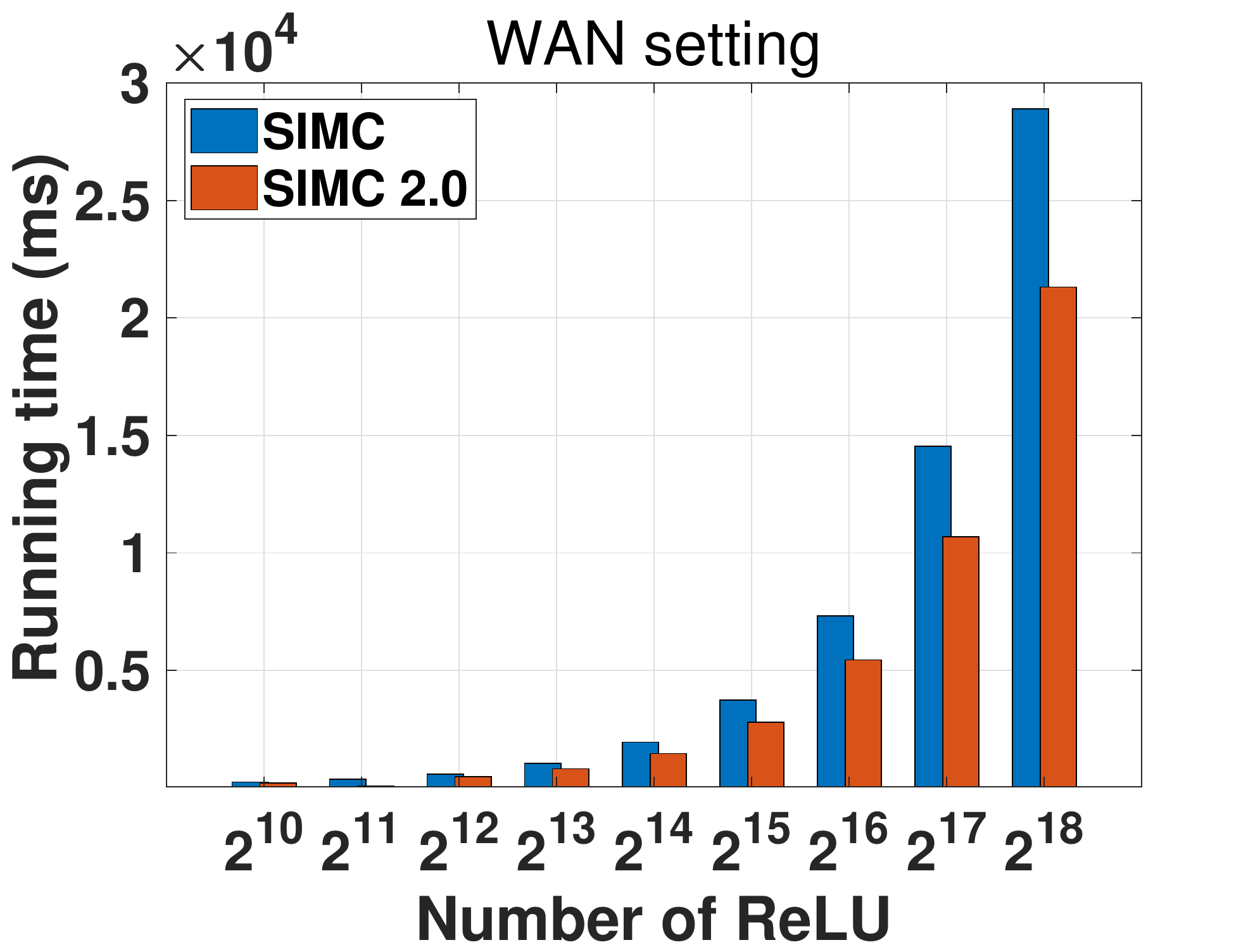}}
  \subfigure[]{\label{FIGc}\includegraphics[width=0.24\textwidth]{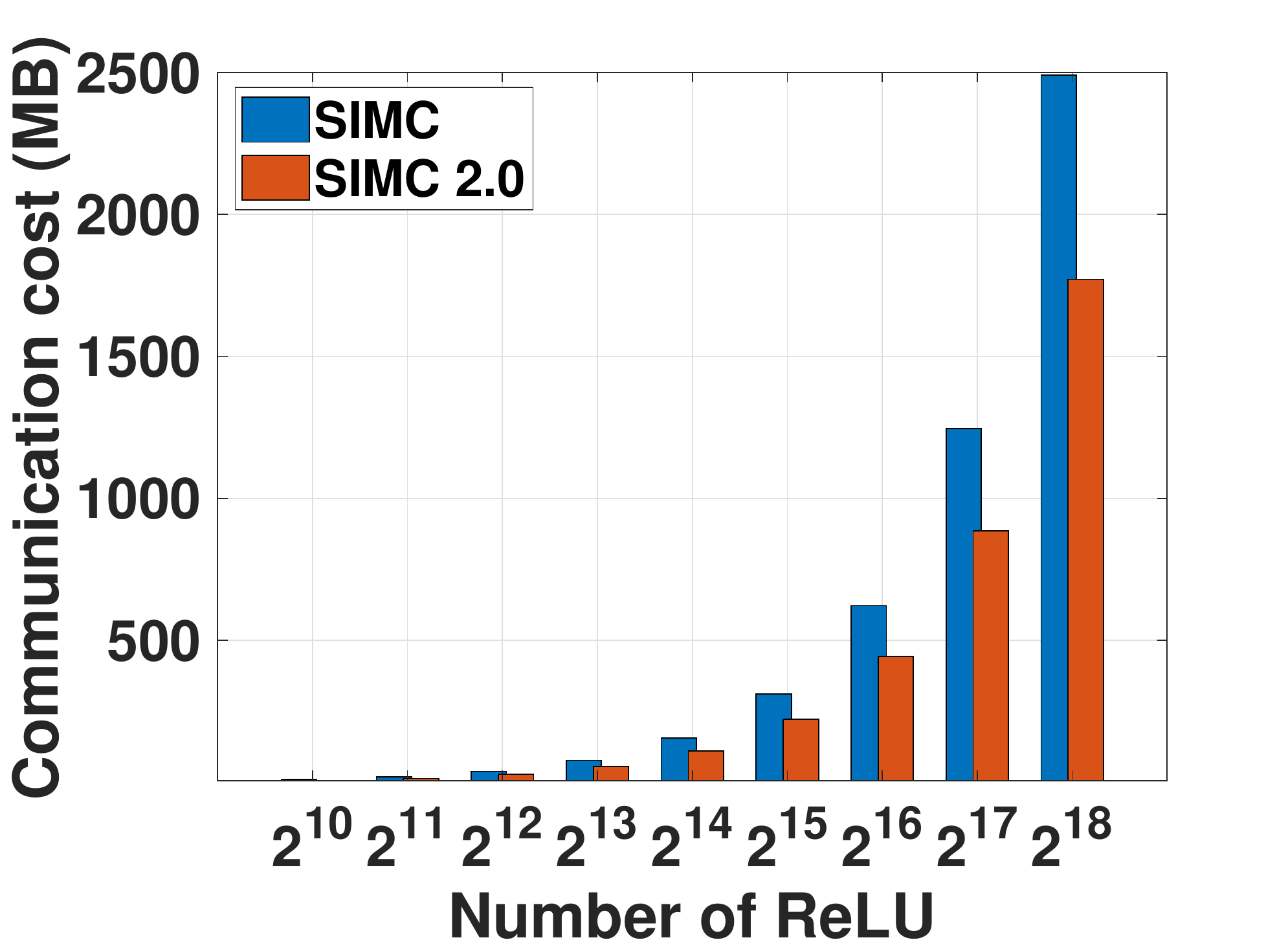}}
  \caption{Comparison of the nonlinear layer overhead. (a) Running time under the LAN setting.  (b) Running time under the WAN setting. (c) Communication overhead with different numbers of ReLU functions.}
  \label{Non-layer cost}
\end{figure}
\textbf{Figure}~\ref{FIGc} provides the communication overhead incurred by SIMC and our method in computing  different numbers of ReLU functions. We observe that SIMC 2.0 reduces the communication overhead of SIMC by approximately one-third. This also benefits from our optimized GC to perform non-linear operations. In more detail, given GC's security parameter $\lambda=128$ bits and 44-bit prime field, SIMC requires to communicate 11.92 KB data to perform one ReLU. In contrast, since our optimized GC saves about 88 AND gates, which needs 2.69 KB communication data for each ReLU, our SIMC 2.0 only requires 9.21 KB data transfer for each ReLU. Note that such communication advantages are non-trivial since mainstream DNN models usually contain tens of thousands of ReLU functions.

\subsection{Performance of End-to-end Secure Inference}
\label{Secure Inference Performance}
Finally we compare the computation and communication costs of SIMC and SIMC 2.0 for different complete DNN models. We employ 7 ML models in our experiments: (1) a multi-layer perceptron with the network structure of 784-128-128-10, which is often used for benchmarking private-preserving model deployment (\textit{e.g.}, SecureML \cite{mohassel2017secureml}, MiniONN \cite{liu2017oblivious}, GAZELLE). This model is trained with the MNIST dataset. (2) Some mainstream CNN models: AlexNet\cite{yu2016visualizing}, VGG-16 \cite{simonyan2014very}, ResNet-18 \cite{he2016deep}, ResNet-50 \cite{he2016deep}, ResNet-101 \cite{he2016deep}, and ResNet-152 \cite{he2016deep}. These models are trained over the CIFAR-10 dataset.
\begin{table}[htb]
\centering
\small
\caption{Cost of end-to-end model inference}
\label{Cost of Secure Inference}
 \setlength{\tabcolsep}{0.01mm}{\resizebox{\linewidth}{!}{
\begin{tabular}{c|c|c|c|c|c|c|c|c|c}
\Xhline{1pt}
\multirow{3}*{\textbf{Model}} &\multirow{3}*{\textbf{Metric}}&\multicolumn{2}{c|}{\textbf{\#operations}} &\multicolumn{4}{c}{\textbf{Running time (s)} }&\multicolumn{2}{|c}{\textbf{Comm. (GB)} }\\ \cline{3-10}
&&\multirow{2}*{\textbf{SIMC}}&\multirow{2}*{\textbf{Ours}}&\multicolumn{2}{c|}{\textbf{SIMC}}&\multicolumn{2}{c|}{\textbf{SIMC 2.0 (Speedup)}}&\multirow{2}*{\textbf{SIMC}}&\multirow{2}*{\textbf{Ours}}\\
&&&&LAN&WAN&LAN&WAN&&\\
\Xhline{1pt}
\multirow{3}*{MLP}&Rotation& 70& \textbf{55}&\multirow{3}*{0.14}&\multirow{3}*{0.21}&\multirow{3}*{\quad\makecell[c]{\textbf{0.11}\\\textbf{(1.3$\times$)}}\quad}&\multirow{3}*{\makecell[c]{\textbf{0.18}\\\textbf{(1.2$\times$)}}}&\multirow{3}*{<0.01}&\multirow{3}*{<0.01}\\
&ScMult& 56& \textbf{56}&&&&&&\\
&Add& 70&  \textbf{55}&&&&&&\\ \Xhline{1pt}
\multirow{3}*{AlexNet}&Rotation& 72549& \textbf{4394}&\multirow{3}*{81.0}&\multirow{3}*{85.8}&\multirow{3}*{\makecell[c]{\textbf{35.2}\\\textbf{(2.3$\times$)}}}&\multirow{3}*{\makecell[c]{\textbf{47,7}\\\textbf{(1.8$\times$)}}}&\multirow{3}*{0.42}&\multirow{3}*{0.29}\\
&ScMult& 931977& \textbf{931977}&&&&&&\\
&Add& 931643&  \textbf{931630}&&&&&&\\ \Xhline{1pt}
\multirow{3}*{VGG-16}&Rotation& 315030& \textbf{10689}&\multirow{3}*{72.4}&\multirow{3}*{97.6}&\multirow{3}*{\makecell[c]{\textbf{34.5}\\\textbf{(2.1$\times$)}}}&\multirow{3}*{\makecell[c]{\textbf{54.2}\\\textbf{(1.8$\times$)}}}&\multirow{3}*{5.81}&\multirow{3}*{4.13}\\
&ScMult& 3677802& \textbf{3677802}&&&&&&\\
&Add& 3676668&  \textbf{3676654}&&&&&&\\ \Xhline{1pt}
\multirow{3}*{ResNet-18}&Rotation& 234802& \textbf{9542}&\multirow{3}*{77.1}&\multirow{3}*{118.3}&\multirow{3}*{\makecell[c]{\textbf{35.0}\\\textbf{(2.2$\times$)}}}&\multirow{3}*{\makecell[c]{\textbf{65.6}\\\textbf{(1.8$\times$)}}}&\multirow{3}*{4.27}&\multirow{3}*{3.04}\\
&ScMult& 2737585& \textbf{2737585}&&&&&&\\
&Add&2736632&  \textbf{2736624}&&&&&&\\ \Xhline{1pt}
\multirow{3}*{ResNet-50}&Rotation& 2023606& \textbf{50666}&\multirow{3}*{434.8}&\multirow{3}*{683.9}&\multirow{3}*{\makecell[c]{\textbf{111.3}\\\textbf{(3.9$\times$)}}}&\multirow{3}*{\makecell[c]{\textbf{296.4}\\\textbf{(2.3$\times$)}}}&\multirow{3}*{29.50}&\multirow{3}*{21.00}\\
&ScMult& 5168181& \textbf{5168181}&&&&&&\\
&Add& 5162524&  \textbf{5162518}&&&&&&\\ \Xhline{1pt}
\multirow{3}*{ResNet-101}&Rotation& 3947190& \textbf{113770}&\multirow{3}*{802.3}&\multirow{3}*{1195.2}&\multirow{3}*{\makecell[c]{\textbf{186.5}\\\textbf{(4.3$\times$)}}}&\multirow{3}*{\makecell[c]{\textbf{478.1}\\\textbf{(2.5$\times$)}}}&\multirow{3}*{46.13}&\multirow{3}*{31.85}\\
&ScMult& 9903157& \textbf{9903157}&&&&&&\\
&Add& 9890972&  \textbf{9890964}&&&&&&\\ \Xhline{1pt}
\multirow{3}*{ResNet-152}&Rotation& 5533878& \textbf{169450}&\multirow{3}*{1209.2}&\multirow{3}*{1780.1}&\multirow{3}*{\makecell[c]{\textbf{274.8}\\\textbf{(4.4$\times$)}}}&\multirow{3}*{\makecell[c]{\textbf{684.6}\\\textbf{(2.6$\times$)}}}&\multirow{3}*{64.25}&\multirow{3}*{46.28}\\
&ScMult& 13802549& \textbf{13802549}&&&&&&\\
&Add& 13784604&  \textbf{13784596}&&&&&&\\ \Xhline{1pt}
\end{tabular}}}
\end{table}

\textbf{TABLE}~\ref{Cost of Secure Inference} provides the computation complexity of SIMC and our method for different models. It is clear that SIMC 2.0 reduces the number of rotation operations in SIMC by $16.5\times$, $29.4\times$, $24.6\times$, $39.9\times$, $34.6\times$ and $50.5\times$ for AlexNet, VGG-16, ResNet-18, ResNet-50, ResNet-10, and ResNet-152, respectively.  The fundamental reason for this improvement is that our designed optimization technique is for HE-based linear operations. We observe that SIMC 2.0 does not significantly reduce the number of rotation operations in the MLP model. It stems from the small ratio between the number of slots and the dimension of the output in the MLP, which limits the performance gain. \textbf{TABLE}~\ref{Cost of Secure Inference} also shows the running time and the corresponding speedup of our scheme. Since we substantially reduce the number of expensive rotation operations in the linear layer, and design an optimized GC for the nonlinear layer, our method obtains speedups of $2.3\times$, $2.1\times$, $2.2\times$, $3.9\times$, $4.3\times$ and $4.4\times$ for AlexNet, VGG-16, ResNet-18, ResNet-50, ResNet-101, and ResNet-152, respectively, under the LAN Setting. A similar performance boost is obtained under the WAN setting. Due to network latency in the WAN, the running time of both SIMC and our method is inevitably increased. This can reduce the performance advantage of SIMC 2.0, but our solution is still better than SIMC.

It has been demonstrated that for the HE-GC based secure inference model, the communication overhead caused by the non-linear layer dominates the communication in SIMC (refer to Seciton~5 in SIMC \cite{chandran2021simc}). In our experiments, the communication cost of linear layers in SIMC 2.0 remains consistent with SIMC, while the overhead of nonlinear layers is reduced by two-thirds. Therefore, we observe that the size of communication data required by SIMC 2.0 is still smaller than SIMC, remaining approximately between $2/3\sim3/4$ of the original size, depending on the model size. For example, for the ResNet-152 model, SIMC requires to transfer 64.25 GB data to perform an inference process while our SIMC 2.0 saves about 18GB in the communication overhead, which is quite impressive.

\section{Conclusion}
\label{sec:conclusion}
In this paper, we proposed SIMC 2.0, which significantly improves the performance of SIMC for secure ML inference in a threat model with a malicious client and honest-but-curious server. We designed a new coding method to minimize the number of costly rotation operations during homomorphic parallel matrix and vector computations in the linear layers. We also greatly reduced the size of the GC in SIMC to substantially speed up operations in the non-linear layers. In the future, we will focus on designing more efficient optimization strategies to further reduce the computation overhead of SIMC 2.0, to make secure ML inference more suitable for a wider range of practical applications.


\ifCLASSOPTIONcaptionsoff
\newpage \fi

\bibliographystyle{IEEEtran}
\bibliography{PPDR}

\clearpage

\appendices
\balance
\section*{Appendix}
\label{sec:APPENDIX}
\setcounter{section}{0}

\section{Threat Model}
\label{A:threat model}
We present the formal security using the simulation paradigm \cite{lindell2017simulate}. Security is modeled through two interactive protocols: a real interactive protocol where $S_0$ and $S_1$ execute the protocol in the presence of an adversary $\mathbf{A}$ and an environment $Z$, and an ideal interactive protocol where the parties send input to a trusted third party who performs the computation faithfully. Security requires that for  any adversary $\mathbf{A}$, there is a simulator $\mathbf{S}$ in the ideal interaction, which enables no environment $\mathbf{Z}$ to distinguish between the ideal and the real interactions. Specifically, Let $f=(f_0, f_1)$ be the functionality of the two parties so that $S_0$ learns $f_0(x, y)$ and $S_1$ learns $f_1(x, y)$ with $x$ and $y$ as inputs, respectively. Hence, we say that a protocol $\pi$ securely implements $f$ on the $client-malicious$ adversary model  if the following properties are held.
\begin{itemize}

\item  \textbf{Correctness}: If $S_0$ and $S_1$ are both honest, then $S_0$ gets $f_0(x, y)$ and $S_1$ gets $f_1(x,y)$ from the execution of $\pi$ on the inputs  $x$ and $y$, respectively.
\item \textbf{Semi-honest Server Security}: For the semi-honest adversary $S_0$, there exists a simulator $\mathbf{S}$, so that for any input $(x, y)$,  we have
\begin{small}
\begin{equation*}
\begin{split}
View_{\mathbf{A}}^{\pi}(x, y)\approx \mathbf{S}(x, f_0(x, y))
\end{split}
\end{equation*}
\end{small}
where $View_{\mathbf{A}}^{\pi}(x, y)$ denotes $\mathbf{A}$'s view during the execution of $\pi$ given the $S_0$'s input $x$ and $S_1$'s input $y$.  $\mathbf{S}(x, f_0(x, y))$  represents the view simulated by $\mathbf{S}$ when it is given access to  $x$ and  $f_0(x, y)$ of $S_0$. $\approx$ indicates computational indistinguishability
of   two distributions $View_{\mathbf{A}}^{\pi}(x, y)$ and $\mathbf{S}(x, f_0(x, y))$.
 \item \textbf{Malicious Client Security}: For the malicious adversary $S_1$, there exists a simulator $\mathbf{S}$, so that for any input $x$ from $S_0$,  we have
 \begin{small}
\begin{equation*}
\begin{split}
Out_{S_0}, View_{\mathbf{A}}^{\pi}(x,\cdot)\approx \hat{Out}, \mathbf{S}^{f(x, \cdot)}
\end{split}
\end{equation*}
\end{small}
\end{itemize}
where $View_{\mathbf{A}}^{\pi}(x, \cdot)$ denotes $\mathbf{A}$'s view during the execution of $\pi$ given the $S_0$'s input $x$. $Out_{S_0}$ indicates the output of $S_0$ in the same protocol execution. Similarly,   $\hat{Out}$  and $\mathbf{S}^{f(x, \cdot)}$  represents the output of $S_0$  and the view simulated by $\mathbf{S}$ in an ideal interaction.
\section{Generation of Authenticated Beaver's Multiplicative Triple}
\label{A:Authenticated Beaver's multiplicative triples}

\renewcommand\tablename{Figure}
\renewcommand \thetable{\arabic{table}}
\setcounter{table}{6}
\begin{table}[htb]
\centering
\begin{tabular}{|p{7.5cm}|}
\hline \\
 \textbf{Input:} $S_1$ holds $(\langle A\rangle_1, \langle B\rangle_1)$ uniformly chosen from $\mathbb{F}_p$.   $S_0$ holds $(\langle A\rangle_0, \langle B\rangle_0)$ and a MAC key $\alpha$ uniformly chosen from $\mathbb{F}_p$\\
 \textbf{Output:} $S_b$ obtains $\{(\langle A\rangle_b, $ $\langle \alpha A\rangle_b), $ $(\langle B\rangle_b, \langle \alpha B\rangle_b), $ $(\langle C\rangle_b, $ $\langle \alpha C\rangle_b)\}$ for $b\in \{0, 1\}$.\\
 \textbf{Procedure}:
 \begin{itemize}
    \item[1.]  $S_0$ and $S_1$ participate in a two-party secure computing protocol against the semi-honest server $S_0$ and the malicious client $S_1$, so that $S_1$ obtains an FHE's public and secret key $(pk, sk)$ and $S_0$ obtains $pk$. This process is performed only once.
    \item[2.] $S_1$ send the encryptions $c_1\leftarrow \mathtt{Enc}(pk, A_1)$  and $c_2\leftarrow \mathtt{Enc}(pk, B_1)$ to $S_0$ along with a zero-knowledge (ZK) proof of plaintext knowledge of the two ciphertexts. A ZK proof of knowledge for  ciphertexts is used to state that $c_{1}$ and $c_{2}$ are  valid ciphertexts generated from the given FHE cryptosystem. Readers can refer to \cite{keller2018overdrive,chen2020maliciously} for more details.
    \end{itemize}
 \begin{itemize}
    \item[3.] $S_0$ samples $(\langle \alpha A\rangle_0, \langle \alpha B\rangle_0, \langle \alpha C\rangle_0,  \langle C\rangle_0) $ from $\mathbb{F}_{p}^{3}$, and  then computes $c_3=\mathtt{Enc}_{pk}( \alpha(\langle A\rangle_0+\langle A\rangle_1)-\langle \alpha A\rangle_0)$, $c_4=\mathtt{Enc}_{pk}( \alpha(\langle B\rangle_0+\langle B\rangle_1)-\langle \alpha B\rangle_0)$, $c_5=\mathtt{Enc}_{pk}(\alpha(A\cdot B)-\langle \alpha C\rangle_0)$, and $c_6=\mathtt{Enc}_{pk}((A\cdot B)-\langle C\rangle_0)$.
    \item[4.] $S_1$ decrypts $c_3$, $c_4$,  $c_5$ and   $c_6$ to obtain $(\langle \alpha A\rangle_1, \langle \alpha B\rangle_1, \langle \alpha C\rangle_1, \langle C\rangle_1)$.
    \item[5.] $S_b$ outputs $\{(\langle A\rangle_b, $ $\langle \alpha A\rangle_b), $ $(\langle B\rangle_b, \langle \alpha B\rangle_b), $ $(\langle C\rangle_b, $ $\langle \alpha C\rangle_b)\}$ for $b\in \{0, 1\}$.
    \end{itemize}\\
\hline
\end{tabular}
\caption{Algorithm of generating authenticated Beaver's multiplicative triple}
\label{Algorithm of generating authenticated Beaver's multiplicative triple}
\end{table}

\section{Protocols of InitLin and Lin}
\label{A:Protocols InitLin and Lin}
\begin{table}[htb]
\centering
\begin{tabular}{|p{7.5cm}|}
\hline \\
 \textbf{Input:} $S_0$ holds $\mathbf{N}\in \mathbb{F}_{p}^{n_o\times n_i}$  and  a MAC key $\alpha$ uniformly chosen from $\mathbb{F}_p$.  $S_1$ holds  $\mathbf{t}\in \mathbb{F}_{p}^{n_i}$.\\
 \textbf{Output:} $S_b$ obtains $\{(\langle \mathbf{Nt}\rangle_b, \langle \alpha \mathbf{Nt}\rangle_b)\}$ for $b\in\{0, 1\}$.\\
 \textbf{Procedure}:
 \begin{itemize}
    \item[1.]  $S_0$ and $S_1$ participate in a two-party secure computing protocol against the semi-honest server $S_0$ and the malicious client $S_1$, so that $S_1$ obtains the FHE's public and secret key $(pk, sk)$ and $S_0$ obtains $pk$. This process is performed only once.
    \item[2.] $S_1$ sends the encryption $\mathbf{c_1}\leftarrow \mathtt{Enc}(pk, \mathbf{t})$  to $S_0$ along with a zero-knowledge (ZK) proof of plaintext knowledge of this ciphertext.
    \end{itemize}
 \begin{itemize}
    \item[3.] $S_0$ samples $(\langle \mathbf{Nt}\rangle_0, \langle \alpha \mathbf{Nt}\rangle_0)$ from $\mathbb{F}_{p}^{n_o}$, and  then sends $\mathbf{c_2}=\mathtt{Enc}_{pk}(\mathbf{Nt}-\langle \mathbf{Nt}\rangle_0)$ and $\mathbf{c_3}=\mathtt{Enc}_{pk}(\alpha\mathbf{Nt}-\langle \alpha\mathbf{Nt}\rangle_0)$ to $S_1$.
    \item[4.] $S_1$ decrypts $\mathbf{c_2}$ and $\mathbf{c_3}$ and set $\langle \mathbf{Nt}\rangle_1=\mathtt{Dec}_{sk}(\mathbf{c_2})$, $\langle \alpha\mathbf{Nt}\rangle_1=\mathtt{Dec}_{sk}(\mathbf{c_3})$.
    \item[5.] $S_b$ outputs $\{(\langle \mathbf{Nt}\rangle_b, \langle \alpha \mathbf{Nt}\rangle_b)\}$ for $b\in\{0, 1\}$.
    \end{itemize}\\
\hline
\end{tabular}
\caption{Protocol of \textbf{InitLin}}
\label{Protocols InitLin}
\end{table}

\begin{table}[htb]
\centering
\begin{tabular}{|p{7.5cm}|}
\hline \\
 \textbf{Input:} $S_0$ holds $\mathbf{N}\in \mathbb{F}_{p}^{n_o\times n_i}$, a MAC key $\alpha$ uniformly chosen from $\mathbb{F}_p$, and $\langle \mathbf{t}\rangle_0$, $\langle \mathbf{d}\rangle_0 \in \mathbb{F}_{p}^{n_i}$.  $S_1$ holds  $\langle \mathbf{t}\rangle_1, \langle \mathbf{d}\rangle_1 \in \mathbb{F}_{p}^{n_i}$, where $\mathbf{d}=\alpha \mathbf{t}$. \\
 \textbf{Output:} $S_b$ obtains $\{\langle \mathbf{Nt}\rangle_b, \langle \mathbf{Nd}\rangle_b, \langle \alpha^{3}\mathbf{t}-\alpha^{2}\mathbf{d}\rangle_b\}$ for $b\in\{0, 1\}$.\\
 \textbf{Procedure}:
 \begin{itemize}
    \item[1.] $S_1$ sends the encryptions $\mathbf{e_1}\leftarrow \mathtt{Enc}(pk, \langle \mathbf{t}\rangle_1)$  and $\mathbf{e_2}\leftarrow \mathtt{Enc}(pk, \langle \mathbf{d}\rangle_1)$ to $S_0$ along with a zero-knowledge (ZK) proof of plaintext knowledge of the two ciphertexts.
    \end{itemize}
 \begin{itemize}
    \item[2.] $S_0$ samples $(\langle \mathbf{Nt}\rangle_0, \langle \mathbf{Nd}\rangle_0)$ from $\mathbb{F}_{p}^{n_o}$ and $\langle \alpha^{3}\mathbf{u}-\alpha^{2}\mathbf{d}\rangle_0$ from $\mathbb{F}_{p}^{n_i}$.

    \item[3.] $S_0$ computes $\mathbf{e_3}=\mathtt{Enc}_{pk}(\mathbf{Nt}-\langle \mathbf{Nt}\rangle_0)$, $\mathbf{e_4}=\mathtt{Enc}_{pk}(\mathbf{Nd}-\langle \mathbf{Nd}\rangle_0)$, and $ \mathbf{e_5}=\alpha^{3}\mathbf{t}-\alpha^{2}\mathbf{d}-\langle \alpha^{3}\mathbf{t}-\alpha^{2}\mathbf{d}\rangle_0$. $S_0$ sends $\mathbf{e_3}$, $\mathbf{e_4}$ and $\mathbf{e_5}$ to $S_1$.
    \item[4.] $S_1$ sets  $\langle \mathbf{Nt}\rangle_1=\mathtt{Dec}_{sk}(\mathbf{e_3})$, $\langle \mathbf{Nd}\rangle_1=\mathtt{Dec}_{sk}(\mathbf{e_4})$, and $\langle \alpha^{3}\mathbf{t}-\alpha^{2}\mathbf{d}\rangle_1=\mathtt{Dec}_{sk}(\mathbf{e_5})$.
    \item[5.] $S_b$ outputs $\{\langle \mathbf{Nt}\rangle_b, \langle \mathbf{Nd}\rangle_b, \langle \alpha^{3}\mathbf{t}-\alpha^{2}\mathbf{d}\rangle_b\}$ for $b\in\{0, 1\}$.
    \end{itemize}\\
\hline
\end{tabular}
\caption{Protocol of \textbf{Lin}}
\label{Protocol Lin}
\end{table}
\section{Optimization for Convolution Operations}
\label{Optimization for convolution operations}

We show how to compute convolutional layers in a parallelized manner, where we mainly incorporate the core idea of work GALA \cite{zhang2021gala}. We reuse the symbols $\mathbf{N}$ and $\mathbf{[t]_c}$ used in computing the matrix-vector multiplication, which represent the inputs held by the server and client, respectively. To be precise, we assume that the server has $c_o$ plaintext kernels of size $k_w\times k_h\times c_i$, and the ciphertext input sent by the client to the server is $c_i$ kernels of size $u_w\times u_h$. The server is required to perform  homomorphic convolution operations between the ciphertext input and its own plaintext kernel to obtain the ciphertext output. We first describe the basic convolution operation under single-input single-output (SISO), and then we extend this to the general case of multiple-input multiple-output (MIMO) (the method designed in GAZELLE \cite{juvekar2018gazelle}). Finally, we show the details of our method which exhibits better computational performance.

\subsection{Basic Convolution for the Single Input Single Output (SISO)}
\renewcommand\tablename{TABLE}
\renewcommand \thetable{\Roman{table}}
\setcounter{table}{4}
\setcounter{figure}{9}
\begin{figure}[htb]
\centering
\includegraphics[width=0.5\textwidth]{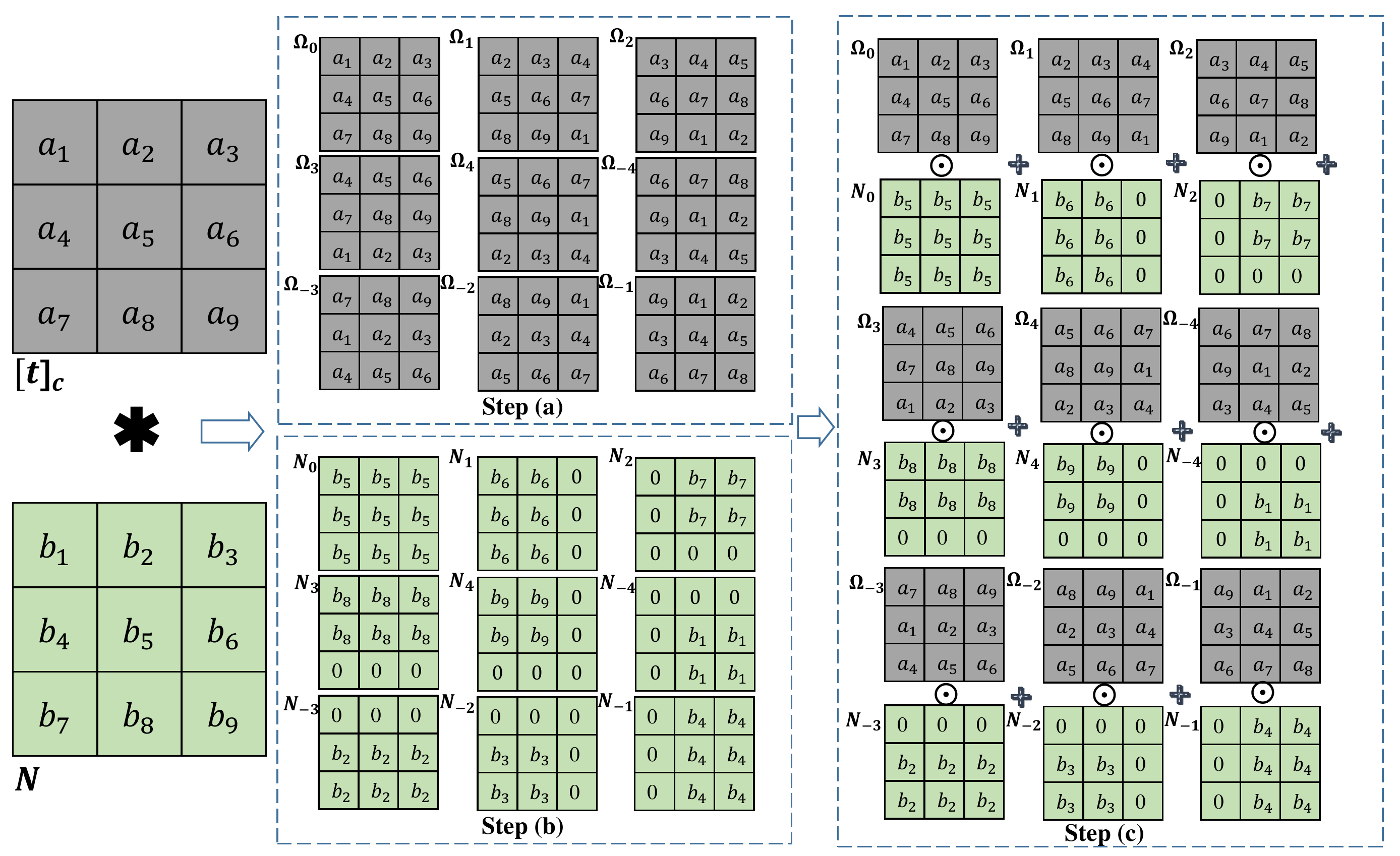}
\caption{SISO convolution}
\label{Fig:SISO convolution}
\end{figure}

SISO is a special case of MIMO where $c_i=c_o=1$.  Therefore, the client's encrypted input is a data (\textit{i.e.}, a 2D image) of size $u_w\times u_h$ with one channel and the server holds a 2D filter of size $k_w\times k_h$ with one kernel. \textbf{Figure}~\ref{Fig:SISO convolution} gives an example where $\mathbf{[t]_c}$ is the input from the client and $\mathbf{N}$ represents the plaintext kernel held by the server. The convolution computation process can be visualized as first placing $\mathbf{N}$ at different positions of the $\mathbf{[t]_c}$. Then, for each location, the sum of the element-wise products of the kernel and the corresponding data value within the kernel window is computed. As shown in \textbf{Figure}~\ref{Fig:SISO convolution}, the first value of the convolution between $\mathbf{[t]_c}$ and  $\mathbf{N}$ is ($a_1b_5+a_2b_6+a_4b_8+a_5b_9$). It is obtained by first placing the center of $\mathbf{N}$, \textit{i.e.} $b_5$ at the position of $a_1$, and then computing the element-wise product between $\mathbf{N}$ and the part of $\mathbf{[t]_c}$ that is within $\mathbf{N}$'s kernel window (\textit{i.e.}, $a_1$, $a_2$, $a_4$ and $a_5$). Similarly, the remaining convolution results can be computed sequentially by placing $b_5$  in $a_2$ to $a_9$.

We take \textbf{Figure}~\ref{Fig:SISO convolution} as an example to illustrate the basic SISO procedure in detail, where $b_5$ is first placed in the position of $a_5$. We observe that the kernel size is $k_w\times k_h=9$. The final convolution result is obtained by summing the element-wise products between the 9 values in $\mathbf{N}$ and the corresponding 9 values around $a_5$. This can be done by rotating $\mathbf{[t]_c}$ in a raster scan manner. Specifically, we first convert $\mathbf{[t]_c}$ to the equivalent vector format and perform ($k_w\times k_h-1$) round rotations on it, with the first half forward and the remaining half in the backward direction. We use the notation $\Omega_j$ to indicate that $\mathbf{[t]_c}$ is rotated by $j$ positions, where a positive $j$ represents the forward direction and a negative $j$ represents the backward direction (as shown in Step(a) in Figure \textbf{Figure}~\ref{Fig:SISO convolution}).

When given all $\Omega_j$, the coefficients of the plaintext kernel $\mathbf{N}$ also need to be re-encoded accordingly (See $\mathbf{N}_j$ in step(b) in \textbf{Figure}~\ref{Fig:SISO convolution}) based on the procedure of convolution operation. Then, each $\Omega_j$ performs an element-wise product with the corresponding $\mathbf{N}_j$  and we get the final encrypted results by summing up all the ciphertexts (shown in step(c) in \textbf{Figure}~\ref{Fig:SISO convolution}). As for the computation complexity, SISO requires a total of $(k_wk_h-1)$ rotation operations (excluding the trivial rotation by zero), all of which are operated on $\mathbf{[t]_c}$, and $(k_wk_h)$ times ScMult operations and $(k_wk_h-1)$ times Add operations.
\subsection{MIMO Convolution (GAZELLE)}
\begin{figure}[htb]
\centering
\includegraphics[width=0.5\textwidth]{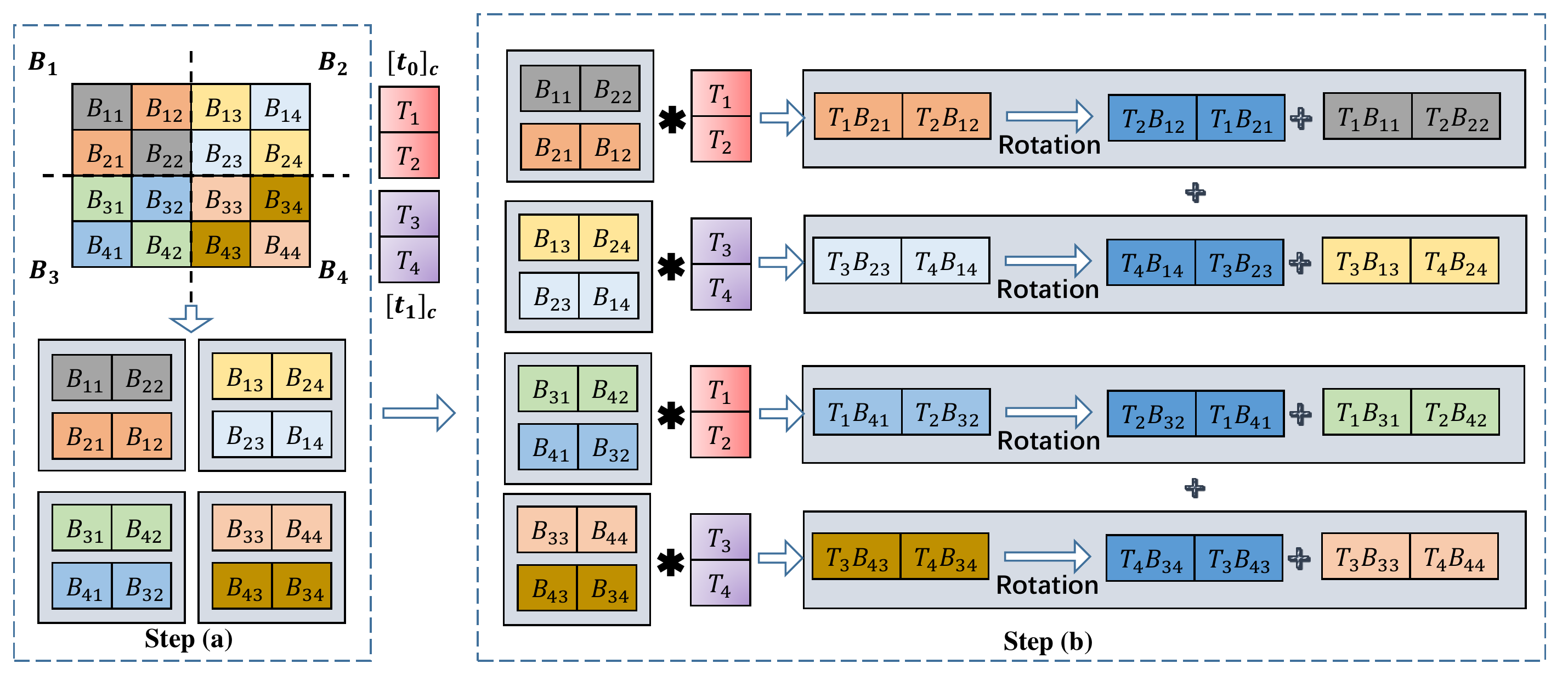}
\caption{MIMO convolution}
\label{Fig:MIMO convolution}
\end{figure}
We now consider the general case, \textit{i.e.}, MIMO, where $c_i$ or $c_o$ is not one. A simple way is to use the above convolution operation for SIMO multiple times. Specifically, the client first encrypts $c_i$ input channels into $c_i$ ciphertexts, $\{\mathbf{[t_i]_c}\}$ and sends them to the server. Since each of $c_o$ channels contains $c_i$ filters, each $\mathbf{[t_i]_c}$ is convolved with one of the $c_i$ filters by way of SISO. As a result, the final ciphertext result will be obtained by summing all of the $c_i$ SISO convolutions. Obviously, this naive approach requires $c_i(k_wk_h-1)$ rotation operations for $c_i$ input channels, $c_i(k_wk_h)$ ScMult operations, and $c_o(c_ik_wk_h-1)$ addition operations. This method outputs $c_o$ ciphertexts.

GAZELLE presents a new MIMO convolution method that substantially reduces the computation complexity compared to the naive method. Its key insight is the efficient utilization of the ciphertext slot $n$, due to the observation that $n$ is usually much larger than the channel size $u_wu_h$. Specifically, to improve utilization and parallel processing capabilities, GAZELLE proposes to pack the input data of $c_n$ channels into one ciphertext, that is, the input is now $\frac{c_i}{c_n}$ ciphertexts instead of $c_i$ ciphertexts (see \textbf{Figure}~\ref{Fig:MIMO convolution} where the four input channels are organized into two ciphertexts, each of which contains 2 channels). As a result, the $c_o$ kernels held by the server can be viewed as a $c_o\times c_i$ size kernel block, where each row of the block contains $c_i$ 2D filters for each kernel. Since then, the convolution operation of MIMO can be analogized to matrix-vector multiplication, where element-wise multiplication is replaced by convolution. Since each ciphertext contains $c_n$ channels, the entire kernel block can be further divided into $\frac{c_0c_i}{c_n^2}$ blocks (as shown in Step(a) in \textbf{Figure}~\ref{Fig:MIMO convolution}, where the kernel block is split into $\mathbf{B_1}$ to $\mathbf{B_4}$).

Each divided block will be further encoded as vectors in a diagonal manner (see Step(a) in \textbf{Figure}~\ref{Fig:MIMO convolution}, where we rearrange the positions of elements in each divided block $\mathbf{B_i}$). Afterwards, we can execute convolution operations between each input ciphertext and vectors in each divided block with the SISO manner.  The $c_n$ convolved vectors will be rotated diagonally to coincide with the elements in the corresponding kernel block, thereby obtaining the convolution of each divided block with the corresponding vector. Finally, we obtain the ciphertext result by summing up all results ciphertexts(see Step(b) in \textbf{Figure}~\ref{Fig:MIMO convolution}).

We now analyze the computation complexity introduced by GAZELLE. We observe that for each of the $\frac{c_oc_i}{c_n^2}$ blocks, there are a total of $(c_n-1)$ rotation operations.  This requires $c_nk_wk_h$ ScMult operations and $(c_nk_wk_h-1)$ Add operations. Additionally, there are a total of $\frac{c_i}{c_n}$ block convolutions associated with the same kernel order. Since for each of the $\frac{c_i}{c_n}$ blocks, we use the SISO method to calculate the convolution between the corresponding kernel blocks, which requires a total of $\frac{c_i(k_wk_h-1)}{c_n}$ rotation operations. In general, MIMO convolution requires a total of $\frac{c_oc_i}{c_n^2}(c_n-1)+\frac{c_i(k_wk_h-1)}{c_n}$ rotations, $k_wk_h\frac{c_ic_o}{c_n}$ ScMult and $\frac{c_o}{c_n}(c_ik_wk_h-1)$ Add operations. This method outputs $\frac{c_o}{c_n}$ ciphertexts.

\subsection{Our MIMO Convolution}
\begin{figure}[htb]
\centering
\includegraphics[width=0.5\textwidth]{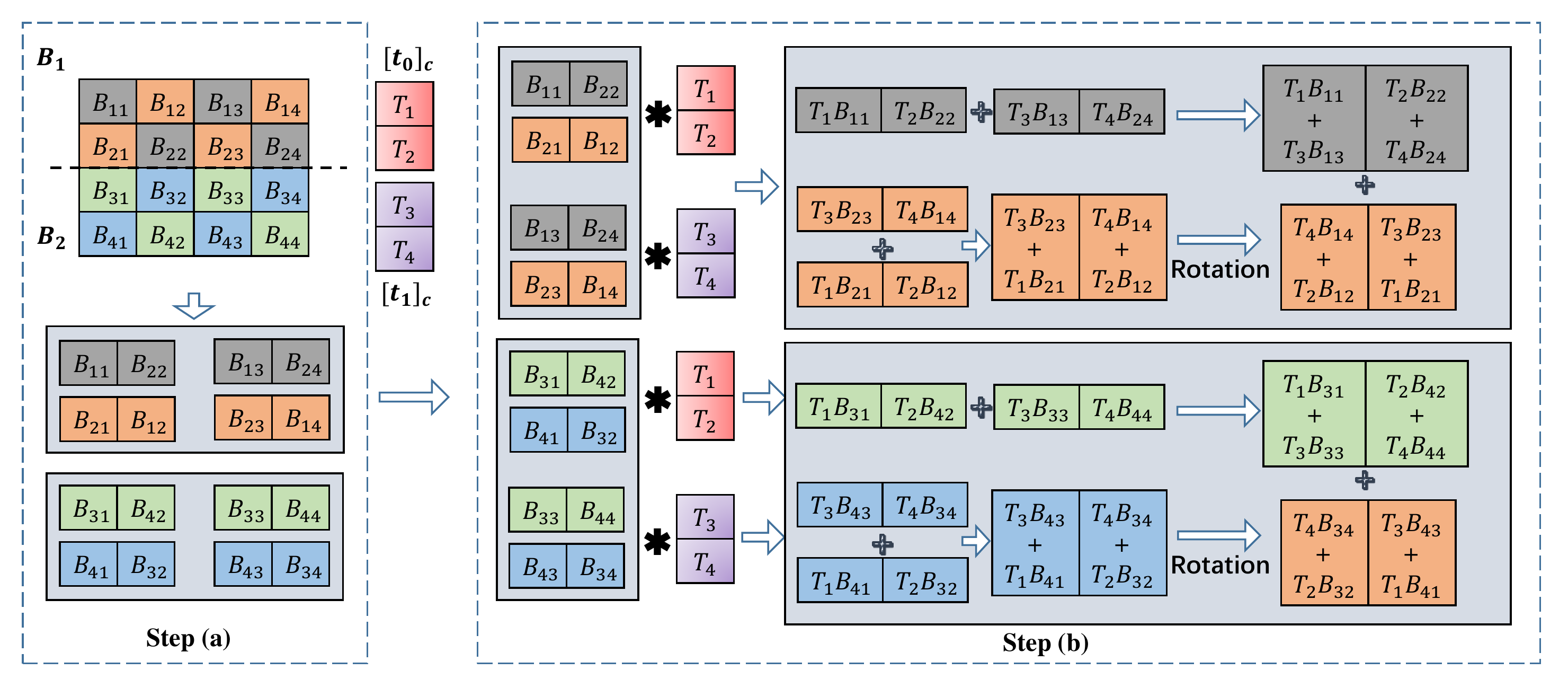}
\caption{Our MIMO convolution}
\label{Fig: Our MIMO convolution}
\end{figure}

We present an improved solution to execute the MIMO convolution based on the  GAZELLE's method. Our key trick is to design strategies to significantly reduce the number of rotation operations involved in computing the convolution between each of the  $\frac{c_0c_i}{c_n^2}$ blocks and the corresponding input channels. To be precise, we find that for each of the $\frac{c_0c_i}{c_n^2}$ blocks, GAZELLE's method requires $(c_n-1)$ rotation operations to obtain the convolution for that block. However, since our goal is to obtain the convolution of each kernel, we  actually do not need to get the convolution of each block. We propose a \textit{first-Add-second-Rotation method} to reduce the complexity of  rotations. Specifically, as shown in \textbf{Figure}~\ref{Fig: Our MIMO convolution}, the entire kernel block is divided into two blocks, \textit{i.e.}, $\mathbf{B_1}$ and $\mathbf{B_2}$, where each block is the combination of $\frac{c_i}{c_n}c_n$-by-$c_n$ divided blocks, which corresponding to the same kernels (The first two kernels are contained in block $\mathbf{B_1}$, while the third and fourth kernel are organized in $\mathbf{B_1}$).

For each newly constructed block, all its row vectors are first convolved with the corresponding input ciphertexts in a SISO manner. Then, those convolved vectors associated with the same kernel order will be added together (see add operation of Step (b) in \textbf{Figure}~\ref{Fig: Our MIMO convolution} before executing rotations). Finally, these added vectors are rotated to the same kernel order and summed to obtain the final convolution result. Our method incurs $(c_n-1)$ rotations for each of $\frac{c_o}{c_n}$ newly constructed blocks, which reduces the computation complexity with respect to rotation by a factor of $\frac{c_i}{c_n}$ compared to the method proposed by GAZELLE. This is substantially because in state-of-the-art DNN networks such as ResNets, $\frac{c_i}{c_n}$ can reach 256. This is mainly due to the fact that neural networks contain a large number of large-sized feature maps in order to capture complex input features.

We now analyze the computation complexity of our proposed method. Specifically,  for each of the $\frac{c_o}{c_n}$ blocks, there are a total of $c_i$ SISO convolutions. Additionally, there are $\frac{c_i}{c_n}$  convolutions summed for each of the $c_n$ kernel orders, which introduces $c_N$ added convolutions. These added convolutions will be further rotated to the same kernels and then summed up to obtain the final ciphertext result. Therefore, our method requires  a total of $\frac{c_0}{c_n}(c_n-1)+\frac{c_i(k_wk_h-1)}{c_n}$ rotations, $k_wk_h\frac{c_ic_o}{c_n}$ ScMult and $\frac{c_o}{c_n}(c_ik_wk_h-1)$ Add operations. This method also outputs $\frac{c_o}{c_n}$ ciphertexts.

Table~\ref{Complexity of Each of Method} shows the comparison of our method with existing methods in computation complexity. It is obvious that our method has better complexity, especially for rotation operations. Our experiments  demonstrate that this boost substantially speeds up the homomorphic operation of the linear layer.

\begin{table}[h]
\centering
\scriptsize
\caption{Computation complexity of each method for convolution}
\label{Complexity of Each of Method}
\begin{tabular}{c|c|c|c}
\Xhline{1pt}
{Method}&\#Rotation&\#ScMul &\#Add\\
\Xhline{1pt}
GAZELLE&$\frac{c_oc_i}{c_n^2}(c_n-1)+\frac{c_i(k_wk_h-1)}{c_n}$& $k_wk_h\frac{c_ic_o}{c_n}$& $\frac{c_o}{c_n}(c_ik_wk_h-1)$\\
\hline
Our method&$\frac{c_0}{c_n}(c_n-1)+\frac{c_i(k_wk_h-1)}{c_n}$& $k_wk_h\frac{c_ic_o}{c_n}$& $\frac{c_o}{c_n}(c_ik_wk_h-1)$\\
\Xhline{1pt}
\end{tabular}
\end{table}

\section{Proof of Theorem 4.1}
\label{proof of th1}
\begin{proof}
Similar to SIMC, let $\pi^{f}_{\mathtt{Non-lin}}$ denote the designed protocol for computing the non-linear layer, and $\mathcal{F}^{f}_{\mathtt{Non-lin}}$ denote  the  functionality of our protocol (see \textbf{Figure}~\ref{Functionality of the nonlinear layer}). Also, we use $\mathtt{Real}$ to represent the real protocol execution of $\pi^{f}_{\mathtt{Non-lin}}$ between $S_0$ and the adversary $\mathcal{A}$ compromising $S_1$. We will prove that the view of $\mathcal{A}$  in $\mathtt{Real}$ is indistinguishable from the  view in the simulated execution $\mathtt{Sim}$.  We exploit the standard hybrid argument to prove this, where we define three intermediate hybrids, \textit{i.e.}, $\mathtt{Hyb_1}$, $\mathtt{Hyb_2}$ and $\mathtt{Hyb_3}$, and show the indistinguishability of this hybrids in consecutive executions.

$\mathtt{Hyb_1}$: This hybrid execution is identical to $\mathtt{Real}$ except in the authentication phase.  To be precise, in the authentication phase, the simulator $\mathcal{S}$ use labels $\hat{\mathtt{lab}_{i, j}^{out}}$ (described below) to replace the labels $\mathtt{lab}_{i, j}^{out}$ used in $\mathtt{Real}$. Please note that in this hybrid the simulator $\mathcal{S}$ can access $S_0$' input $\left \langle \mathbf{u} \right \rangle_0$ and $\alpha$, where $\left \langle \mathbf{u} \right \rangle_0 +\left \langle \mathbf{u} \right \rangle_1= \mathbf{u}$. Let $\delta=(\mathbf{u}||sign(\mathbf{u}))$. Therefore, for $i\in[2\kappa]$, we set  $\hat{\mathtt{lab}_{i, j}^{out}}=\mathtt{lab}_{i, j}^{out}$ if $j=\delta[i]$, otherwise, $\hat{\mathtt{lab}_{i, 1-\delta[i]}^{out}}$ (\textit{i.e.}, the ``other" label) is set to a random value chosen from $\{0, 1\}^{\lambda}$ uniformly, where the first bit of  $\hat{\mathtt{lab}_{i, 1-\delta[i]}^{out}}$ is $1-\xi_{i, \delta[i]}$. We provide the formal description of $\mathtt{Hyb_1}$ as follows, where the  indistinguishability between the view of $\mathcal{A}$ in $\mathtt{Real}$ and $\mathtt{Hyb_1}$ is directly derived from the authenticity of the garbled circuit (see Section~\ref{Garbled Circuits}).
\begin{itemize}
    \item[1.] $\mathcal{S}$ receives $\left \langle \mathbf{u} \right \rangle_1$ from $\mathcal{A}$ as the input of OT$_{\lambda}^{\kappa}$.
\item[2.] Garbled Circuit Phase:
\begin{itemize}
\item For the boolean circuit $booln^f$, $\mathcal{S}$ first computes $\mathtt{Garble}(1^\lambda, booln^f)\rightarrow (\mathtt{GC}, \{ \{\mathtt{lab}_{i,j}^{in}\},\{\mathtt{lab}_{i,j}^{out}\}\}_{j\in\{0,1\}})$ for each $i\in[2\kappa]$,  and then for ${i\in\{\kappa+1, \cdots, 2\kappa\}}$ sends  $\{\mathtt{\tilde{lab}}_{i}^{in}=\mathtt{lab}_{i, \left \langle \mathbf{u} \right \rangle_1[i]}^{in}\}$ to $\mathcal{A}$ as the output of  OT$_{\lambda}^{\kappa}$. In addition, $\mathcal{S}$ sends the garbled  circuit $\mathtt{GC}$ and its garbled inputs $\{ \{\mathtt{\tilde{lab}}_{i}^{in}=\mathtt{lab}_{i, \left \langle \mathbf{u} \right \rangle_0[i]}\}_{i\in[\kappa]}$ to $\mathcal{A}$.
\end{itemize}
\item[3.] Authentication Phase 1:
\begin{itemize}
\item $\mathcal{S}$ sets $\delta=(\mathbf{u}||sign(\mathbf{u}))$.
\item For $i\in[2\kappa]$, $\mathcal{S}$ sets $\hat{\mathtt{lab}_{i, j}^{out}}=\mathtt{lab}_{i, \delta[i]}^{out}$ if $j=\delta[i]$.
\item For $i\in[2\kappa]$, if $j=1-\delta[i]$, $\mathcal{S}$ sets $\hat{\mathtt{lab}_{i, j}^{out}}$ as  a random value chosen from $\{0, 1\}^{\lambda}$ uniformly, where first bit of  $\hat{\mathtt{lab}_{i, 1-\delta[i]}^{out}}$ is $1-\xi_{i, \delta[i]}$.
\item $\mathcal{S}$ computes and sends $\{ct_{i,j}, \hat{ct_{i, j}}\}_{i\in [\kappa], j\in\{0, 1\}}$ to $\mathcal{A}$ using $\hat{\mathtt{lab}_{i, j}^{out}}_{i\in [2\kappa], j\in\{0, 1\}}$. This process is same as  in $\mathtt{Real}$ execution using  ${\mathtt{lab}_{i, j}^{out}}_{i\in [2\kappa], j\in\{0, 1\}}$.
\end{itemize}
\item[4.]Local Computation Phase: The execution of this phase is indistinguishable from $\mathtt{Real}$ since no information needs to be exchanged between $\mathcal{S}$ and $\mathcal{A}$.
\item[5.]  Authentication Phase 2:
\begin{itemize}
\item $\mathcal{S}$ randomly  selects a fresh shares of authenticated Beaver's multiplicative triple $(A, B, C)$, where $C=A\cdot B$.
\item $\mathcal{S}$ sends $\Gamma=\left \langle \mathbf{u} \right \rangle_0-\langle A\rangle_0$ to $\mathcal{A}$.
\end{itemize}
    \end{itemize}

$\mathtt{Hyb_2}$: We will make four changes to $\mathtt{Hyb_1}$ to obtain $\mathtt{Hyb_2}$, and argue that $\mathtt{Hyb_2}$ is indistinguishable from $\mathtt{Hyb_1}$ from the adversary's view. To be precise, let $\mathtt{GCEval}(\mathtt{GC}, \{\mathtt{\tilde{lab}}_{i}^{in}\}_{i\in[2\kappa]})\rightarrow \{(\tilde{\xi}_{i}||\tilde{\zeta}_{i})_{i\in[2\kappa]}= \{\mathtt{\tilde{lab}}_{i}^{out}\}_{i\in[2\kappa]}\}$. First, we have $\{\mathtt{\tilde{lab}}_{i}^{out}= \mathtt{{lab}}_{i, \delta[i]}^{out}\}_{i\in[2\kappa]}$ based on the correctness of garbled circuits.  Second, we note that   ciphertexts $\{ct_{i,1-\tilde{\xi}_{i}}, \hat{ct}_{i, 1-\tilde{\xi}_{i+\kappa}}\}_{i\in [\kappa]}$  are computed by exploiting the ``other" set of output labels picked uniformly in $\mathtt{Hyb_1}$. Based on this observation, $\mathcal{S}$  actually can directly sample them uniformly at random.  Third, in real execution,  for every $i\in [\kappa]$ and $j\in \{0, 1\}$, $S_0$ sends $ct_{i, \xi_{i, j}}$ and $\hat{ct}_{i, \xi_{i+\kappa, j}}$ to $S_1$, and then $S_1$ computes $c_i$, $d_i$ and $e_i$ based on them. To simulate this,  $\mathcal{S}$ only needs to uniformly select random values $c_i$, $d_i$ and $e_i$ which satisfy $\langle \alpha \mathbf{u}\rangle_1=(-\sum_{i\in[\kappa]}c_i2^{i-1})$, $\langle sign(\mathbf{u})\rangle_1=(-\sum_{i\in[\kappa]}d_i2^{i-1})$ and $\langle \alpha sign(\mathbf{u})\rangle_1=(-\sum_{i\in[\kappa]}e_i2^{i-1})$. Finally, since $\langle \alpha \mathbf{u}\rangle_1$, $\langle sign(\mathbf{u})\rangle_1$ and $\langle \alpha sign(\mathbf{u})\rangle_1$ are part the outputs of functionality $\mathcal{F}^{f}_{\mathtt{Non-lin}}$, $\mathcal{S}$  can obtain these as the outputs from $\mathcal{F}^{f}_{\mathtt{Non-lin}}$.  In summary,  with the above changes, $\mathcal{S}$ no longer needs an input $\alpha$ of $S_0$. We provide the formal description of $\mathtt{Hyb_2}$ as follows.
 \begin{itemize}
    \item[1.] $\mathcal{S}$ receives $\left \langle \mathbf{u} \right \rangle_1$ from $\mathcal{A}$ as the input of OT$_{\lambda}^{\kappa}$.
\item[2.] Garbled Circuit Phase: Same as $\mathtt{Hyb_1}$.
\item[3.] Authentication Phase 1:
\begin{itemize}
\item $\mathcal{S}$ runs $\mathtt{GCEval}(\mathtt{GC}, \{\mathtt{\tilde{lab}}_{i}^{in}\}_{i\in[2\kappa]})\rightarrow \{(\tilde{\xi}_{i}||\tilde{\zeta}_{i})_{i\in[2\kappa]}= \{\mathtt{\tilde{lab}}_{i}^{out}\}_{i\in[2\kappa]}\}$.
\item $\mathcal{S}$ learns $\langle \alpha \mathbf{u}\rangle_1$, $\langle sign(\mathbf{u})\rangle_1$ and $\langle \alpha sign(\mathbf{u})\rangle_1$ by sending  $\langle\mathbf{u}\rangle_1$ to $\mathcal{F}^{f}_{\mathtt{Non-lin}}$.
\item For $i\in[\kappa]$, $\mathcal{S}$ uniformly selects random values $c_i$, $d_i$ and $e_i\in \mathbb{F}_{p}$ which satisfy $\langle \alpha \mathbf{u}\rangle_1=(-\sum_{i\in[\kappa]}c_i2^{i-1})$, $\langle sign(\mathbf{u})\rangle_1=(-\sum_{i\in[\kappa]}d_i2^{i-1})$ and $\langle \alpha sign(\mathbf{u})\rangle_1=(-\sum_{i\in[\kappa]}$ $e_i2^{i-1})$.
\item  For every $i\in [\kappa]$, $\mathcal{S}$ computes $ct_{i, \tilde{\xi}_{i}}=c_i \oplus \mathbf{Trun}_\kappa(\tilde{\zeta}_{i})$  and $\hat{ct}_{i, \tilde{\xi}_{i+\kappa}}=(d_i||e_i)\oplus\mathbf{Trun}_{2\kappa}(\tilde{\zeta}_{i+\kappa})$. For ciphertexts $\{ct_{i,1-\tilde{\xi}_{i}}, \hat{ct}_{i, 1-\tilde{\xi}_{i+\kappa}}\}_{i\in [\kappa]}$, $\mathcal{S}$   samples them uniformly at random.
\item $\mathcal{S}$ sends $\{ct_{i,j}, \hat{ct_{i, j}}\}_{i\in [\kappa], j\in\{0, 1\}}$ to $\mathcal{A}$.
\end{itemize}
\item[4.]Local Computation Phase: The execution of this phase is indistinguishable from $\mathtt{Real}$ since no information needs to be exchanged between $\mathcal{S}$ and $\mathcal{A}$.
\item[5.]  Authentication Phase 2:
\begin{itemize}
\item $\mathcal{S}$ randomly  selects a fresh shares of authenticated Beaver's multiplicative triple $(A, B, C)$, where $C=A\cdot B$.
\item $\mathcal{S}$ sends $\Gamma=\left \langle \mathbf{u} \right \rangle_0-\langle A\rangle_0$ to $\mathcal{A}$.
\end{itemize}
    \end{itemize}

$\mathtt{Hyb_3}$: This hybrid we  remove $\mathcal{S}$'s dependence on $S_0$'s input $\langle \mathbf{u} \rangle_0$. The indistinguishability between $\mathtt{Hyb_3}$ and $\mathtt{Hyb_2}$ stems from the security of the garbled circuit. We provide the formal description of $\mathtt{Hyb_3}$ below.
 \begin{itemize}
    \item[1.] $\mathcal{S}$ receives $\left \langle \mathbf{u} \right \rangle_1$ from $\mathcal{A}$ as the input of OT$_{\lambda}^{\kappa}$.
\item[2.] Garbled Circuit Phase:
\begin{itemize}
\item $\mathcal{S}$ samples $\mathtt{Garble}(1^\lambda, booln^f)\rightarrow (\tilde{\mathtt{GC}}, \{\hat{\mathtt{lab}}_{i}^{in}\}_{i\in\{\kappa+1, \cdots, 2\kappa\}})$  and sends $\{\hat{\mathtt{lab}}_i\}_{i\in\{\kappa+1, \cdots, 2\kappa\}}$ to $\mathcal{A}$ as the output of  OT$_{\lambda}^{\kappa}$. $\mathcal{S}$ also sends $\tilde{\mathtt{GC}}$ and $\{\hat{\mathtt{lab}}_{i}^{in}\}_{i\in[\kappa]}$ to $\mathcal{A}$.
\end{itemize}
\item[3.] Authentication Phase 1:
\item[4.]Local Computation Phase: The execution of this phase is indistinguishable from $\mathtt{Real}$ since no information needs to be exchanged between $\mathcal{S}$ and $\mathcal{A}$.
\item[5.]  Authentication Phase 2: Same as $\mathtt{Hyb_2}$, where $\mathcal{S}$ uses $(\langle \mathbf{u}\rangle_1, \tilde{\mathtt{GC}},$ and $ \{\hat{\mathtt{lab}}_{i}^{in}\}_{i\in[2\kappa]})$ to process this phase for $\mathcal{A}$.
\begin{itemize}
\item $\mathcal{S}$ randomly  selects a fresh shares of authenticated Beaver's multiplicative triple $(A, B, C)$, where $C=A\cdot B$.
\item $\mathcal{S}$ randomly selects  $\hat{\Gamma}$ uniformly from $\mathbb{F}_p$  and send it to $\mathcal{A}$.
\end{itemize}
    \end{itemize}
\end{proof}
\section{Protocol of Secure Inference}
\label{protocol of secure infernece}
In this section, we describe the details of our secure inference protocol (called $\pi_{inf}$). For simplicity, suppose a neural network (NN) consists of alternating linear and nonlinear layers. Let the specifications of the linear layer be $\mathbf{N}_1$, $\mathbf{N}_2$, $\cdots$, $\mathbf{N}_m$, and the nonlinear layer be $f_1$, $f_2$, $\cdots$, $f_{m-1}$. Given an input vector $\mathbf{t}_0$, one needs to  sequentially compute $\mathbf{u}_i=\mathbf{N}_i\cdot \mathbf{t}_{i-1}, \mathbf{t}_i= f_i(\mathbf{u}_i)$, where $i\in[m-1]$. As a result, we have $\mathbf{u}_m=\mathbf{N}_m\cdot\mathbf{t}_{m-1}=\mathrm{NN}(\mathbf{t}_0)$. In secure inference, the server $S_0$'s input is weights of all linear layer, \textit{i.e.}, $\mathbf{N}_1, \cdots, \mathbf{N}_m$  while the input of $S_1$ is $\mathbf{t}$.  The goal of $\pi_{inf}$ is to learn $\mathrm{NN}(\mathbf{t}_0)$ for the client $S_1$. We provide the overview of $\pi_{inf}$ below and give the details of protocol in \textbf{Figure}~\ref{Protocol of secure inference}.
\renewcommand\tablename{Figure}
\renewcommand \thetable{\arabic{table}}
\setcounter{table}{12}
\setcounter{figure}{12}
\begin{table*}[htb]
\centering
\begin{tabular}{|p{17.4cm}|}
\hline
\textbf{Preamble}: Consider a neural network (NN) consists of $m$  linear layers and $m-1$  nonlinear layers.  Let the specifications of the linear layer be $\mathbf{N}_1$, $\mathbf{N}_2$, $\cdots$, $\mathbf{N}_m$, and the nonlinear layer be $f_1$, $f_2$, $\cdots$, $f_{m-1}$.\\
 \textbf{Input:}$S_0$ holds $\{ \mathbf{N}_j\in\mathbb{F}_p^{n_j\times n_{j-1}}\}_{j\in[m]}$, \textit{i.e.}, weights for the $m$ linear layers. $S_1$  holds $\mathbf{t}_0\in \mathbb{F}_p^{n_0}$ as the input of NN.\\
 \textbf{Output:} $S_1$ obtains $\mathrm{NN}(\mathbf{t}_0)$.\\
\textbf{Protocol}:\\
\begin{itemize}
\item[1.] $S_0$ uniformly  selects a random MAC key $\alpha$ from $\mathbb{F}_p$ to be used throughout the protocol execution.
\item[2.] \textbf{First Linear Layer}: $S_0$ and $S_1$ execute the function \textbf{InitLin} to learn $( \langle \mathbf{u}_1 \rangle_b, \langle \mathbf{r}_1 \rangle_b)$ for $b\in\{0, 1\}$, where $S_0$'s inputs for \textbf{InitLin} are $(\mathbf{N}_1, \alpha)$ while $S_1$'s input is $\mathbf{t}_0$.
\item[3.] For each $j\in[m-1]$,
\begin{itemize}
\item [] \textbf{Non-Linear Layer $f_j$}:  $S_0$ and $S_1$ execute the function $\pi^{f_j}_{\mathtt{Non-lin}}$ to learn $( \langle \mathbf{k}_j \rangle_b,  \langle \mathbf{t}_j \rangle_b, \langle \mathbf{d}_j \rangle_b)$ for $b\in\{0, 1\}$, where $S_0$'s inputs for $\pi^{f_j}_{\mathtt{Non-lin}}$ are $(\langle \mathbf{u}_j \rangle_0, \alpha)$ while $S_1$'s input is $\langle \mathbf{u}_j \rangle_1$.
\item[] \textbf{Linear layer $j+1$}: $S_0$ and $S_1$ execute the function \textbf{Lin} to learn $( \langle \mathbf{u}_{j+1} \rangle_b,  \langle \mathbf{r}_{j+1} \rangle_b, \langle \mathbf{z}_{j+1} \rangle_b)$ for $b\in\{0, 1\}$, where $S_0$'s inputs for \textbf{Lin} are $(\langle \mathbf{t}_j \rangle_0, \langle \mathbf{d}_j \rangle_0, \mathbf{N}_{j+1}, \alpha)$ while $S_1$'s inputs are $(\langle \mathbf{t}_j \rangle_1, \langle \mathbf{d}_j \rangle_1)$.
\end{itemize}
\item[4.] \textbf{Consistency Check}:
\begin{itemize}
\item For $j\in[m-1]$, $S_0$ selects $\mathbf{s}_j\in_{R}\mathbb{F}_{p}^{n_j}$ and $\mathbf{s}'_j\in_{R}\mathbb{F}_{p}^{n_{j+1}}$. $S_0$ sends $(\mathbf{s}_j, \mathbf{s}'_j)$ to $S_1$.
\item $S_1$ computes $\langle \mathbf{q} \rangle_1=\sum_{j\in[m-1]}\left(  (\langle \mathbf{r}_j \rangle_1-\langle \mathbf{k}_j \rangle_1)\cdot \mathbf{s}_j+\langle \mathbf{z}_{j+1} \rangle_1\cdot\mathbf{s}'_j \right)$ and sends it to $S_0$.
\item $S_0$ computes $\langle \mathbf{q} \rangle_0=\sum_{j\in[m-1]}\left(  (\langle \mathbf{r}_j \rangle_0-\langle \mathbf{k}_j \rangle_0)\cdot \mathbf{s}_j+\langle \mathbf{z}_{j+1} \rangle_0\cdot\mathbf{s}'_j \right)$.
\item $S_0$ aborts if  $\langle \mathbf{q} \rangle_0+\langle \mathbf{q} \rangle_1 \mod p\neq 0$. Else, sends $\langle \mathbf{u}_{m} \rangle_0$ to $S_1$.
\end{itemize}
\item[5.] \textbf{Output Phase}: $S_1$ outputs $\langle \mathbf{u}_{m} \rangle_0+\langle \mathbf{u}_{m} \rangle_1\mod p$ if $S_0$ didn't abort in the previous step.
\end{itemize}\\
\hline
\end{tabular}
\caption{Secure inference protocol $\pi_{inf}$}
\label{Protocol of secure inference}
\end{table*}

Our protocol can be roughly divided into two phases: the evaluation phase and the consistency check phase. We perform the computation of alternating linear and nonlinear layers with appropriate parameters in the evaluation phase. After that, the server performs a consistency check phase to verify the consistency of the calculations so far. The output will be released to client if the check passes.

\begin{itemize}
\item \textbf{ Linear Layer Evaluation}: To evaluate the first linear layer, $S_0$ and $S_1$ execute the function \textbf{InitLin} to learn $( \langle \mathbf{u}_1 \rangle_b, \langle \mathbf{r}_1 \rangle_b)$ for $b\in\{0, 1\}$, where $S_0$'s inputs for \textbf{InitLin} are $(\mathbf{N}_1, \alpha)$ while $S_1$'s input is $\mathbf{t}_0$. This process is to compute the authenticated shares of $\mathbf{u}_1$, \textit{i.e.}, shares of $\mathbf{u}_1$ and $\mathbf{r}_1$. We use $\mathbf{r}_1$ to represent the authentication of $\mathbf{u}_1$. To evaluate the $i$-th linear layer ($i>1$), $S_0$ and $S_1$ execute the function \textbf{Lin} to learn $( \langle \mathbf{u}_{i} \rangle_b,  \langle \mathbf{r}_{i} \rangle_b, \langle \mathbf{z}_{i} \rangle_b)$ for $b\in\{0, 1\}$, where $S_0$'s inputs for \textbf{Lin} are $(\langle \mathbf{t}_{i-1} \rangle_0, \langle \mathbf{d}_{i-1} \rangle_0, \mathbf{N}_{i}, \alpha)$ while $S_1$'s inputs are $(\langle \mathbf{t}_{i-1} \rangle_1, \langle \mathbf{d}_{i-1} \rangle_1)$. We use $\mathbf{d}_{i-1}$ to denote the authentication on $\mathbf{t}_{i-1}$. Therefore, \textbf{Lin} outputs shares of $ \mathbf{u}_{i}= \mathbf{N}_{i} \mathbf{t}_{i-1}$, $ \mathbf{r}_{i}=\mathbf{N}_{i} \mathbf{d}_{i-1}$ and an additional tag $\mathbf{z}_i=(\alpha^3\mathbf{t}_{i}-\alpha^2\mathbf{d}_{i-1})$.
\item \textbf{Non-Linear Layer Evaluation}: To evaluate the $i$-th ($i\in[m-1]$) non-linear layer, $S_0$ and $S_1$ execute the function $\pi^{f_i}_{\mathtt{Non-lin}}$ to learn $( \langle \mathbf{k}_i \rangle_b,  \langle \mathbf{t}_i \rangle_b, \langle \mathbf{d}_i \rangle_b)$ for $b\in\{0, 1\}$, where $S_0$'s inputs for $\pi^{f_i}_{\mathtt{Non-lin}}$ are $(\langle \mathbf{u}_i \rangle_0, \alpha)$ while $S_1$'s input is $\langle \mathbf{u}_i \rangle_1$, where we use $\mathbf{k}_i$ to represent another set of shares of authentication on $\mathbf{u}_i$.
\item \textbf{Consistency Check Phase}: The server performs the following process to check the correctness of the calculation.
\begin{itemize}
\item For each $i\in\{2, \cdots, m\}$, verify that the authentication share entered into the function \textbf{Lin} is valid by verifying that $\mathbf{z}_i=0^{n_i-1}$.
\item For each $i\in[m-1]$, check that the shares input of the function $\pi^{f}_{\mathtt{Non-lin}}$ is  same as the output of \textbf{Lin} under the $i$-th linear layer by verifying that $\mathbf{r}_i-\mathbf{k}_i=0^{n_i}$.
\end{itemize}
\end{itemize}
Finally, all of the above checks can be combined into a single check by $S_0$ to pick up random scalars. If the check passes, the final prediction can be reconstructed by $S_1$, otherwise, $S_0$ aborts and returns the final share to $S_1$.\\
\quad \\
\textbf{Correctness}. We briefly describe the correctness of our secure inference protocol $\pi_{inf}$. In more detail, we first have $\mathbf{u}_1=\mathbf{N}_1\cdot \mathbf{t}_{0}$ and $\mathbf{r}_1=\alpha\mathbf{u}_1$ by the correctness of \textbf{InitLin}. Then, for each $i\in\{2, \cdots, m\}$, we have $\mathbf{u}_1=\mathbf{N}_i\cdot \mathbf{t}_{i-1}$, $\mathbf{r}_i=\mathbf{N}_i\cdot \mathbf{d}_{i-1}$, and $\mathbf{z}_i=\alpha^3\mathbf{t}_{i-1}-\alpha^2\mathbf{d}_{i-1}$ by the correctness of \textbf{Lin}. Furthermore, for each $i\in[m-1]$, it holds that $\mathbf{k}_i=\alpha\mathbf{u}_i$, $\mathbf{t}_i=f_i(\mathbf{u}_i)$, and $\mathbf{d}_i=\alpha f_i(\mathbf{u}_i)$ by the correctness of $\pi^{f}_{\mathtt{Non-lin}}$. On the other hand, we can observe that $q=0$ since for each $i\in[m-1]$, $\mathbf{z}_{i+1}=0$ and $\mathbf{r}_i=\mathbf{k}_i$. Finally, we have $\mathbf{u}_m=NN(\mathbf{t}_0)$.\\
\quad \\
\textbf{Security}. Our secure inference protocol  has the same security properties as SIMC, \textit{i.e.}, it  is secure against the malicious client model. We provide the following theorem.
\setcounter{theorem}{1}
\begin{theorem}
 Our secure inference protocol $\pi_{inf}$ is secure against a semi-honest server $S_0$ and any malicious adversary $\mathcal{A}$ corrupting the client $S_1$.
\end{theorem}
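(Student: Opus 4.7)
The plan is to prove security via the standard simulation paradigm with a hybrid argument, leveraging the fact that $\pi_{inf}$ is a sequential composition of the sub-protocols \textbf{InitLin}, \textbf{Lin}, and $\pi^{f}_{\mathtt{Non-lin}}$, each of which has already been shown (or is assumed) to be secure against a malicious client. Since $S_0$ is semi-honest, its simulation is essentially a routine check that it learns no information beyond its inputs and abort/continue signal; the substantive direction is the malicious-client case, so I will focus the proof there. I will construct a simulator $\mathcal{S}$ that, given only $\mathcal{A}$'s input $\mathbf{t}_0$ (extracted from its interaction with the FHE plaintext-knowledge proof in the first \textbf{InitLin} call) and the ideal functionality's output $\mathrm{NN}(\mathbf{t}_0)$, produces a view computationally indistinguishable from $\mathcal{A}$'s real view, together with an output of $S_0$ jointly distributed correctly.

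The hybrid sequence will proceed as follows. In $\mathtt{Hyb}_0$ we run the real protocol. In $\mathtt{Hyb}_1$ we replace the \textbf{InitLin} invocation in Step~2 with its simulator, which by the security of \textbf{InitLin} (against a malicious $S_1$, as inherited from SIMC and verified by the ZK proof of plaintext knowledge) is indistinguishable; this also lets $\mathcal{S}$ extract $\mathbf{t}_0$ and the adversary's chosen shares $\langle \mathbf{u}_1\rangle_1, \langle \mathbf{r}_1\rangle_1$. In the next $2(m-1)$ hybrids we alternately replace each $\pi^{f_j}_{\mathtt{Non-lin}}$ invocation and each \textbf{Lin} invocation with its respective simulator, appealing to Theorem~1 and to the malicious-client security of \textbf{Lin}. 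In each step, the simulator samples the server-side shares uniformly at random and delivers them to $\mathcal{A}$; indistinguishability follows from the sub-protocol's simulation guarantee, noting that any deviation by $\mathcal{A}$ translates into an extractable tampering offset $(\beta_j, \beta'_j)$ on $(\langle \mathbf{t}_j\rangle_1, \langle \mathbf{d}_j\rangle_1)$ or $(\langle \mathbf{u}_j\rangle_1, \langle \mathbf{k}_j\rangle_1)$ that $\mathcal{S}$ can record.

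The crux of the proof, and the main obstacle, is the \textbf{Consistency Check}: we must argue that whenever $\mathcal{A}$'s deviations are nonzero, $S_0$ aborts except with negligible probability, so that $\mathcal{S}$ can faithfully simulate the abort/output decision using only $\mathrm{NN}(\mathbf{t}_0)$. Concretely, if $\mathcal{A}$ tampers at the $j$-th layer so that $\mathbf{r}_j - \mathbf{k}_j \neq \mathbf{0}$ or $\mathbf{z}_{j+1} \neq \mathbf{0}$, then the quantity $\mathbf{q} = \sum_{j\in[m-1]}\bigl((\mathbf{r}_j-\mathbf{k}_j)\cdot\mathbf{s}_j + \mathbf{z}_{j+1}\cdot\mathbf{s}'_j\bigr)$ is a nonzero $\mathbb{F}_p$-linear combination of the tampering offsets with coefficients $(\mathbf{s}_j,\mathbf{s}'_j)$ freshly sampled by $S_0$ and revealed only after $\mathcal{A}$ has committed to its tampering. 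Because $\alpha$ (used to authenticate the non-tampered values) is hidden from $\mathcal{A}$ by the authenticated-share unforgeability property (Section~\ref{Secret Sharing}), $\mathcal{A}$ can only make $\mathbf{q}=0$ with probability at most $1/p$ per layer, and a union bound over the $2(m-1)$ checks gives an overall soundness error of $O(m/p)$, which is statistically negligible in $\kappa = \lceil \log p \rceil$. I will formalize this by defining a bad event $\mathsf{Forge}$ in which $\mathcal{A}$ produces consistent but inconsistent shares, bound $\Pr[\mathsf{Forge}]$ via the MAC argument, and then condition on $\neg\mathsf{Forge}$ in the final hybrid to argue that $\mathcal{S}$'s output distribution matches the real distribution.

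Finally, in the terminal hybrid $\mathtt{Hyb}_{\text{final}}$, conditioned on $\neg\mathsf{Forge}$, the only remaining message is $\langle\mathbf{u}_m\rangle_0 = \mathbf{u}_m - \langle\mathbf{u}_m\rangle_1$, which the simulator computes as $\mathrm{NN}(\mathbf{t}_0) - \langle\mathbf{u}_m\rangle_1$ using the ideal-functionality output and the share extracted during the final \textbf{Lin} simulation; its distribution is identical to the real one. Chaining the hybrid indistinguishabilities together with the negligible soundness error of the consistency check yields $\mathtt{View}_{\mathcal{A}}^{\pi_{inf}} \approx \mathcal{S}^{\mathcal{F}_{inf}(\mathbf{t}_0,\cdot)}$, completing the proof. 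The semi-honest server direction is then handled by a short symmetric argument: the server's view consists only of ZK proofs, FHE ciphertexts, an OT-receiver transcript, and the single scalar $\langle\mathbf{q}\rangle_1$, all of which are simulatable from $(\mathbf{N}_1,\ldots,\mathbf{N}_m,\alpha)$ via the semantic security of FHE, the ZK property, and the uniformity of secret shares.
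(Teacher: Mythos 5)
Your proposal is correct and follows essentially the same route as the paper's proof: a simulation-based argument in which the sub-protocols (\textbf{InitLin}, \textbf{Lin}, $\pi^{f}_{\mathtt{Non-lin}}$) are replaced by their simulators, the adversary's deviations are recorded as additive offsets $\Delta$, and the consistency check is shown to catch any nonzero deviation except with negligible probability via the authenticated-share (MAC) property, with the semi-honest server handled by noting its view consists only of uniformly distributed shares and simulatable ciphertexts. Your version is somewhat more explicit than the paper's --- which defers the probabilistic soundness analysis to SIMC --- in spelling out the hybrid sequence and the $O(m/p)$ union bound, but the underlying argument is the same.
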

\begin{proof}
We leverage simulation-based security to prove that our protocol is secure against the semi-honest server $S_0$ and the malicious client $S_1$. It is easy to demonstrate the security of our protocol for $S_0$. Specifically, during the evaluation process of the protocol, $S_0$ can only learn the shares disclosed during the evaluation. The simulator can choose random values evenly from the filed to simulate these shares. Similarly, in the consistency check phase, $S_0$ learns $\langle q \rangle_1$, which can also be easily simulated by the simulator using $q=0$, and $\langle q \rangle_0$ can be calculated locally.

\begin{table*}[htb]
\centering
\begin{tabular}{|p{17.4cm}|}
\hline
\textbf{Preamble}: $\mathcal{S}$ interacts with an adversary $\mathcal{A}$ that controls $S_1$, where the input of $S_1$ is $\mathbf{t}_0$. $\mathcal{S}$ sets a flag bit $\mathtt{flag}=0$.\\
\begin{itemize}
\item \textbf{First Linear Layer}: $\mathcal{S}$ execute the function \textbf{InitLin} with input $\mathbf{t}'_0$, and  then sends $\langle \mathbf{u}_1 \rangle_1$ and $\langle \mathbf{r}_1 \rangle_1$ to $\mathcal{A}$.
\item For each $j\in[m-1]$,
\begin{itemize}
\item [] \textbf{Non-Linear Layer $f_j$}:  $\mathcal{S}$ execute the function $\pi^{f_j}_{\mathtt{Non-lin}}$ with input $\langle \mathbf{u}'_1 \rangle_1+\Delta_j^{1}$, and then sends uniform $\langle \mathbf{k}_j \rangle_1$, $\langle \mathbf{t}_j \rangle_1$ and $\langle \mathbf{d}_j \rangle_1$ to $\mathcal{A}$.  $\mathcal{S}$ sets a flag bit $\mathtt{flag}=1$ if $\Delta_j^{1}\neq0^{n_j}$.
\item[] \textbf{Linear layer $j+1$}: $\mathcal{S}$ execute the function \textbf{Lin} with inputs $\langle \mathbf{t}'_j \rangle_1=\langle \mathbf{t}_j \rangle_1+\Delta_{j}^2$ and $\langle \mathbf{d}'_j \rangle_1=\langle \mathbf{d}_j \rangle_1+\Delta_{j}^3$, and then sends uniform $\langle \mathbf{u}_j \rangle_1$, $\langle \mathbf{t}_j \rangle_1$, and $\langle \mathbf{z}_j \rangle_1$ to $\mathcal{A}$. In addition,  $\mathcal{S}$ sets a flag bit $\mathtt{flag}=1$ if $(\Delta_j^{2}, \Delta_j^{3})\neq (0^{n_j}, 0^{n_j})$.
\end{itemize}
\item[4.] \textbf{Consistency Check}:
\begin{itemize}
\item For $j\in[m-1]$, $\mathcal{S}$ selects $\mathbf{s}_j\in_{R}\mathbb{F}_{p}^{n_j}$ and $\mathbf{s}'_j\in_{R}\mathbb{F}_{p}^{n_{j+1}}$. $S_0$ sends $(\mathbf{s}_j, \mathbf{s}'_j)$ to $S_1$.
\item $\mathcal{A}$ computes $\langle \mathbf{q} \rangle_1=\Delta^4+\sum_{j\in[m-1]}\left(  (\langle \mathbf{r}_j \rangle_1-\langle \mathbf{k}_j \rangle_1)\cdot \mathbf{s}_j+\langle \mathbf{z}_{j+1} \rangle_1\cdot\mathbf{s}'_j \right)$ and sends it to $\mathcal{S}$. $\mathcal{S}$ set $\mathtt{flag}=1$ if $\Delta^4\neq0$.
\end{itemize}
\item[5.] \textbf{Output Phase}: $\mathcal{S}$ obtains NN$(\mathbf{t}'_0)$ and send NN$(\mathbf{t}'_0)-\langle \mathbf{u}_m \rangle_1$ to $\mathcal{A}$ if $\mathtt{flag}=0$. Otherwise, it sends abort to both $\pi_{inf}$ and $\mathcal{A}$.
\end{itemize}\\
\hline
\end{tabular}
\caption{Simulator against malicious client for $\pi_{inf}$}
\label{Simulation of secure inference}
\end{table*}

 we now demonstrate the security of the proposed protocol against malicious clients. Review the ability of a malicious client, \textit{i.e.}, it can arbitrarily violate the configuration of the protocol,  especially,  send inconsistent input during the execution of the protocol. To be precise, adversary $\mathcal{A}$ can do the following actions: (1) Invoke \textbf{InitLin} with input $\mathbf{t}'_0\neq \mathbf{t}_0$ (client's original input). and learns $\langle \mathbf{u}_1 \rangle_1$ and $\langle \mathbf{r}_1 \rangle_1$; (2) For each $j\in[m-1]$,  $\mathcal{A}$ can invoke $\pi^{f_j}_{\mathtt{Non-lin}}$ with input $\langle \mathbf{u}'_1 \rangle_1+\Delta_j^{1}$, and  learn $\langle \mathbf{k}_j \rangle_1$, $\langle \mathbf{t}_j \rangle_1$ and $\langle \mathbf{d}_j \rangle_1$; (3) For each $j\in[m-1]$, $\mathcal{A}$ can invoke \textbf{Lin} with inputs $\langle \mathbf{t}'_j \rangle_1=\langle \mathbf{t}_j \rangle_1+\Delta_{j}^2$ and $\langle \mathbf{d}'_j \rangle_1=\langle \mathbf{d}_j \rangle_1+\Delta_{j}^3$, and learn $\langle \mathbf{u}_j \rangle_1$, $\langle \mathbf{t}_j \rangle_1$, and $\langle \mathbf{z}_j \rangle_1$; (4) $\mathcal{A}$ can send  an incorrect share of $q$ to $S_0$ by adding an error $\Delta^4$ to his share. Using the above notation, $\mathcal{A}$ is considered to be honestly following the configuration of the protocol if all $\Delta$'s are 0. We provide a formal simulation of the view of adversary in \textbf{Figure}~\ref{Simulation of secure inference}.

  Based on this, we analyze the following two cases: In the first case, $\mathcal{A}$ did not violate the protocol. It is easy to deduce that the simulated view is exactly the same as the real view. When $\mathcal{A}$ violates the execution of the protocol, we prove that there is a non-zero $\Delta$, so that $\mathcal{S}$ returns abort to the adversary $\mathcal{A}$ with a probability of 1. This is mainly due to the attributes of authentication shares, which guarantees the immutability of data under authentication sharing. Readers  can refer to  SIMC \cite{chandran2021simc} for more details, which gives a detailed analysis probabilistically.
\end{proof}

\end{document}